\documentclass[10pt,a4paper,reqno]{amsart}

\usepackage{amsmath,amssymb,amscd} \usepackage{a4wide}     
\usepackage[utf8]{inputenc} \usepackage[english]{babel}
\usepackage{graphicx,subfigure}
\usepackage[font=footnotesize]{caption}
\usepackage{mwe}
\usepackage{xcolor}

\newtheorem{intro_theorem}{Theorem}
\newtheorem{theorem}{Theorem}[section]
\newtheorem{lem}[theorem]{Lemma}
\newtheorem{cor}{Corollary}[theorem]


\theoremstyle{definition}

\newtheorem{remark}[theorem]{Remark}


\newcommand{\eqand}{\ensuremath{\quad \textrm{and} \quad}}


\newcommand{\foot}{\footnote}


\newcommand{\equi}{\ensuremath{\Leftrightarrow}}

\newcommand{\ld}{\ensuremath{,\ldots,}}
\newcommand{\ssq}{\ensuremath{\subseteq}}
\newcommand{\smin}{\ensuremath{\setminus}}
\newcommand{\eps}{\ensuremath{\varepsilon}}


\newcommand{\supp}{\ensuremath{\mathrm{supp}}}

\newcommand{\Leb}{\ensuremath{\mathrm{Leb}}}


\newcommand{\kreis}{\ensuremath{\mathbb{T}^{1}}}




\newcommand{\alphlist}{\begin{list}{(\alph{enumi})}{\usecounter{enumi}\setlength{\parsep}{2pt}
      \setlength{\itemsep}{1pt} \setlength{\topsep}{5pt}
      \setlength{\partopsep}{3pt}}}
\newcommand{\arablist}{\begin{list}{(\arabic{enumi})}{\usecounter{enumi}\setlength{\parsep}{2pt}
          \setlength{\itemsep}{1pt} \setlength{\topsep}{5pt}
          \setlength{\partopsep}{3pt}}}
\newcommand{\romanlist}{\begin{list}{(\roman{enumi})}{\usecounter{enumi}\setlength{\parsep}{2pt}
              \setlength{\itemsep}{1pt} \setlength{\topsep}{5pt}
              \setlength{\partopsep}{3pt}}}
\newcommand{\Romanlist}{\begin{list}{(\Roman{enumi})}{\usecounter{enumi}\setlength{\parsep}{2pt}
              \setlength{\itemsep}{1pt} \setlength{\topsep}{5pt}
              \setlength{\partopsep}{3pt}}}
\newcommand{\bulletlist}{\begin{list}{$\bullet$}{\setlength{\parsep}{2pt}
                \setlength{\itemsep}{1pt} \setlength{\topsep}{5pt}
                \setlength{\partopsep}{3pt}\setlength{\leftmargin}{15pt}}} 
\newcommand{\Alphlist}{\begin{list}{(\Alph{enumi})}{\usecounter{enumi}\setlength{\parsep}{2pt}
      \setlength{\itemsep}{1pt} \setlength{\topsep}{5pt}
      \setlength{\partopsep}{3pt}}}
 \newcommand{\listend}{\end{list}}


\newcommand{\T}{\ensuremath{\mathbb{T}}}

\newcommand{\N}{\ensuremath{\mathbb{N}}} 
\newcommand{\R}{\ensuremath{\mathbb{R}}}
\newcommand{\Z}{\ensuremath{\mathbb{Z}}}

\newcommand{\Proj}{\ensuremath{\mathbb{P}}}


\newcommand{\cB}{\mathcal{B}}
\newcommand{\cC}{\mathcal{C}}

\newcommand{\cM}{\mathcal{M}}

\newcommand{\cU}{\mathcal{U}}
\newcommand{\cV}{\mathcal{V}}





\newcommand{\halb}{\ensuremath{\frac{1}{2}}}


\newcommand{\etal}{+}

\makeatletter
\renewenvironment{proof}[1][\proofname]{%
   \par\pushQED{\qed}\normalfont%
   \topsep6\p@\@plus6\p@\relax
   \trivlist\item[\hskip\labelsep\bfseries#1\@addpunct{.}]%
   \ignorespaces
}{%
   \popQED\endtrivlist\@endpefalse
}
\makeatother

\setlength{\textheight}{240mm} 
\setlength{\textwidth}{140mm}
\setlength{\topmargin}{-5mm}
\setlength{\oddsidemargin}{10mm}  
\setlength{\evensidemargin}{10mm}  

\setlength{\marginparsep}{8mm}
\setlength{\marginparwidth}{25mm}

\setcounter{tocdepth}{4}
\setcounter{secnumdepth}{4}

\begin{document}

\title[Forcing and fold bifurcations in population dynamics]{On the effect of
  forcing on fold bifurcations and early-warning signals in population dynamics}

\author{F.~Remo}
\address{Institute of Mathematics, Friedrich Schiller University Jena, Germany}
\email{flavia.remo@uni-jena.de}

\author{G.~Fuhrmann}
\address{Department of Mathematics, Imperial College London, United Kingdom}
\email{gabriel.fuhrmann@imperial.ac.uk}

\author{T.~J\"ager}
\address{Institute of Mathematics, Friedrich Schiller University Jena, Germany}
\email{tobias.jaeger@uni-jena.de}

\subjclass[2010]{}

\begin{abstract} The classical fold bifurcation is a paradigmatic example
of a critical transition.  It has been used in a variety of contexts, including
in particular ecology and climate science, to motivate the role of slow recovery
rates and increased autocorrelations as early-warning signals of such
transitions.

We study the influence of external forcing on fold bifurcations and the
respective early-warning signals. Thereby, our prime examples are single-species
population dynamical models with Allee effect under the influence of either
quasiperiodic forcing or bounded random noise. We show that the presence of
these external factors may lead to so-called {\em non-smooth} fold bifurcations,
and thereby has a significant impact on the behaviour of the Lyapunov exponents
(and hence the recovery rates). In particular, it may lead to the absence of
critical slowing down prior to population collapse. More precisely, unlike in
the unforced case, the question whether slow recovery rates can be observed or detected
prior to the transition crucially depends on the chosen time-scales and the size
of the considered data set.

 \noindent\emph{2010 MSC numbers: primary  92D25, secondary: 37C60, 34C23}

\end{abstract}

\maketitle

\section{Introduction}\label{Intro}

In recent years, the notions of {\em tipping points} and {\em critical
  transitions} have received widespread attention throughout a broad scope of
sciences. These terms usually refer to abrupt and drastic changes in a system's
behaviour upon a small and slow variation of the system parameters
\cite{van2007slow,Scheffer2009CriticalTransitions,kuehn2011mathematical,scheffer2012anticipating}. In
this context, an important issue of immediate practical interest is that of
early warning signals, that is, indicators which allow to anticipate an oncoming
transition in a systems qualitative behaviour before it actually occurs. A
concept that has led to widely recognised advances in this direction are {\em
  slow recovery rates (critical slowing down)}, which often come along with an
{\em increase in autocorrelation}
\cite{van2007slow,Schefferetal2009EWSforCT,Scheffer2009CriticalTransitions,scheffer2012anticipating,veraart2012recovery}. Both
have been described as possible early warning signals in a variety of contexts,
in theoretical as well as in experimental settings
\cite{dakos2008slowing,carpenter2011early,guttal2013robustness,van2014critical,kefi2014early,rikkert2016slowing}.

The aim of our work here is to understand which of the above-mentioned features
of critical transitions persist when the considered system is subject to the
influence of external forcing, as it happens for many real-world processes, and
also to obtain a better idea on how the above notions may best be captured in
mathematical terms. Here it should be pointed out that when it comes to
providing a sound mathematical framework for critical transitions, it is an
eminent problem that all the above notions comprise a variety of different
phenomena and still lack a precise and comprehensive mathematical definition --
an issue which is well-known in non-linear dynamics and comes up in similar form
for key concepts like `chaos', `fractals' or `strange attractors' in dynamical
systems theory.\smallskip

{\bf A forced Allee model.}\quad We will concentrate on the classical fold
bifurcation, which comprises key features of critical transitions and has
emerged as a paradigmatic example in this context
\cite{Scheffer2009CriticalTransitions}. As it is well-known, in this type of
bifurcation a stable and an unstable equilibrium point of a parameter-dependent
scalar ODE approach each other and eventually merge to form a single neutral
equilibrium point, which then vanishes. Since this leads to the disappearance of
all equilibria in a certain region, it presents the abrupt change in the
system's qualitative behaviour that is characteristic of critical transitions.
In order to fix ideas, we consider as a specific example the single-species
population model with Allee effect given by the scalar ODE
\begin{equation} \label{e.simple_Allee}
  \begin{split}
  x' & = \ rx \cdot \left(1-\frac{x}{K}\right)\cdot
  \left(\frac{x}{K}-\frac{S}{K}\right) - \beta x \\ & = \ \frac{r}{K^2}\cdot x
  \cdot (K-x)\cdot (x-S) -\beta x\ =: \ v_\beta(x)
  \end{split}
\end{equation}
Here $r>0$ denotes the intrinsic growth factor of the population, $K>0$ is the
carrying capacity and $S\in (0,K)$ is the threshold value below which
the population dies out due to an Allee effect. The term $\beta x$ represents an
external stress factor that puts additional pressure on the population. An
increase of the parameter $\beta$ leads to a fold bifurcation and the subsequent
collapse of the population at some critical value $\beta_c>0$.\foot{In fact, a
  straightforward calculation yields $\beta_c=\frac{(K-S)^2\cdot r}{4K^2}$.}  The
  bifurcation pattern is drawn in Figure~\ref{f.fold_bifurcation}(a), whereas
  Figure~\ref{f.fold_bifurcation}(b) shows the behaviour of the Lyapunov
  exponents of the attracting and repelling equilibria during the bifurcation.

\begin{figure}[h!]
  \includegraphics[scale=0.45]{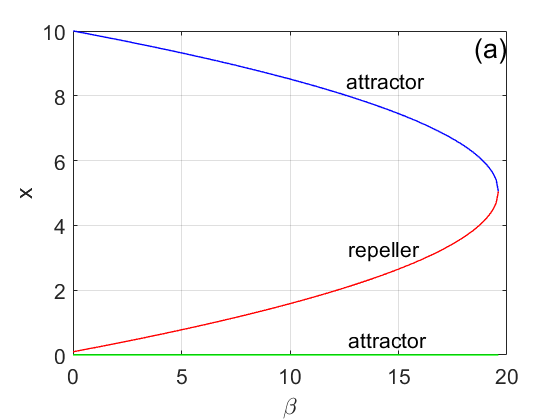}
  \includegraphics[scale=0.45]{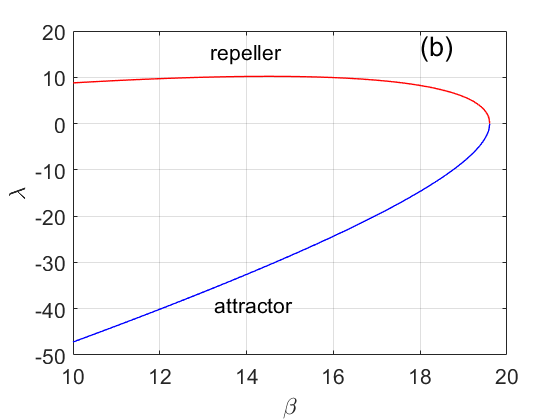}
  \caption{(a) Bifurcation diagram of a fold bifurcation in the Allee model
    (\ref{e.simple_Allee}) with parameters $r=80,\ K=10, S=0.1$ and a
    bifurcation at a critical parameter $\beta_c\simeq 19.978$. The stable
    non-zero equilibrium is shown in blue, the unstable equilibrium in red. The
    stable equilibrium at $x=0$ is shown in green. (b) Behaviour of the Lyapunov
    exponents of the upper stable and the unstable equilibrium during the fold
    bifurcation.
} \label{f.fold_bifurcation}
\end{figure}
In this setting, one obvious possible mathematical interpretation of {\em
  recovery rates} is to identify them with the Lyapunov exponents of the stable
or neutral equilibria, so that {\em slow recovery rates} and {\em critical
  slowing down} correspond exactly to the fact that the two lines in
Figure~\ref{f.fold_bifurcation}(b) meet at zero when the bifurcation parameter
is reached. \medskip

As mentioned before, our main goal is to investigate the same bifurcation
pattern in forced versions of the above model (\ref{e.simple_Allee}), which are
given by non-autonomous scalar ODEs of the form

\begin{equation} \label{e.forced_Allee}
 \boxed{ \ x'(t) \ = \ \frac{r}{K^2} \cdot x(t) \cdot (K-x(t))\cdot (x(t)-S) -
   (\beta+\kappa\cdot F(t)) \cdot x \ =: \ V_{\kappa,\beta}(t,x) \ }
\end{equation}\medskip

Here, time-dependence (or external forcing) of the system (\ref{e.forced_Allee})
is introduced via a {\em forcing term} $\kappa\cdot F(t)$, with {\em coupling
  constant } $\kappa>0$ and a forcing function $F:\R\to[0,1]$. We consider two
different types of forcing processes. \smallskip

\begin{description}
  \item[Quasiperiodic forcing] This kind of forcing corresponds to the influence of several periodic
    external factors with incommensurate frequencies $\nu_1\ld \nu_d\in\R$. As a
    specific example, we choose the forcing function as
       \begin{equation}
      \label{e.quasiperiodic_forcing_term_example}
       F(t) \ = \ \prod_{i=1}^d \left(\frac{1+ \sin(2\pi
         (\theta_i+t\cdot\nu_i))}{2}\right)^q \ ,
    \end{equation}
    with arbitrary initial conditions $\theta_1\ld \theta_d\in\R$. Let us note
    here that due to the periodicity of the sine function, these initial values
    may also be viewed as elements of the circle $\T^1=\R/(2\pi\Z)$ so that
    $\theta=(\theta_1\ld \theta_d)$ becomes an element of the $d$-torus
    $\T^d=\R^d/(2\pi\Z)^d$.  The parameter $q\in\N$ allows some additional
    control the geometry of the forcing function.
    \item[Bounded random noise] Secondly, we consider the effect of external
      random perturbations on the system, given by the forcing function
       \begin{equation} \label{e.random_forcing_term_example}
    F(t) \ = \ \frac{1+\sin(\theta_0+W_t)}{2} \ ,
       \end{equation}
       where $\theta_0\in\T^1$ is again an arbitrary initial condition and $W_t$
       denotes a one-dimensional Brownian motion (but higher-dimensional
       analogues could be considered as well).
\end{description}\smallskip

We thus arrive at the forced scalar differential equation
(\ref{e.forced_Allee}), with the forcing function $F$ given either by
(\ref{e.quasiperiodic_forcing_term_example}) or
(\ref{e.random_forcing_term_example}), as basic models to which we mainly refer
to in this introduction. Most of the statements we actually prove in the later
sections hold in greater generality, both with respect to the model
(\ref{e.forced_Allee}) and to the employed forcing processes
(\ref{e.quasiperiodic_forcing_term_example}) and
(\ref{e.random_forcing_term_example}). In some other cases, we will need to
replace the Allee model by discrete time systems with qualitatively similar
behaviour in order to obtain rigorous results. These systems should then be
considered as simplified models for the time-one-maps of the flow induced by
(\ref{e.forced_Allee}). However, we refer to the respective later sections for
details in order to avoid too many technicalities at this point.
\smallskip

{\bf Lyapunov gap for fold bifurcations in forced systems.}\quad In order to
discuss what happens with the fold bifurcation pattern and the corresponding
early-warning signals in the forced Allee model (\ref{e.forced_Allee}), we will
first concentrate on the behaviour of the Lyapunov
exponents. Figure~\ref{f.LE_smooth/nonsmooth} shows the behaviour of the
Lyapunov exponents of the attractor and the repeller of (\ref{e.forced_Allee})
throughout the bifurcation, with two different choices of the parameters
$\kappa$ and $q$ in the case of quasiperiodic forcing in (a) and (b) and
different parameters $\kappa$ in the random case in (c) and (d).

\begin{figure}[h!]
\includegraphics[scale=0.45]{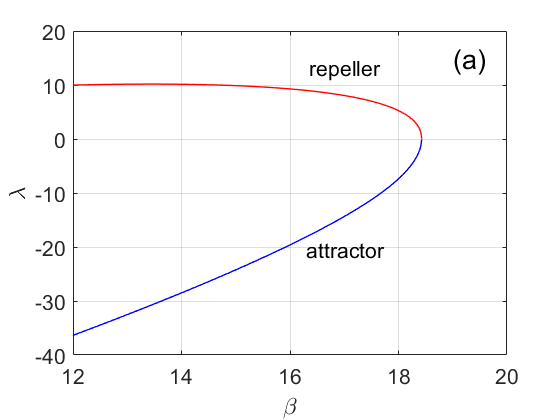} \quad
\includegraphics[scale=0.45]{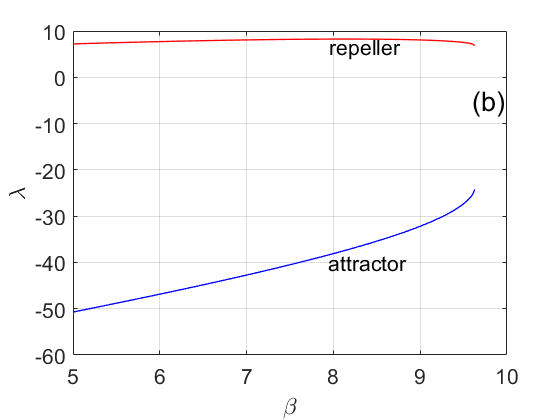} 

\includegraphics[scale=0.45]{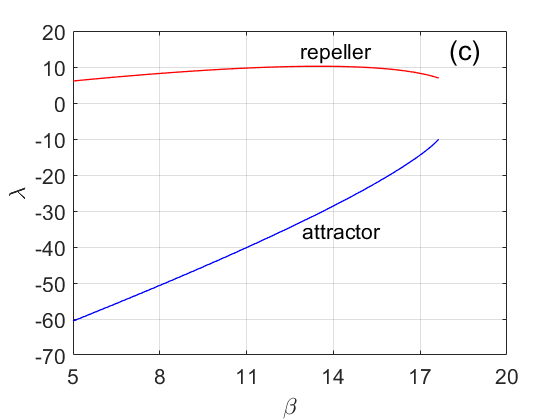}\quad
\includegraphics[scale=0.45]{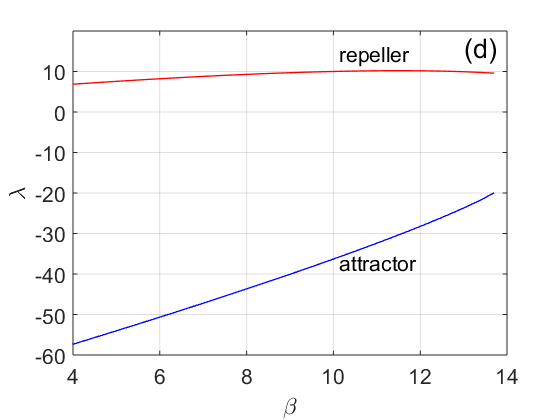}
  \caption{{\bf (a) and (b):} \ Lyapunov exponents during fold bifurcations in
    the qpf Allee model; (a) smooth bifurcation with $r=80,\ K=10, S=0.1,\ q=1,
    \nu=(2\omega,2\pi)$ (where $\omega$ is the golden mean) and $\kappa=4$
    (bifurcation at $\beta_c\simeq 18.4269$); (b) non-smooth bifurcation with
    $r=80, K=10,\ S=0.1,\ q=5, \nu=(2\omega,2\pi)$ and $\kappa=51.2$. The
    bifurcation occurs at $\beta_c\simeq 9.628$.\newline \mbox{ } {\bf (c) and
      (d):} \ Lyapunov exponents during the fold bifurcation in the randomly
    forced Allee model; (c) with $r=80, K=10, S=0.1, \kappa=2$ and bifurcation
    parameter $\beta_c=17.978$; (d) with $r=80, K=10, S=0.1, \kappa=6$ and
    bifurcation parameter $\beta_c=13.978$;}\label{f.LE_smooth/nonsmooth}
\end{figure}

While the behaviour in (a) is in perfect analogy with the unforced case in
Figure~\ref{f.fold_bifurcation}(b), the situation in (b)--(d) is clearly
different. Although the Lyapunov exponents of the attractor and the repeller do
approach each other, there remains a clear gap at the bifurcation point, and in
particular the Lyapunov exponents of the attractor (which are the `visible' or
`physically relevant' ones) stay strictly away from zero. Given the significance
of zero exponents for the observation of critical slowing down and slow recovery
rates, this is certainly noteworthy and deserves a closer examination. Moreover,
while in Figure~\ref{f.LE_smooth/nonsmooth}(b)--(d) the Lyapunov exponents do at
least move towards each other as the bifurcation is approached, this actually
turns out to depend just on the precise form of the parameter family. In the
above cases, we have only varied the bifurcation parameter $\beta$, while
leaving all other constants in (\ref{e.forced_Allee}) invariant. In contrast to
this, it seems likely that in real-world situations other system parameters such
as the intrinsic growth rate $r$ in (\ref{e.forced_Allee}) or the noise
amplitude in (\ref{e.random_forcing_term_example}), vary as well as the pressure
on the population increases. The result of such couplings is shown in
Figure~\ref{f.LE_nonsmooth_alternative_families}. It can be seen that in this
case the Lyapunov exponents may move away from each other all throughout the
bifurcation process, and hence there is no chance at all to anticipate the
oncoming transition based only on their behaviour.

\begin{remark}
We should note that the phenomena that we describe here are known by folklore in
the field of stochastic processes and stochastic differential equations, where
the presence of noise equally prevents the recovery rates from going down all
the way to zero before a transition happens. However, in this context it is much
harder to pin this observation down mathematically, since the presence of
unbounded noise immediately `destroys' the fold bifurcation and leads to the
existence of a unique stationary measure in stochastic versions of
(\ref{e.forced_Allee}) and similar models. Moreover, the forcing with bounded
noise is arguably quite relevant and intrisically motivated from the biological
viewpoint.
\end{remark}

\begin{figure}[h!]
\includegraphics[scale=0.435]{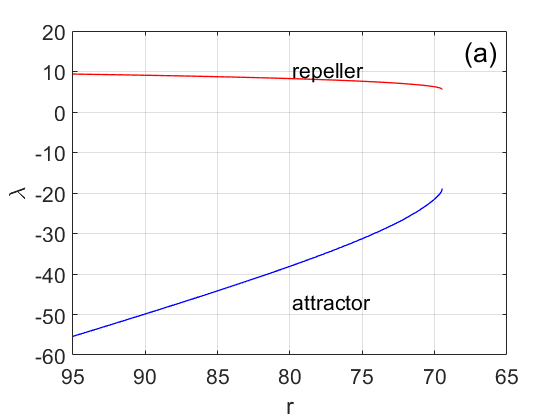} \quad \includegraphics[scale=0.435]{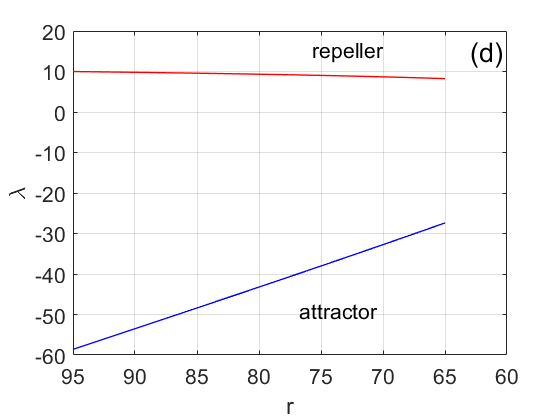} 
\includegraphics[scale=0.435]{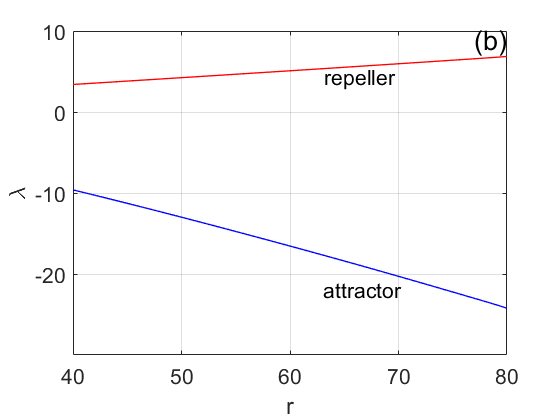}  \quad \includegraphics[scale=0.435]{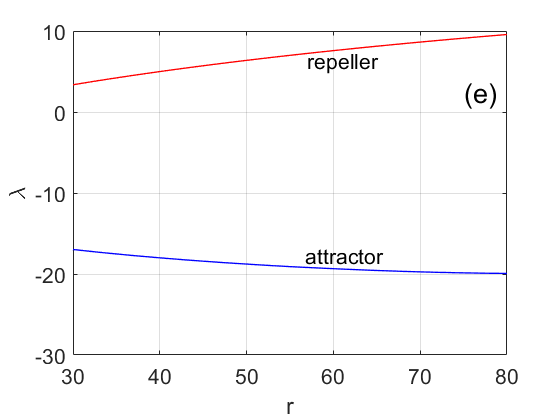} 
 \includegraphics[scale=0.35]{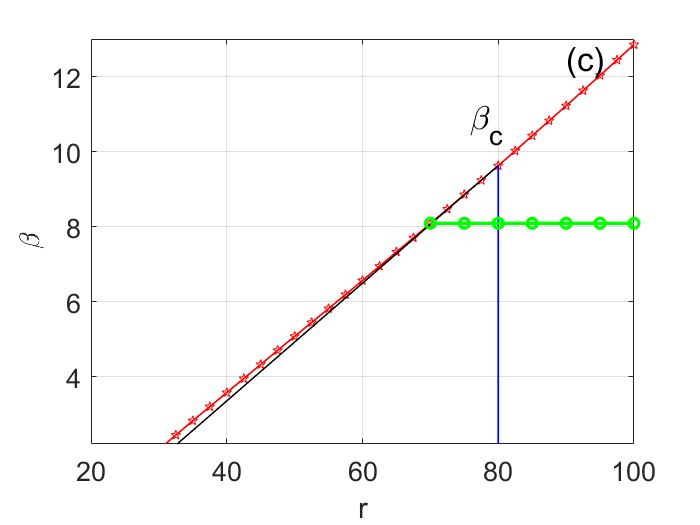} \quad\includegraphics[scale=0.35]{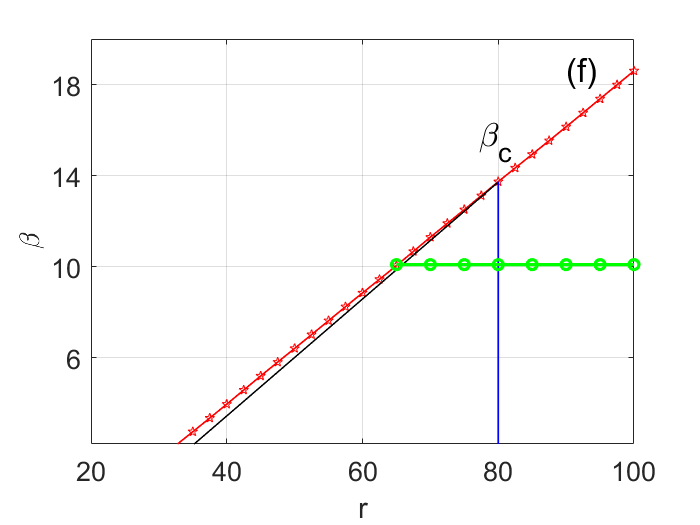} 
  \caption{Lyapunov exponents during fold bifurcations in the forced Allee model
    with different variations of parameters. (a) shows the behaviour in the qpf case
    as the parameter $r$ is decreased, corresponding to the horizontal green
    line in the two-parameter space depicted in (c). In contrast, (b) shows the behaviour when $\beta$ and
    $r$ are varied simultaneously along the black curve in (c). In this case,
    the Lyapunov exponents move apart throughout the entire bifurcation process. In
    (d) and (e), similar plots are shown for the randomly forced case. In (d),
    again only the parameter $r$ is varied and decreases along the horizontal
    green line in (f). In (e), both parameters $\beta$ and $r$ are again
    varied at the same time, along the black curve in (f). In both (c) and (f),
    the red line is an interpolation of the numerically determined critical parameters for 
    different values of $\beta$ and $r$.}\label{f.LE_nonsmooth_alternative_families}
\end{figure}
\medskip

{\bf Mathematical analysis: skew product flows and non-smooth
  bifurcations.}\quad In order to understand and explain these phenomena, it is
indispensable to have a look at the mathematical framework that is used to
describe fold bifurcations in forced systems. To that end, we first concentrate
on the case of quasiperiodic forcing. The rigorous analysis of non-autonomous
ODE's, such as the one given by (\ref{e.forced_Allee}) and
(\ref{e.quasiperiodic_forcing_term_example}), hinges on the fact that the family
of equations (\ref{e.forced_Allee}), with all possible initial conditions
$\theta=(\theta_1\ld \theta_d)\in\T^d$, defines a skew product flow
\begin{equation}\label{e.skew_product_flow}
\Xi:\R\times \Theta\times\R\to \Theta\times \R\quad ,\quad (t,\theta,x)\mapsto
\Xi^t(\theta,x)=(\omega_t(\theta),\xi^t(\theta,x)) \ .
\end{equation}
Here the {\em driving space} $\Theta$ is the $d$-torus, $\Theta=\T^d=\R^d/\Z^d$
(corresponding to the set of possible initial conditions). The {\em driving
  flow} $\omega :\R\times\Theta\to\Theta$ is given by the irrational Kronecker
flow $\omega_t(\theta)=\theta+t\cdot v$ with translation vector\foot{Composed of
  the $d$ incommensurate frequencies $\nu_i$ in
  (\ref{e.quasiperiodic_forcing_term_example}).} $v=(\nu_1\ld \nu_d)$, and
models the quasiperiodic dynamics of the external driving factors. The flow
$\Xi$ is uniquely determined by the fact that the mapping $t\mapsto
\xi^t(\theta,x)$ is the solution to (\ref{e.forced_Allee}) with forcing function
(\ref{e.quasiperiodic_forcing_term_example}). A similar flow representation can
be given in the case of random forcing. We will describe this passage from
non-autonomous equations to skew product flows in more detail in
Section~\ref{1}, but also refer the mathematically interested reader to standard
references such as \cite{Arnold1998RandomDynamicalSystems,huang/yi:2007} for
further reading.

The advantage of the skew product setting lies in the fact that the classical
notion of equilibrium points -- which does not make sense anymore for
time-dependent systems such as (\ref{e.forced_Allee}) -- can be replaced by that of
{\em random} or {\em non-autonomous equilibria}. These are defined as measurable
functions $x:\Theta\to\R,\ \theta\mapsto x(\theta)$ that satisfy
$\Xi^t(\theta,x(\theta))=x(\omega_t(\theta))$. Hence, a non-autonomous
equilibrium can be thought of as a curve, surface or higher-dimensional
submanifold of the product space $\Theta\times\R$ that can be represented as a
graph over the base space $\Theta$, is invariant under the skew product flow
$\Xi$ and is composed of solutions of (\ref{e.forced_Allee}) with varying
initial conditions. With this new notion of an equilibrium, fold bifurcations
in forced systems can be described, in perfect analogy to the classical case, as
the collision and subsequent extinction of a stable and an unstable equilibrium
\cite{nunez/obaya:2007,AnagnostopoulouJaeger2012SaddleNodes}. This process is
shown in Figure~\ref{f.forced_folds}(a)--(d) where two such equilibrium
manifolds approach each other and then merge to form a neutral equilibrium.
  
\begin{figure}[h!]
\includegraphics[scale=0.4]{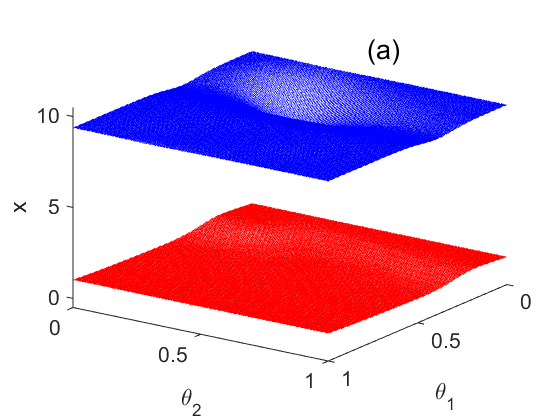} \qquad \includegraphics[scale=0.4]{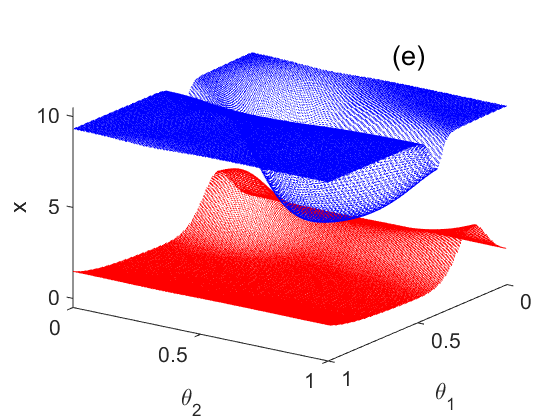}\\
\includegraphics[scale=0.4]{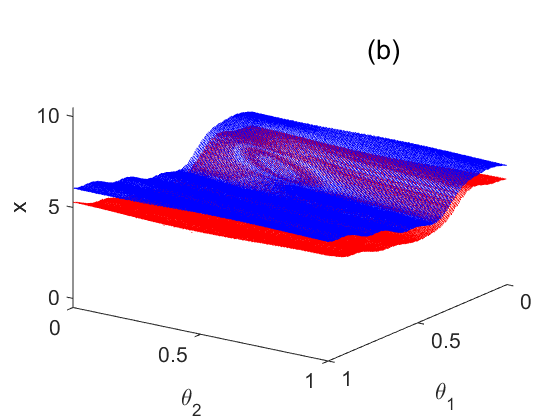} \qquad \includegraphics[scale=0.4]{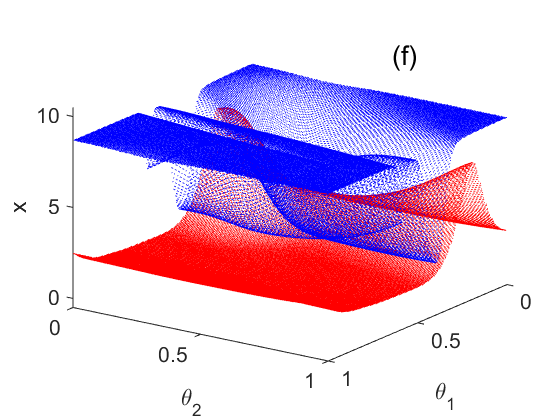}\\
\includegraphics[scale=0.4]{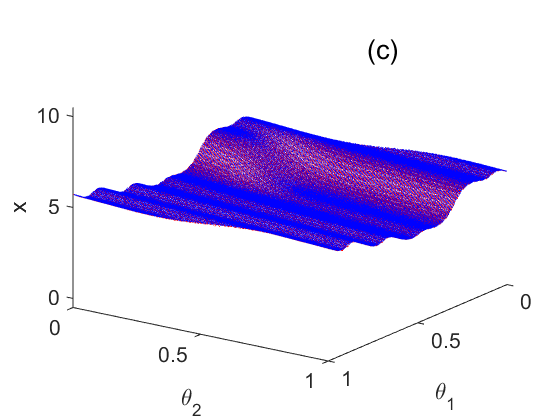} \qquad \includegraphics[scale=0.4]{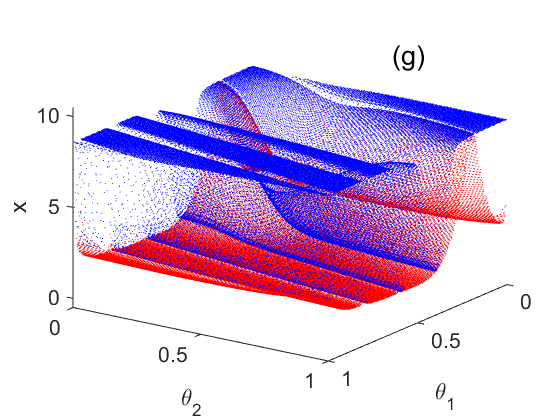}\\
\includegraphics[scale=0.4]{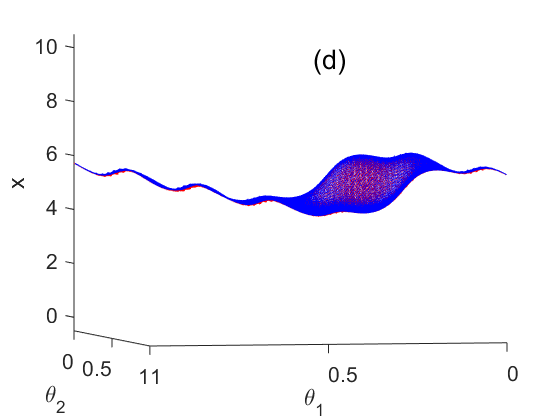} \qquad \includegraphics[scale=0.4]{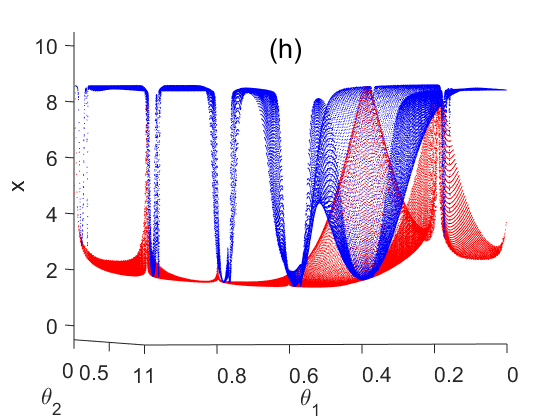}
    \caption{{\bf (a)--(d):} Smooth fold bifurcation in the qpf Allee model with
      parameters $r=35$, $K=10$, $S=0.1$, $q=3$ and $\kappa=41$ and $\beta=2$ in
      (a), $\beta=7.8$ in (b) and $\beta=7.8455$ in (c) and (d). The attractor
      is shown in blue, the repeller in red. The last two figures show the
      merged attractor and repeller at the bifurcation point from two different
      angles. The rotation vector $\rho$ is $\rho=(5\omega,5\pi)$ in all cases,
      where $\omega$ is the irrational part of the golden mean. \\ {\bf
        (e)--(h):} Non-smooth fold bifurcation in the qpf Allee model with
      parameters $r=80$, $K=10$, $S=0.1$, $q=5$ and $\kappa=51.2$ and $\beta=2$
      in (e), $\beta=9$ in (f) and $\beta=9.6282$ in (g) and (h). Again the last
      two figures show the attractor (blue) and the repeller (red) at the
      bifurcation point from two different angles. The rotation vector $\rho$ is
      $\rho=(2\omega,2\pi)$. } \label{f.forced_folds}
   \end{figure}

In contrast to the unforced case, however, there is a second possibility how
such a collision can occur. As the value of the non-autonomous equilibria depend
on the forcing variable $\theta$, the two curves or surfaces can also collide
only for some values of $\theta$, without merging together uniformly. This
pattern is shown in Figure \ref{f.forced_folds}(e)--(h). In this case, one
speaks of a {\em non-smooth fold bifurcation}, in which the neutral equilibrium
at the bifurcation point is replaced by an attractor-repeller pair. Moreover, in
the case of quasiperiodic forcing the stable and unstable non-autonomous
equilibria are called {\em strange non-chaotic attractors (SNA)} and {\em
  strange non-chaotic repellers (SNR)} due to their unusual combination of a
fractal geometry and non-chaotic dynamics
\cite{grebogi/ott/pelikan/yorke:1984,bjerkloev:2005,FeudelKuzetsovPikovsky2006StrangeNonchaoticAttractors,haro/puig:2006,Jaeger2009CreationOfSNA,fuhrmann2013NonsmoothSaddleNodesI,FuhrmannGroegerJaeger2015SNADimensions}.

It is this dichotomy between smooth and non-smooth fold bifurcations which
explains the different behaviour of the Lyapunov exponents observed in
Figure~\ref{f.LE_smooth/nonsmooth}. In order to make this precise, we denote by
$x^s_\beta$ the non-zero stable equilibrium of (\ref{e.forced_Allee}) at
parameter $\beta$,\foot{Note that there is always an equilibrium at zero,
  which is a natural requirement for any population dynamical model and ensured
  by the multiplicative form of the forcing in (\ref{e.forced_Allee}). Due to
  the Allee effect, the zero equilibrium is stable as well and presents the
  unique global attractor of the system after the bifurcation.} and by
$\lambda(x^s_\beta)$ its associated Lyapunov exponent. We refer to
Section~\ref{Preliminaries} for the precise definitions. Moreover, in order to
obtain rigorous results we will need to apply a general framework for
non-autonomous fold and saddle-node bifurcations that has been established in
\cite{AnagnostopoulouJaeger2012SaddleNodes}. An important condition that is
required there is the concavity of the fibre maps in the considered region
which is a consequence of the concavity of the right side of the respective
non-autonomous ODE. In order to ensure this concavity in (\ref{e.forced_Allee}),
we need to restrict to suitable parameter ranges, as specified in the following.
\begin{remark} \label{r.parameter_range}
  We let
  \begin{equation}
    b(r,K,S)=\frac{r}{K^2}\cdot \left(\frac{K-S}{2}\right)^2 \eqand
    \gamma(K,S)=\frac{1}{9}\cdot\left(\frac{K+S}{K-S}\right)^2 \ 
  \end{equation}
and assume that 
\begin{equation}
  \label{e.kappa_condition}
\kappa\ < \ b(r,K,S)\cdot (1-\gamma(K,S)) \ . 
\end{equation}
Then, as we explain in detail in Section~\ref{ApplicationToAllee}, the family
(\ref{e.forced_Allee}) with forcing term
(\ref{e.quasiperiodic_forcing_term_example}) or
(\ref{e.random_forcing_term_example}) undergoes a fold bifurcation in the
parameter interval
\begin{equation}
  \label{e.parameter_interval}
J(r,K,S) \ = \ [b(r,K,S)\cdot (1-\gamma(K,S)),b(r,K,S)+1] \ . 
\end{equation}
It should be mentioned, however, that this restriction in the parameter ranges
is rather a technical condition and could easily be improved, in particular by
using numerical methods, in order to include a broader range of parameters.  The
crucial condition is that the time-$t$-maps of skew product flows induced by
(\ref{e.forced_Allee}) are concave for some $t>0$. Hence, even if not all of our
examples satisfy condition (\ref{e.kappa_condition}), it seems more reasonable
to rely on the numerical evidence for the occurrence of fold bifurcations in
these cases than to highly technical proofs that do not add further
insight. However, all the rigorous statements provided below will be restricted
to this parameter range.
\end{remark}

\begin{intro_theorem}
  \label{t.lyapunov_gap} Suppose that (\ref{e.kappa_condition}) holds. If the Allee model (\ref{e.forced_Allee}) with forcing term given by 
 (\ref{e.quasiperiodic_forcing_term_example}) or
  (\ref{e.random_forcing_term_example}) undergoes a non-smooth fold bifurcation
  at the critical parameter $\beta_c\in J(r,K,S)$, then we have that
  \begin{equation}
    \label{e.lyapunov_gap} \lim_{\beta\nearrow\beta_c} \lambda(x^s_\beta) \ = \ \lambda(x^s_{\beta_c}) \ < \  0 \  . 
  \end{equation}
  If the fold bifurcation is smooth, then we have $\lim_{\beta\nearrow\beta_c}
  \lambda(x^s_\beta) = 0$.  The analogous results hold for the unstable
  equilibrium $x^u_\beta$.
\end{intro_theorem}

{\bf Relevance of non-smooth bifurcations.}\quad An immediate question that can
be asked in the context of the above observations is whether non-smooth fold
bifurcations present a very relevant phenomenon, or if they are rather
`pathological' and may not play an important role for the description of
real-world processes. However, in the case of quasiperiodic forcing, the
wide-spread occurrence of non-smooth bifurcations and the related existence of
SNA has been observed in a large number of numerical and experimental studies
and in a variety of different contexts, ranging from classical and electronic
oscillators to quantum mechanics, conceptual climate models and astrophysics
(e.g.\ \cite{romeiras/etal:1987,Ditto/etal:1989,witt/feudel/pikovsky:1997,Venkatesanetal2000SNAinDuffing,%
  haro/puig:2006,MitsuiCrucifixAihara2015BifurcationsAndSNSinClimateModels,%
  RizwanaRaja2015WienBrideOscillators,Zhang2013WadaSNA,Lindneretal2015StrangeNonchaoticStars}). In
addition, the simulations in Figures~\ref{f.LE_smooth/nonsmooth}(b) and
Figure~\ref{f.forced_folds}(e)--(h) provide similar numerical evidence for the
existence of nonsmooth bifurcations in the qpf Allee model
(\ref{e.forced_Allee}) and (\ref{e.quasiperiodic_forcing_term_example}). These
findings are backed up by rigorous results in
\cite{fuhrmann2013NonsmoothSaddleNodesI,Fuhrmann2016SNAinFlows}, showing that
non-smooth fold bifurcations occur for open sets of parameter families of
quasiperiodically forced scalar ODE's. They can therefore be robust and
persistent under small perturbations of the system.  In the light of these
results, one may say that fold bifurcations in quasiperiodically forced models
may be either smooth or non-smooth, depending on the precise form of the model
and the shape and strength of the forcing, and both of the cases are
sufficiently widespread and persistent to be relevant in practical
considerations and applications.

In the case of bounded random forcing, the situation is different in that this
balance swings completely towards the side of non-smooth bifurcations. Roughly
speaking, any forcing by a sufficiently random external process inevitably leads
to the non-smoothness of the bifurcation. For the case of our model system, this
is established by the following result.
\begin{intro_theorem} \label{t.random_nonsmooth} Suppose that (\ref{e.kappa_condition}) holds. 
  Then any fold bifurcation that occurs in the forced Allee model
  (\ref{e.forced_Allee}) with random forcing term
  (\ref{e.random_forcing_term_example}) at a critical parameter $\beta_c\in
  J(r,K,S)$ is non-smooth.
\end{intro_theorem}
Altogether, the possible non-smoothness of bifurcations is an issue that should
arguably be dealt with if one aims at a comprehensive understanding of critical
transitions.
\medskip

{\bf Critical slowing down and finite-time Lyapunov exponents.}\quad The
interpretation of the Lyapunov gap in a non-smooth fold bifurcation depends on
the precise meaning given to the notion of recovery rates. If these are
identified with the Lyapunov exponents, then it follows that, unlike in
classical fold bifurcations, there are no slow recovery rates in non-smooth fold
bifurcations. However, it seems reasonable to say that the intuitive meaning of
recovery rates, as used in experimental studies like
\cite{Schefferetal2009EWSforCT}, is better captured by the mathematical notion
of {\em finite time Lyapunov exponents}. Instead of measuring the asymptotic
stability of an orbit, these only take into account the expansion or contraction
around an orbit over some finite time span. Given $T>0$, we denote the Lyapunov
exponent at time $T$ of the flow generated by (\ref{e.forced_Allee}) and
starting at an initial condition $(\theta,x)\in\Theta\times\R$ by
$\lambda_T(\theta,x)$.

In a smooth fold bifurcation, it is known that all finite time Lyapunov
exponents in the basin of attraction of the stable equilibrium $x^s_\beta$ will
be very close to $\lambda(x_\beta^s)$, provided the time $T$ is
sufficiently large \cite{sturman/stark:2000}. In contrast to this, the
non-smooth case shows a characteristic spreading of these quantities, which can
be observed in Figure~\ref{f.finite_time_exponents}.

\begin{figure}[h!]
\includegraphics[scale=0.43]{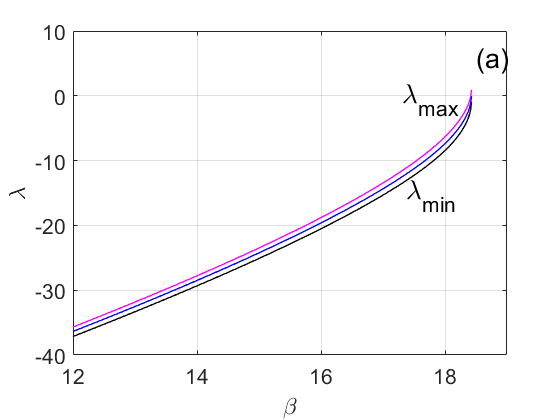} \quad
\includegraphics[scale=0.43]{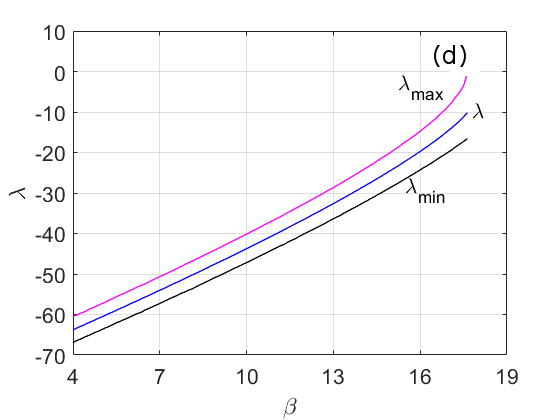} 

\includegraphics[scale=0.43]{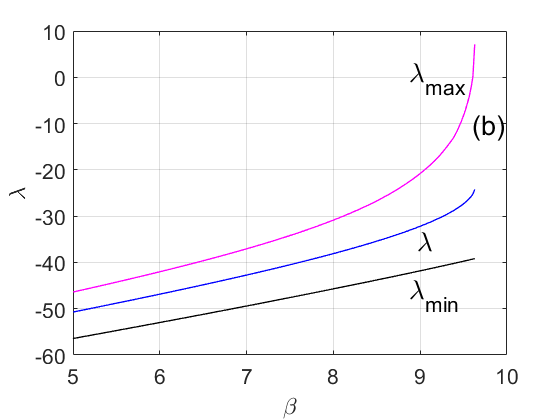} \quad 
\includegraphics[scale=0.43]{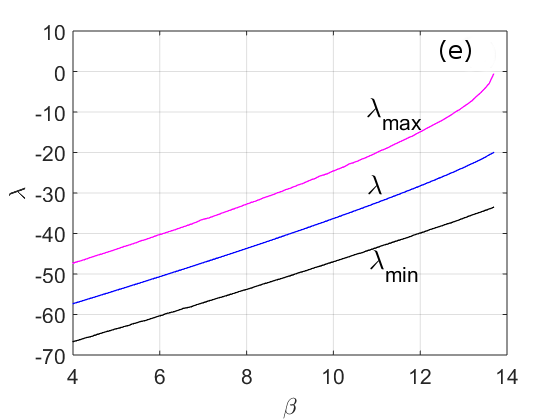}

\includegraphics[scale=0.43]{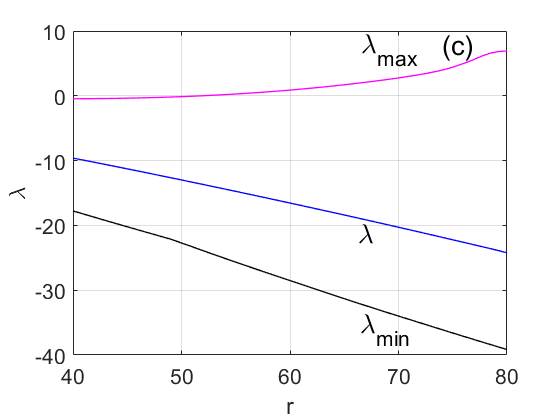} \quad 
\includegraphics[scale=0.43]{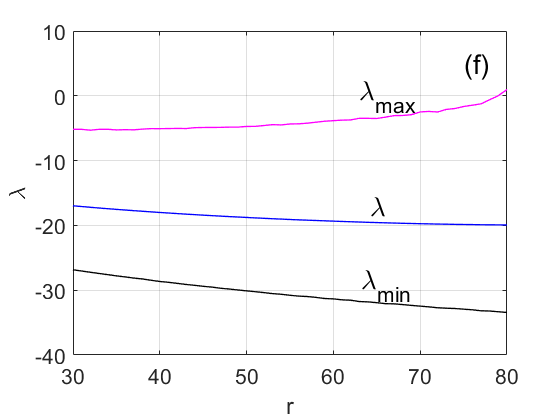}

  \caption{The above plots (a)-(f) show the behaviour of the finite-time
    Lyapunov exponents during fold bifurcations in the forced Allee model. The
    middle curve is always the time 2000 Lyapunov exponent (as an approximation
    of the asymptotic Lyapunov exponent), whereas the upper and the lower curves
    correspond to the maximal and minimal time $4/3$ Lyapunov exponents,
    respectively. (a) shows the case of a smooth fold bifurcation in the qpf
    Allee model with parameter values $r=80,\ K=10,\ S=0.1,\ \kappa= 4$ and
    $q=1$. (b) shows the case of a nonsmooth fold bifurcation in the same model
    with $r=80,\ K=10,\ S=0.1,\ \kappa=51.2$ and $q=5$. (c) shows a
    quasiperiodic case again, but this time with the simultaneous variation of
    parameters as in Figure~\ref{f.LE_nonsmooth_alternative_families}(c). (d)
    and (e) show the case of a non-smooth fold bifurcation in the randomly
    forced Allee model with parameters $r=80,\ K=10,\ S=0.1$ and $\kappa=2$ and
    $\kappa=6$, respectively. Finally, (f) shows the random case imultaneous
    variation of parameters as in
    Figure~\ref{f.LE_nonsmooth_alternative_families}(f). } \label{f.finite_time_exponents}
\end{figure}

In order to translate this observation into a rigorous statement, we denote the
largest time $T$ Lyapunov exponents that is `observable' on the attractor
$x^s_\beta$ by $\lambda^\mathrm{max}_t(x^s_\beta)$, the minimal one by
$\lambda^\mathrm{min}(x^s_\beta)$. (We refer to Section~\ref{Range} for the
precise definition.) The behaviour differs according to whether the forcing is
quasiperiodic or random.

\begin{intro_theorem}\label{t.finite_time_exponent_range} Suppose that (\ref{e.kappa_condition}) holds. If the forced Allee model
  (\ref{e.forced_Allee}) with quasiperiodic forcing term
  (\ref{e.quasiperiodic_forcing_term_example}) undergoes a non-smooth fold
  bifurcation at some critical parameter $\beta\in J(r,K,S)$, then we have that
  \begin{eqnarray}
  \lim_{\beta\nearrow\beta_c} \lambda_T^\mathrm{max}(x^s_\beta) & \geq
  & \lambda(x^u_{\beta_c}) \ > \ 0 \ , \\ \lim_{\beta\nearrow\beta_c}
  \lambda_T^\mathrm{min}(x^u_\beta) & \leq & \lambda(x^s_{\beta_c}) \ <
  \ 0 \ .
  \end{eqnarray}
  In the case of random forcing (\ref{e.random_forcing_term_example}), we have that 
  \begin{equation}
    \lim_{\beta\nearrow\beta_c} \lambda_T^\mathrm{max}(x^s_\beta) \ \geq \ 0 \ .
  \end{equation}
\end{intro_theorem}
Both the statement and the numerical results imply that at least in theory
non-smooth fold bifurcations can be anti\-cipated and detected beforehand via a
spread in the distribution of finite-time Lyapunov exponents, which reaches into
the positive region. However, at the same time this highlights a variety of
practical problems that may arise when trying to establish early-warning signals
for forced systems on the basis of recovery rates. Unlike for Lyapunov
exponents, which are asymptotic quantities and usually show a very uniform
behaviour, the use of finite-time Lyapunov exponents requires to make a number
of choices. First of all, there is the question of the time-scale (the choice of
$T$) for which these quantities should be measured. When $T$ is too small, it is
likely that positive finite-time exponents will be observed already far from any
bifurcation (depending on the geometry of the system). Conversely, if $T$ is
chosen too large, positive finite-time exponents may exist, but may only be
observed with very small probabilities (thus requiring many measurements for a
reliable signal). In any case, even with the right choice of the time-scale and
sufficient data, examples as the one shown in
Figure~\ref{f.finite_time_exponents}(c) will remain difficult to treat.\medskip

{\bf Distribution of finite-time Lyapunov exponents.}\quad Finally, at the critical parameter, we take a
brief look at the distribution of finite-time Lyapunov exponents on different
timescales, which are shown in Figure~\ref{f.distributions} 
(for the qpf case; for results in the random case, see Figure~\ref{f.distributions_random}). 
The probability of observing exponents above or close to zero
is plotted in Figure~\ref{f.distributions_evolution}(a) and decreases quickly
(see Figure~\ref{f.distributions_evolution_random} for the respective plots in the random case). 
However, our simulations are
somewhat inconclusive concerning the rate of decay, which seems to be somewhere
between polynomial and exponential.

\begin{figure}[h!]
   \includegraphics[scale=0.28]{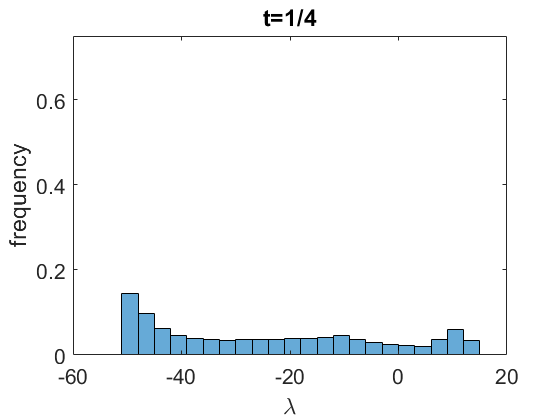} \quad
    \includegraphics[scale=0.28]{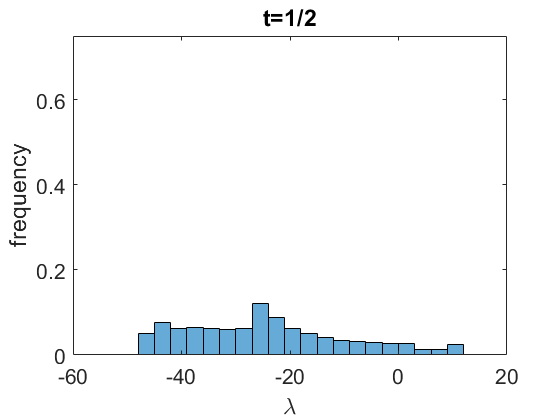} \quad
  \includegraphics[scale=0.28]{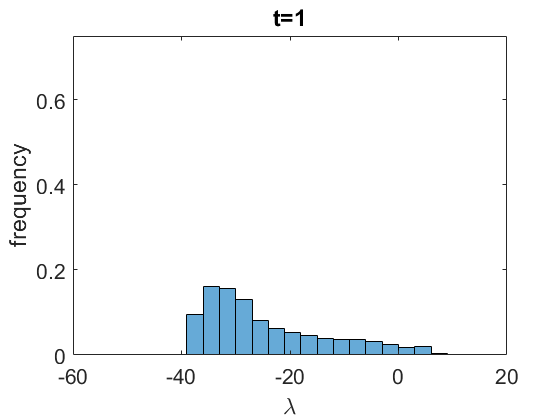}
\medskip

   \includegraphics[scale=0.28]{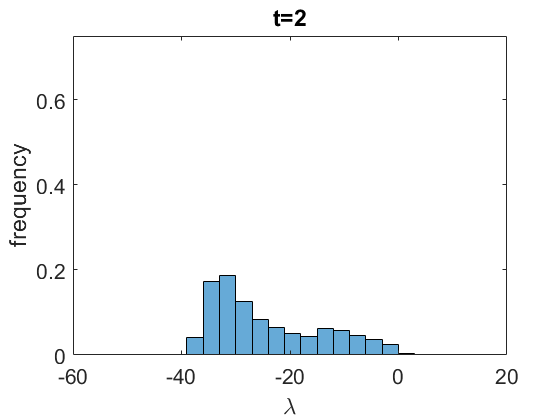} \quad
   \includegraphics[scale=0.28]{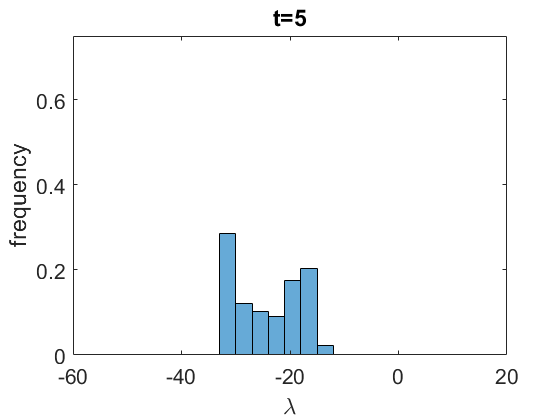} \quad
\includegraphics[scale=0.28]{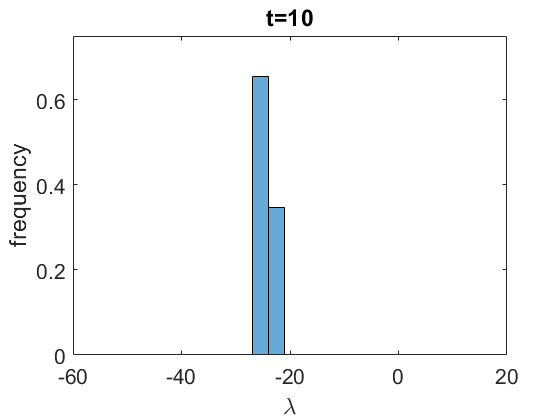}
  \caption{Distributions of the finite-time Lyapunov exponents in the qpf Allee
    model (\ref{e.forced_Allee}) with parameters $r=80$, $K=10$, $S=0.1$,
    $\kappa=51.2$ and $\beta=9.629$ on different timescales, computed with sliding
    windows over a trajectory of length $t=20000$. } \label{f.distributions}
\end{figure}

\begin{figure}[h!]
   \includegraphics[scale=0.23]{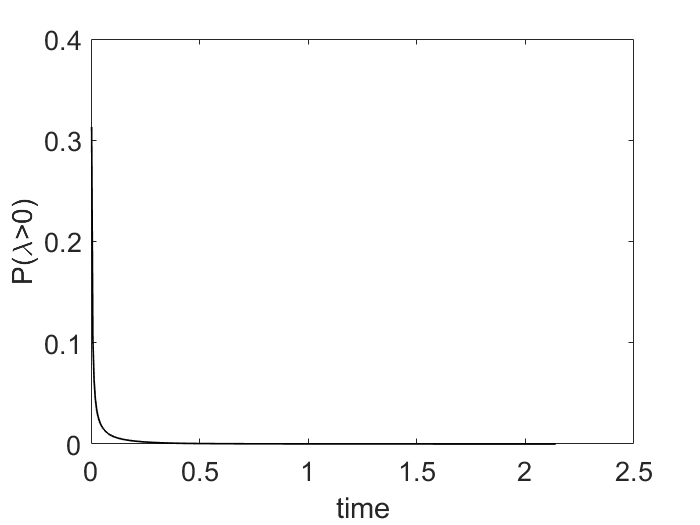} \quad
  \includegraphics[scale=0.23]{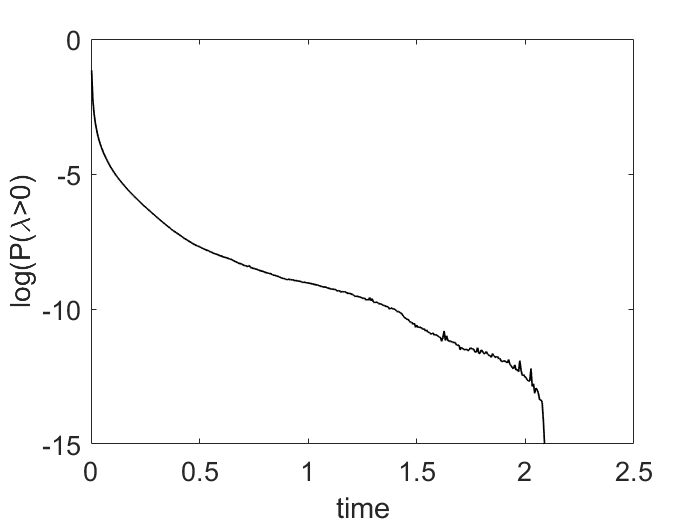} \quad
\includegraphics[scale=0.23]{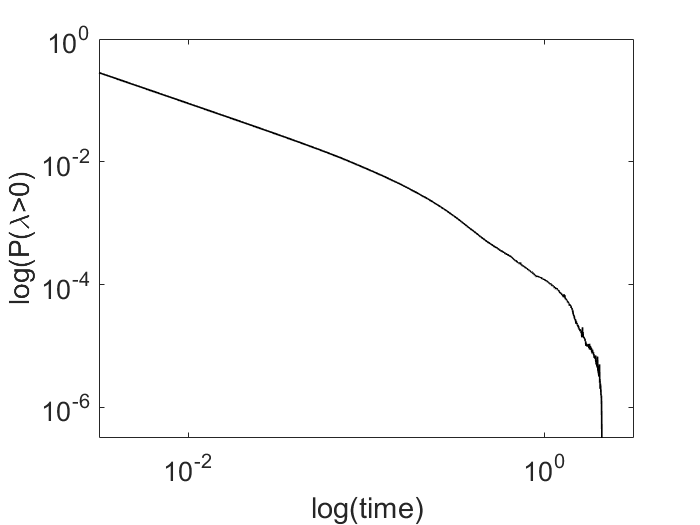} 
     \caption{A plot of the relative frequency of positive exponents (as
       observed in Figure~\ref{f.distributions}) on (a) standard, (b)
       logarithmic and (c) $\log$-$\log$-scale.
     } \label{f.distributions_evolution}
\end{figure}
\medskip

\begin{figure}[h!]
   \includegraphics[scale=0.28]{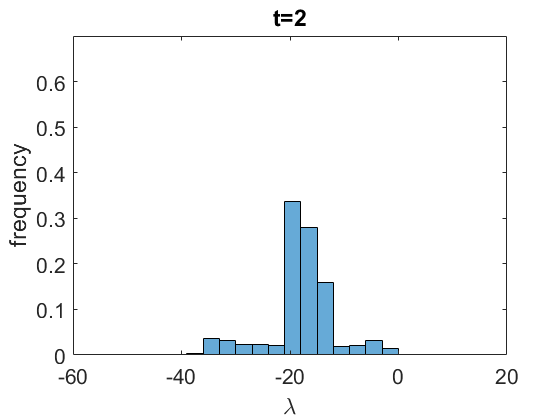} \quad
    \includegraphics[scale=0.28]{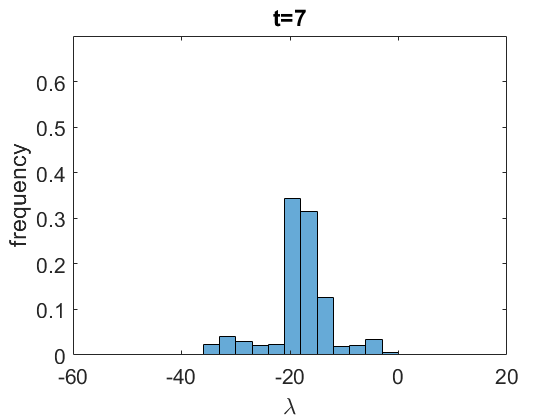} \quad
  \includegraphics[scale=0.28]{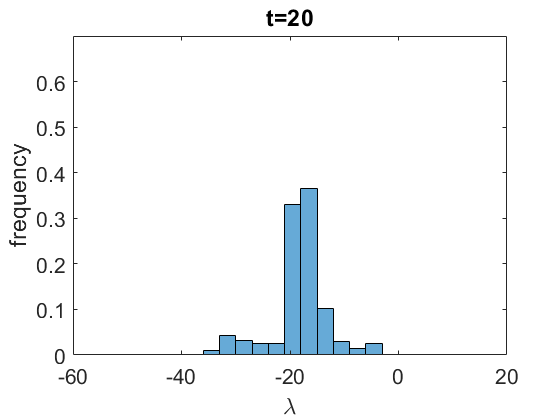}
\medskip
   
   \includegraphics[scale=0.28]{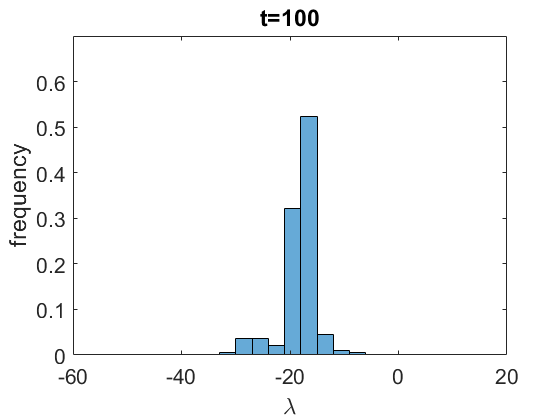} \quad
   \includegraphics[scale=0.28]{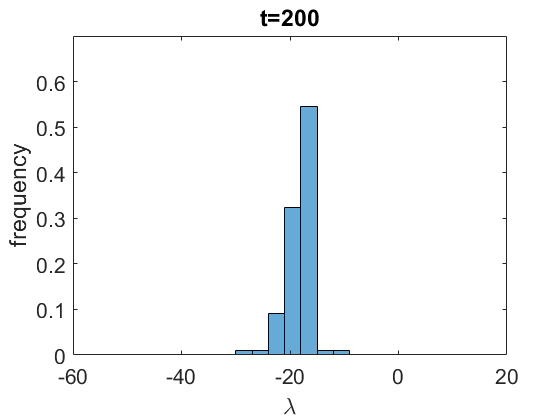} \quad
\includegraphics[scale=0.28]{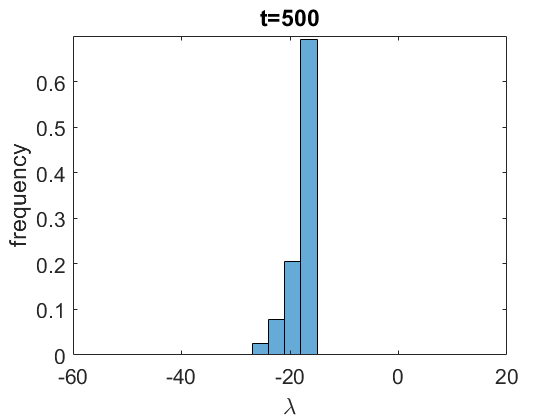}
  \caption{Distributions of the finite-time Lyapunov exponents in the randomly
    forced Allee model (\ref{e.forced_Allee}) with parameters $r=80$, $K=10$,
    $S=0.1$, $\kappa=6$ and $\beta=13.978$ on different timescales, computed
    with sliding windows over a trajectory of length
    $t=20000$. } \label{f.distributions_random}
\end{figure}\medskip

\begin{figure}[h!]
   \includegraphics[scale=0.28]{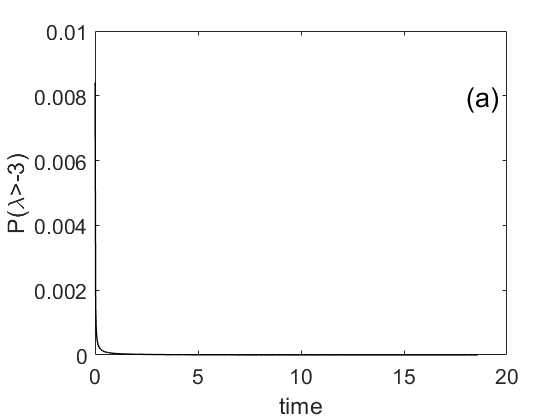} \quad
  \includegraphics[scale=0.28]{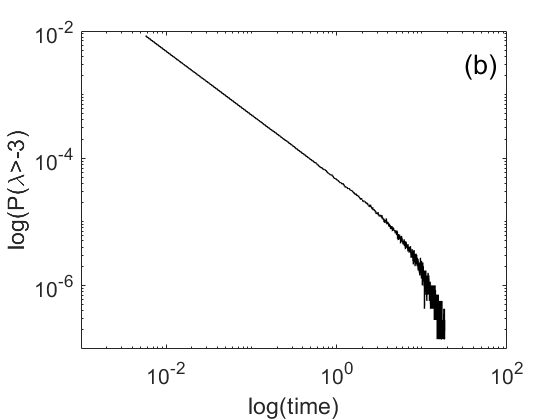} \quad
\includegraphics[scale=0.28]{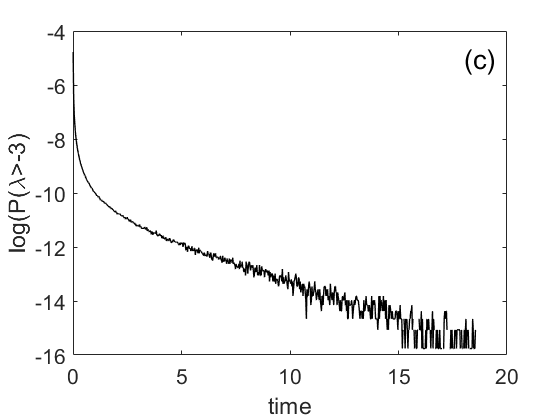} 
     \caption{A plot of the relative frequency of positive exponents (as
       observed in Figure~\ref{f.distributions_random}) on (a) standard, (b)
       logarithmic and (c) $\log$-$\log$-scale.
     } \label{f.distributions_evolution_random}
\end{figure}
\medskip

{\bf Concluding remarks.}\quad In the above context, it should also be pointed
out that although finite-time Lyapunov exponents are -- by definition --
observable in finite time and may therefore in principle be accessible to
experimental measurements, it is a difficult task to achieve and implement this
for any real-life system. At the same time, a meaningful and practical
definition of autocorrelation is difficult to provide for forced system with
moving random equilibria. Hence, the practical implementation of early-warning
signals for critical transitions in forced systems remains a wide open problem,
even in the simplest case of fold bifurcations.

On the theoretical side, an imminent problem that we tried to highlight by the
above discussion is to give a precise mathematical meaning to terms like
recovery rates, critical slowing down as early warning signals and other notions
that come up in the context of critical transitions. If theory and applications
are supposed to go hand in hand, this will be an indispensable basis for further
progress. The results and findings presented here should be understood as a
contribution to that discussion.  \medskip

{\bf Structure of the article.}\ In Section~\ref{Preliminaries}, we collect the
required preliminary facts concerning the mathematical theory of non-autonomous
dynamics and skew product systems, with a particular emphasis on invariant
graphs and fold bifurcations in this setting. The application to the forced
Allee model (\ref{e.forced_Allee}) is discussed in
Section~\ref{ApplicationToAllee}. In Section~\ref{DiscreteTimeModel}, we also
introduce some discrete-time skew product systems, which may be thought of as
simplified models for the time-one maps of the skew product flow induced by the
forced Allee model. Section~\ref{Relevance} is then devoted to the discussion of
non-smooth fold bifurcation and also contains the proof of (a more general
version of) Theorem~\ref{t.random_nonsmooth}. The existence of the Lyapunov gap,
stated in Theorem~\ref{t.lyapunov_gap}, is proven in
Section~\ref{LyapunovExponents}, which also contains a result on the slope of
the Lyapunov exponents at the bifurcation point (in the setting of the
discrete-time model from Section~\ref{DiscreteTimeModel}). Finite-time Lyapunov
exponents are then defined and discussed in Section~\ref{Range}, including the
proof of (a more general version of) Theorem~\ref{t.finite_time_exponent_range}.
\medskip

{\bf Acknowledgments.}\quad This project has received funding from the European
Union's Horizon 2020 research and innovation program under the Marie
Sk\l odowska-Curie grant agreements No 643073 and No 750865. TJ acknowledges
support by a Heisenberg grant of the German Research Council (DFG grant OE
538/6-1).

\section{Preliminaries} \label{Preliminaries}

\subsection{Skew product flows and invariant graphs}

In order to treat continuous-time and discrete-time dynamics alongside, we let
$\T$ be either equal to $\R$ (continuous-time) or $\Z$ (discrete-time).  In both
cases, a dynamical system is a pair $(Y,\Xi)$ of a set $Y$ and a $\T$-flow $\Xi$
on $Y$, that is, a mapping
\begin{equation}
  \Xi:\T\times Y \to Y \quad , \quad (t,y)\mapsto \Xi^t(y)
\end{equation}
which satisfies the flow properties
\begin{equation}
  \Xi^0(y) \ = \ y \eqand \Xi^{t+s}(y) \ = \ \Xi^t(\Xi^s(y)) \ . 
\end{equation}
In the discrete-time case, this implies that $\Xi^t(y)=f^t(y)$, where $f:Y\to Y$
is the bijective map given by $f(y)=\Xi^1(y)$. 

We always assume that $Y$ is equipped with a $\sigma$-algebra $\cB$.
A probability measure $\mu$ on $Y$ is called $\Xi$-invariant if $\mu\circ \Xi^t=\mu$ for all $t\in \T$. 
The set of all $\mu$-invariant probability measures on $(Y,\cB)$ is denoted by
$\cM(\Xi)$.
Given $\mu\in\cM(\Xi)$, we call the quadruple $(Y,\cB,\mu,\Xi)$ a
{\em measure-preserving dynamical system (mpds)}. 
We refer to \cite{Arnold1998RandomDynamicalSystems} and references therein for details and
background.

If $Y$ is a metric space and
$\Xi$ is continuous on the product space $\T\times Y$, we call the pair
$(Y,\Xi)$ a {\em topological dynamical system (tds)}. 
In this case, we throughout assume $\cB$ to be given by the Borel $\sigma$-algebra on $Y$.\medskip

Non-autonomous dynamics are modeled by skew product systems. Given a tds
$(\Theta,\omega)$ or an mpds $(\Theta,\cB,\mu,\omega)$, a {\em skew product flow
  with base $\Theta$ and phase space $X$} is a flow on $Y=\Theta\times X$ of the
form
\begin{equation}
  \Xi:\T\times \Theta\times X\to \Theta\times X \quad , \quad
  (t,\theta,x)\mapsto \Xi^t(\theta,x)=(\omega^t(\theta),\xi^t(\theta,x)) \ .
\end{equation}
Hence, if $\pi_\Theta:\Theta\times X$ is the canonical projection to $\Theta$,
then $\pi_\Theta\circ \Xi^t(\theta,x)=\omega^t(\theta)$. 
The maps $X\ni x\mapsto \xi^t(\theta,x)\in X$, with $\theta\in \Theta$ fixed, 
are called {\em fibre maps}. 
If $X$ is a metric space, we assume the fibres maps to be continuous without further mentioning.
If $X$ is a smooth manifold and all the fibre maps
$\xi^t(\theta,\cdot)$ are $r$~times differentiable, we call $\Xi$ an $\omega$-forced $\cC^r$
flow. If $X=\R$ and the fibre maps are all monotonically
increasing, we say $\Xi$ is an {\em $\omega$-forced monotone flow}.

As mentioned above, in this context the notion of an equilibrium point has to be
replaced by that of a `moving equilibrium', to which we refer as an {\em
  invariant graph}, whose position depends on the forcing variable $\theta$. We
say a measurable function $\varphi:\Theta\rightarrow X$ is an {\em invariant
  graph} of the flow $\Xi$, if it satisfies the condition
\begin{equation}\label{e.invariant_graphs}
  \Xi^t_{\theta}(\varphi(\theta)) \ = \ \varphi(\omega^t(\theta))
\end{equation}
for all $\theta\in\Theta$ and $t\in\T$.\foot{We use $\varphi$ instead of $x$ to
  denote invariant graphs from now on (unlike in the introduction) to stress the
  fact that these are functions.} Here, we usually do not distinguish between
invariant graphs that coincide almost everywhere with respect to the given
reference measure in the base, which in the qpf case is just the Lebesgue
measure. Hence, whenever we speak of invariant graphs, we implicitly mean
equivalence classes of functions. This is very natural when $\Xi$ is forced by
an mpds, but may become a subtle issue as soon as $\Theta$ is a metric space and
topological properties of invariant graphs come into play. For instance, by
saying that an invariant graph is continuous, we mean that there exists a
(uniquely determined) continuous representative of the respective equivalence
class.  It is worth mentioning that in the case of semi-continuous graphs, there
may exist several different semi-continuous representatives in the same
equivalence class -- an issue that we will come back to in
Section~\ref{LyapunovExponents} below.  In the random case, we may not require
that (\ref{e.invariant_graphs}) is satisfied pointwise, but only almost
surely. More precisely, if $\mu$ is an $\omega$-invariant measure and
(\ref{e.invariant_graphs}) is satisfied $\mu$-almost surely, then $\varphi$ is
called a $(\Xi,\mu)$-invariant graph.

It turns out that there is an intimate relation between invariant graphs and the
invariant ergodic measures of the system.
Any
$(\Xi,\mu)$-invariant graph $\varphi$ clearly defines a $\Xi$-invariant measure
$\mu_\varphi$ given by
\begin{equation}\label{e.graph_measure}
  \mu_\varphi(A) \ = \ \mu(\{\theta\in\Theta\mid (\theta,\varphi(\theta)) \in
  A\}).
\end{equation}
A partial converse to this statement for forced monotone flows is provided by
the following result, which essentially goes back to Furstenberg
\cite{furstenberg:1961} (see also \cite{Arnold1998RandomDynamicalSystems}) and
highlights the significance of invariant graphs from an ergodic-theoretical
viewpoint. Given $\mu\in\cM(\omega)$, we denote by $\cM_\mu(\Xi)$ the set of
$\Xi$-invariant probability measures on $\Theta\times X$ which project to
$\mu$ in the first coordinate.
\begin{theorem}[see {\cite[Theorem 1.8.4]{Arnold1998RandomDynamicalSystems}} and {\cite[Theorem 4.1]{furstenberg:1961}}]
\label{1}
Suppose $\Xi$ is an $\omega$-forced monotone flow, $\mu\in\mathcal{M}(\omega)$
and $\nu\in\mathcal{M}_{\mu}(\Xi)$. Then there exists a $(\Xi,\mu)$-invariant
graph $\varphi$ such that $\nu=\mu_\varphi$.
\end{theorem}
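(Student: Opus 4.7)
The plan is to construct $\varphi$ as a $p$-quantile of the disintegration of $\nu$ over $\mu$, exploiting that the fibre maps are continuous and strictly increasing.

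First, since $\Theta \times \mathbb{R}$ is a standard Borel space and $\nu$ projects to $\mu$, the disintegration theorem provides a $\mu$-a.e.\ uniquely determined family $\{\nu_\theta\}_{\theta \in \Theta}$ of Borel probability measures on $\mathbb{R}$ with $\nu = \int_\Theta \delta_\theta \otimes \nu_\theta \, d\mu(\theta)$. Comparing disintegrations in the identity $\Xi^t_* \nu = \nu$ yields the fibrewise invariance
\begin{equation*}
  \xi^t(\theta, \cdot)_* \nu_\theta \ = \ \nu_{\omega^t(\theta)} \quad \text{for } \mu\text{-a.e. } \theta
\end{equation*}
and each $t \in \T$; in the continuous-time case, this is first established on a countable dense set of $t$'s and then upgraded to all $t$ by continuity of the flow. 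Now, for a fixed $p \in (0,1)$, define
\begin{equation*}
  \varphi_p(\theta) \ := \ \inf\bigl\{x \in \mathbb{R} : \nu_\theta((-\infty,x]) \geq p\bigr\}.
\end{equation*}
Because the fibre maps are continuous, strictly monotonically increasing bijections of $\mathbb{R}$, pushforward under $\xi^t(\theta,\cdot)$ commutes with taking $p$-quantiles, so the fibrewise invariance implies $\varphi_p(\omega^t(\theta)) = \xi^t(\theta, \varphi_p(\theta))$ $\mu$-a.s., meaning $\varphi_p$ is a $(\Xi,\mu)$-invariant graph.

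To conclude $\nu = \mu_{\varphi_p}$, it suffices to show $\nu_\theta = \delta_{\varphi_p(\theta)}$ for $\mu$-a.e.\ $\theta$. Here I would reduce to the case where $\nu$ is ergodic via ergodic decomposition (with each ergodic component of $\nu$ projecting to an ergodic component of $\mu$), and then exploit that the sets $B_p = \{(\theta,x) : x \leq \varphi_p(\theta)\}$ and $A_p = \{(\theta,x) : x < \varphi_p(\theta)\}$ are $\Xi$-invariant by monotonicity of the fibre maps together with the invariance of $\varphi_p$, and satisfy $\nu(A_p) \leq p \leq \nu(B_p)$ by construction. Ergodicity forces $\nu(A_p), \nu(B_p) \in \{0,1\}$, so for $p \in (0,1)$ one must have $\nu(A_p) = 0$ and $\nu(B_p) = 1$, i.e., $\nu_\theta = \delta_{\varphi_p(\theta)}$.

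The main obstacle I anticipate is the measurability bookkeeping in the continuous-time case: one needs to consolidate the countably many $t$-dependent $\mu$-null sets on which fibrewise invariance might fail into a single $\mu$-null set, and then piece together the invariant graphs obtained on the different ergodic components of $\nu$ into a single $\varphi$. Both steps are standard but require care, and rest on continuity of $\Xi$ together with the measurability of the ergodic decomposition.
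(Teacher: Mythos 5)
The paper does not prove this statement itself (it is quoted from Arnold, Theorem 1.8.4, and Furstenberg), and your disintegration-plus-quantile construction is essentially the standard argument behind those references. For an \emph{ergodic} $\nu$ your proof is sound: the fibrewise equivariance of $\{\nu_\theta\}$, the invariance of the $p$-quantile graph $\varphi_p$ (using that the fibre maps of a monotone flow are increasing homeomorphisms, so pushforward commutes with quantiles), and the $0$--$1$ argument for the invariant sets $A_p\ssq B_p$ with $\nu(A_p)\leq p\leq\nu(B_p)$ together force $\nu_\theta=\delta_{\varphi_p(\theta)}$ almost everywhere. The continuous-time measurability issues you flag (consolidating the $t$-dependent null sets coming from the a.e.-uniqueness of disintegrations) are real but routine.

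The genuine gap is the final step, the reduction of a general $\nu\in\cM_\mu(\Xi)$ to the ergodic case. Two distinct ergodic components of $\nu$ can project to the \emph{same} ergodic component of $\mu$; this happens exactly when the flow admits two distinct invariant graphs over that component, e.g.\ the attractor--repeller pair $\varphi_\beta^-<\varphi_\beta^+$ of Theorem~\ref{qpf} for $\beta<\beta_c$ over a uniquely ergodic rotation. Then $\nu=\tfrac12\bigl(\mu_{\varphi_\beta^-}+\mu_{\varphi_\beta^+}\bigr)$ belongs to $\cM_\mu(\Xi)$ but is not a graph measure, so no ``piecing together'' of the componentwise graphs can succeed: concretely, your $\varphi_p$ jumps from $\varphi_\beta^-$ to $\varphi_\beta^+$ as $p$ crosses $1/2$, and $\nu_\theta$ is never a Dirac mass. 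The statement is therefore only true under the additional hypothesis that $\nu$ is ergodic, which is how the cited theorems are formulated and what the sentence following the theorem in the paper (``one-to-one correspondence between invariant \emph{ergodic} measures and invariant graphs'') indicates is intended. With that hypothesis added, your argument is complete in substance and the ergodic-decomposition step becomes unnecessary.
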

Hence, for monotone skew product flows there is a one-to-one correspondence
between invariant ergodic measures and the invariant graphs of the system.
\smallskip

Similar to the autonomous case, the stability of an invariant graph $\varphi$
can be characterised in terms of its {\em Lyapunov exponent}.  The latter is
given by
\begin{equation}\label{lyapunov}
 \lambda_{\mu}(\varphi)\ = \ \lim_{t\rightarrow\infty}\frac{1}{t}\int_{\Theta}\log\|\partial_x
 \xi^t(\theta,\varphi(\theta))\| \ d\mu(\theta) \ , 
\end{equation} where $\partial_x$ denotes the derivative with respect to $x$ and
$\mu\in\cM(\omega)$ is a given reference measure in the base.  Note that the
limit in \eqref{lyapunov} always exists due to Kingman's Ergodic theorem
\cite{Arnold1998RandomDynamicalSystems} or, in the case of forced
one-dimensional flows ($X=\R$), by Birkhoff's Ergodic Theorem. It is known that,
under some mild assumptions, an invariant graph with negative Lyapunov exponent
attracts a set of initial conditions of positive measure
\cite{Jaeger2003NegativeSchwarzian} (with respect to the product measure
$\mu\times\Leb$ if $X=\R^d$, where $\Leb$ denotes the Lebesgue measure on
$\R^d$). Hence, the graph $\varphi$ is called an \emph{attractor} in this case,
and a \emph{repeller} if $\lambda(\varphi)>0$
\cite{fuhrmann2013NonsmoothSaddleNodesI}.

An important notion in the context of forced systems is that of pinched sets and
pinched invariant graphs \cite{glendinning:2002, stark:2003,
  FabbriJaegerJohnsonKeller2005ForcedSharkovskii,JaegerStark2006Classification,Jaeger2007StructureOfSNA}.  Suppose
that $\Theta$ is a compact metric space, $X=[a,b]\ssq\R$ and
$\varphi^{-},\varphi^{+}:\Theta\rightarrow X$. Further, assume that
$\varphi^{-}$ is lower semi-continuous and $\varphi^{+}$ is upper
semi-continuous and $\varphi^-\leq\varphi^+$. Then $\varphi^{-}$ and
$\varphi^{+}$ are called {\em pinched} if there exists a point $\theta\in\Theta$
with $\varphi^{-}(\theta)=\varphi^{+}(\theta)$. If we only have that for any
$\eps>0$ there exists $\theta_\eps$ with
$|\varphi^+(\theta_\eps)-\varphi^-(\theta_\eps)|<\eps$, then we call $\varphi^+$
and $\varphi^-$ {\em weakly pinched}. In the case of random forcing, we have the
following measure-theoretic analogue.  Suppose $(\Theta,\mathcal{B},\mu)$ is a
measure space, $X=[a,b]\subseteq\mathbb{R}$ and
$\varphi^{-}\leq\varphi^{+}:\Theta\rightarrow X$ are measurable. Then
$\varphi^{-}$ and $\varphi^{+}$ are called {\em measurably pinched} if the set
$A_{\delta}:=\{\theta\in\Theta\mid
\varphi^{+}(\theta)-\varphi^{-}(\theta)<\delta\}$ has positive measure for all
$\delta>0$. Otherwise, we call $\varphi^{-}$ and $\varphi^{+}$ $\mu$-uniformly
separated.  
For strictly ergodic (that is, minimal and uniquely ergodic)
forcing processes, all three notions of pinching coincide, see \cite[Lemma
  3.5]{AnagnostopoulouJaeger2012SaddleNodes}. In this case, two pinched
invariant graphs always coincide on a dense subset of $\Theta$.

\subsection{Fold bifurcation scenario.}  With the above notions, we can now 
formulate the bifurcation scenario -- both in a deterministic and a random
setting -- which is taken from \cite{AnagnostopoulouJaeger2012SaddleNodes} and
will provide the general framework for our further studies.  We start with the
deterministic case. Given $A\ssq\Theta\times X$ and $\theta\in\Theta$, we let
$A_\theta=\{x\in X\mid (\theta,x)\in A\}$.

\begin{theorem}[{\cite[Theorem 6.1]{AnagnostopoulouJaeger2012SaddleNodes}}]
\label{qpf}
\noindent Let $\omega$ be a flow on a compact metric space $\Theta$ and suppose
$(\Xi_{\beta})_{\beta\in[0,1]}$ is a parameter family of $\omega$-forced
monotone $\mathcal{C}^{2}$ flows. Further assume that there exist continuous
functions $\gamma^{-},\gamma^{+}:\Theta\rightarrow X$ with
$\gamma^{-}<\gamma^{+}$ such that the following conditions hold for all
$\beta\in[0,1]$, $\theta\in\Theta$ and all $t\geq 0$, where applicable.
\begin{enumerate}
  \item[(i)] There exist two distinct continuous $\Xi_{0}$-invariant graphs and
    no $\Xi_{1}$-invariant graph in $\Gamma=\{(\theta,x)\mid
    \gamma^-(\theta) < x < \gamma^+(\theta)\}$;
  \item[(ii)]
    $\xi^t_\beta(\theta,\gamma^{\pm}(\theta))\leq\gamma^{\pm}(\omega^t(\theta))$;
  \item[(iii)] the maps
    $(\beta,\theta,x)\mapsto\partial_{x}^{i}\xi^t_{\beta}(\theta,x)$ with
    $i=0,1,2$ and $(\beta,\theta,x)\mapsto\partial_{\beta}
    \xi^t_{\beta}(\theta,x)$ are continuous;
  \item[(iv)] $\partial_x \xi^t_{\beta}(\theta,x)>0~\forall~x\in\Gamma_{\theta}$;
  \item[(v)] $\partial_{\beta}
    \xi^t_{\beta}(\theta,x)<0~\forall~x\in\Gamma_{\theta}$;
  \item[(vi)]
    $\partial^2_x\xi^t_{\beta}(\theta,x)<0~\forall~x\in\Gamma_{\theta}$.
\end{enumerate}
Then there exists a unique critical parameter $\beta_{c}\in[0,1]$ such that
\begin{itemize}
  \item If $\beta<\beta_{c}$, then there exist two continuous
    $\Xi_{\beta}$-invariant graphs $\varphi_{\beta}^{-}<\varphi_{\beta}^{+}$ in
    $\Gamma$. For any $\omega$-invariant measure $\mu$ we have
    $\lambda_{\mu}(\varphi_{\beta}^{-})>0$ and
    $\lambda_{\mu}(\varphi_{\beta}^{+})<0$.
  \item If $\beta=\beta_{c}$, then either there exists exactly one continuous
    $\Xi_{\beta}$-invariant graph $\varphi_{\beta}$ in $\Gamma$ ({\em smooth
    bifurcation}), or there exists a pair of weakly pinched
    $\Xi_{\beta}$-invariant graphs $\varphi_{\beta}^{-}\leq\varphi_{\beta}^{+}$ in
    $\Gamma$ with $\varphi_{\beta}^-$ lower and $\varphi_{\beta}^+$ upper
    semi-continuous ({\em non-smooth bifurcation}). If $\mu$ is an $\omega$-invariant
    measure, then in the first case $\lambda_{\mu}(\varphi_{\beta})=0$. In the
    second case $\varphi_{\beta}^{-}(\theta)=\varphi_{\beta}^{+}(\theta)$
    $\mu$-almost surely implies $\lambda_{\mu}(\varphi^{\pm}_{\beta})=0$,
    whereas $\varphi_{\beta}^{-}(\theta)<\varphi_{\beta}^{+}(\theta)$
    $\mu$-almost surely implies $\lambda_{\mu}(\varphi_{\beta}^{-})>0$ and
    $\lambda_{\mu}(\varphi_{\beta}^{+})<0$.
  \item If $\beta>\beta_{c}$, then no $\Xi_{\beta}$-invariant graphs exist in $\Gamma$.
\end{itemize}
\end{theorem}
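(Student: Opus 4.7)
My plan is to carry out a fibrewise analogue of the classical fold bifurcation analysis, leveraging the concavity (vi) together with the two monotonicities (iv), (v) and the trapping condition (ii). The strict monotonicity in $\beta$ will single out a unique critical parameter $\beta_c$, above which the strip $\Gamma$ cannot contain any invariant graph, while the concavity restricts the number of invariant graphs in $\Gamma$ to at most two. The remaining task is to analyse what happens exactly at $\beta_c$, which is where smoothness or non-smoothness of the bifurcation, and the corresponding Lyapunov exponent behaviour, must be decided.

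\textbf{Existence for $\beta<\beta_c$ and uniqueness of $\beta_c$.} Using (ii), (iii) and uniform continuity from the compactness of $\Theta$, the pullback
\[
\varphi_\beta^+(\theta)\;=\;\lim_{t\to\infty}\xi^t_\beta\bigl(\omega^{-t}(\theta),\gamma^+(\omega^{-t}(\theta))\bigr)
\]
is a monotone (decreasing in $t$) limit of continuous graphs and hence an upper semi-continuous $\Xi_\beta$-invariant graph. A symmetric argument applied to the backward flow (for which $\gamma^-$ becomes a sub-solution) produces a lower semi-continuous invariant graph $\varphi_\beta^-\leq\varphi_\beta^+$. Concavity (vi) implies that for each $\theta$ the time-$t$ fixed-point equation has at most two solutions in $\Gamma_\theta$, so $\varphi_\beta^\pm$ exhaust the invariant graphs in $\Gamma$. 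I then set
\[
\beta_c\;=\;\sup\{\beta\in[0,1] : \varphi_\beta^\pm \text{ are two distinct continuous graphs in }\Gamma\}.
\]
Condition (v) is the decisive ingredient: any invariant graph $\varphi$ of $\Xi_\beta$ is a strict sub-solution for every $\Xi_{\beta'}$ with $\beta'<\beta$, and a further pullback around $\varphi$ exhibits two strictly separated continuous invariant graphs for $\Xi_{\beta'}$. Hence the set of parameters admitting two distinct continuous graphs is a left interval, giving simultaneously the uniqueness of $\beta_c$, the two-graph regime for $\beta<\beta_c$, and the non-existence of invariant graphs in $\Gamma$ for $\beta>\beta_c$. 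The opposite signs of the Lyapunov exponents follow from a standard concavity--monotonicity argument: along the upper (resp.\ lower) branch the derivative cocycle is strictly less than (resp.\ greater than) $1$ in the averaged sense forced by strict concavity of the fibre maps between the two graphs, so $\lambda_\mu(\varphi_\beta^+)<0<\lambda_\mu(\varphi_\beta^-)$.

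\textbf{Critical parameter and Lyapunov exponents.} At $\beta=\beta_c$ the semi-continuous graphs $\varphi_{\beta_c}^\pm$ persist as pullback limits. Weak pinching is proved by contradiction: were $\varphi_{\beta_c}^+-\varphi_{\beta_c}^-\geq\delta>0$ uniformly, the continuous dependence (iii), the concavity (vi) and a compactness argument would extend these as two strictly separated continuous invariant graphs to a small parameter window beyond $\beta_c$, contradicting the definition of $\beta_c$. The smooth/non-smooth dichotomy is then precisely whether $\varphi_{\beta_c}^+\equiv\varphi_{\beta_c}^-$ everywhere, or only on a weakly pinched subset. To obtain the Lyapunov exponent statement I invoke Theorem~\ref{1}: the coincidence set $C=\{\theta:\varphi_{\beta_c}^+(\theta)=\varphi_{\beta_c}^-(\theta)\}$ is $\omega$-invariant, so for any ergodic $\mu\in\cM(\omega)$ one has $\mu(C)\in\{0,1\}$. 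If $\mu(C)=1$, both branches generate the same graph measure and a perturbation argument using (v) together with concavity shows that any strictly negative Lyapunov exponent on the collapsed graph could be used to create two separated continuous invariant graphs for some $\beta$ slightly above $\beta_c$, a contradiction; hence $\lambda_\mu=0$. If $\mu(C)=0$, the strict signs propagate from $\beta<\beta_c$ by dominated convergence of the integrals defining $\lambda_\mu$, combined with the uniform concavity bound.

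\textbf{Main obstacle.} The delicate step is the last one: the Lyapunov exponent analysis at $\beta_c$ in the non-smooth, weakly pinched case. One needs to control the semi-continuous representatives $\varphi_{\beta_c}^\pm$ on the measure-theoretic complement of the pinching set, and ensure that the concavity hypothesis (vi) yields quantitative derivative bounds that survive the passage to the limit $\beta\nearrow\beta_c$. In particular, ruling out the degenerate scenario where both Lyapunov exponents collapse to zero in the non-smooth case, without the two graphs coinciding $\mu$-a.s., is the heart of the theorem and requires careful combination of the fibrewise concavity with the ergodicity of $\mu$ via Furstenberg's graph-measure correspondence.
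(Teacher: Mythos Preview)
The paper does not prove this theorem at all: it is quoted verbatim from \cite[Theorem~6.1]{AnagnostopoulouJaeger2012SaddleNodes} and used as a black box, with only a remark noting that the cited result is stated for convex rather than concave fibre maps and that the open region $\Gamma$ replaces the closed one. There is therefore no ``paper's own proof'' to compare against.

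That said, your sketch follows the same architecture as the argument in \cite{AnagnostopoulouJaeger2012SaddleNodes}: pullback/pushforward constructions of $\varphi_\beta^\pm$ from the boundary curves $\gamma^\pm$, concavity to limit the number of invariant graphs, monotonicity in $\beta$ to obtain a left interval of two-graph parameters and hence a unique $\beta_c$, and a compactness/contradiction argument for weak pinching at $\beta_c$. Two points where your outline is thinner than what a full proof needs: first, the sentence ``for each $\theta$ the time-$t$ fixed-point equation has at most two solutions'' is not quite the right formulation, since invariant graphs are not fibrewise fixed points; the correct argument uses concavity of the fibre maps to show that the difference $\varphi^+_\beta-\varphi^-_\beta$ cannot vanish and reappear along an orbit, which globalises to at most two graphs. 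Second, your treatment of the Lyapunov exponent signs is the genuinely delicate part (as you yourself flag): the strict inequalities $\lambda_\mu(\varphi_\beta^+)<0<\lambda_\mu(\varphi_\beta^-)$ for $\beta<\beta_c$ and their persistence or collapse at $\beta_c$ require a quantitative use of the strict concavity (vi) together with the separation of the graphs, not just an ``averaged sense'' heuristic, and this is where most of the work in the original reference lies.
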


\begin{remark}
  \alphlist
\item  We note that the result in \cite{AnagnostopoulouJaeger2012SaddleNodes}
is stated for convex fibre maps but the above version for concave fibre maps is
equivalent and discussed in \cite[Remark 6.2
  (c)]{AnagnostopoulouJaeger2012SaddleNodes}.
\item Likewise, the statement in \cite{AnagnostopoulouJaeger2012SaddleNodes} is
  given for the closed region $\overline{\Gamma}$ instead of the open set
  $\Gamma$ that we use here (for convenience later on), but the proof in
  \cite{AnagnostopoulouJaeger2012SaddleNodes} can be adjusted with minor
  modifications.
\item 
Non-continuous invariant graphs with negative Lyapunov exponents, as they appear
in non-smooth fold bifurcation of quasiperiodically forced systems, are called
{\em strange non-chaotic attractors} (SNA)
\cite{grebogi/ott/pelikan/yorke:1984,keller:1996,stark:2003,FeudelKuzetsovPikovsky2006StrangeNonchaoticAttractors,nunez/obaya:2007,AnagnostopoulouJaeger2012SaddleNodes}.
\listend
\end{remark}

\begin{remark}\label{rem: flows induced by odes}
  Continuous-time skew product flows are typically defined via non-autonomous
  ODE's of the form
  \begin{equation}
    \label{e.nonautonomous_ode}
    x'(t)=F(\omega^t(\theta),x) \ . 
  \end{equation}
   In fact, \eqref{e.nonautonomous_ode} a priori only yields a \emph{local} flow where trajectories
   may diverge and hence not be defined for all times $t\in \R$.
   As we will only deal with bounded solutions (see also Lemma~\ref{lem01}), this issue is not of further importance.
   We refer the interested reader to \cite{Fuhrmann2016SNAinFlows} for more details.
  
   Now, in order to apply the above statements to flows defined by equations of the form \eqref{e.nonautonomous_ode},
   it is crucial that the validity of the assumptions can be read off directly from the differential
   equations. 
   Fortunately, there is a rather immediate translation between the
   properties of parameter families of non-autonomous vector fields $F_\beta$
   and the relevant properties of the resulting skew product flow.

   First, the curves $\gamma^\pm$ can usually be chosen constant, in which case
   (i) may be obvious or be checked checked by hand for the respective models
   and (ii) follows from $F_\beta(\theta,\gamma^\pm(\theta))< 0$ for all
   $\theta\in\Theta$.  Secondly, by standard results on the regularity of
   solutions of an ODE with respect to parameters, it suffices to assume that
   for each $\theta \in \Theta$ the mapping $[0,1]\times \R \times \R
   \ni(\beta,t,x)\mapsto F_\beta(\omega^t(\theta),x)$ is continuous, $\mathcal
   C^1$ with respect to $\beta$, and $\mathcal C^2$ with respect to $x$ in order
   to ensure that $\Xi_{\beta}$ is indeed $\mathcal C^2$ and continuously
   differentiable with respect to $\beta$.  Hence, the above expressions are
   well-defined and (iii) is verified, as well.  The monotonicity in (iv)
   follows immediately from the uniqueness of the solutions to
   \eqref{e.nonautonomous_ode}.  The monotonicity condition (v) always holds if
   $\beta\mapsto F_\beta(\omega^t(\theta),x)$ is monotonically decreasing.
   Finally, the concavity of the fibre maps required in (vi) is a consequence of
   the concavity of $F_\beta$ in the considered region.  We refer to
   \cite{AnagnostopoulouJaeger2012SaddleNodes,Fuhrmann2016SNAinFlows}, as well
   as to the discussion of the application to the forced Allee model in
   Section~\ref{ApplicationToAllee}, for further details.
\end{remark}

The above remarks equally apply to the following measure-theoretic version of
the bifurcation scenario.
\begin{theorem}[{\cite[Theorem 4.1]{AnagnostopoulouJaeger2012SaddleNodes}}]
\label{random}
Let $(\Theta,\mathcal{B},\mu,\omega)$ be a measure preserving dynamical system
and suppose $(\Xi_{\beta})_{\beta\in[0,1]}$ is a parameter family of
$\omega$-forced monotone $\mathcal{C}^{2}$ flows. Further assume that there
exist measurable functions $\gamma^{-},\gamma^{+}:\Theta\rightarrow X$ with
$\gamma^{-}<\gamma^{+}$ such that the following conditions hold for all
$\beta\in[0,1]$, $\mu$-almost all $\theta\in\Theta$ and all positive $t\in \T$, where applicable.
\begin{enumerate}
 \item[(i)] There exist two $\mu$-uniformly separated $(\Xi_{0},\mu)$-invariant
   graphs but no $(\Xi_{1},\mu)$-invariant graph in $\Gamma$;
  \item[(ii)]$\xi^t_\beta(\theta,\gamma^{\pm}(\theta))\leq\gamma^{\pm}(\omega^t(\theta))$;
  \item[(iii)] the maps $(\beta,t,x)\mapsto \xi^t_\beta(\theta,x)$ and
    $(\beta,t,x)\mapsto \partial_x\xi^t_\beta(\theta,x)$ are continuous;
  \item[(iv)] $\partial_x\xi^t_{\beta,\theta}(x)>0~\forall~x\in\Gamma_{\theta}$;
  \item[(v)] for some $t>0$ there exist constants $C<c_{1}\leq 0$ such that
    $C\leq\partial_{\beta} \xi^t_\beta(\theta,x)\leq
    c_{1}~\forall~x\in\Gamma_{\theta}$;
  \item[(vi)]
    $\partial_x^2\xi^t_\beta(\theta,x)<0~\forall~x\in\Gamma_{\theta}$;
  \item[(vii)]the function $\eta(\theta)=\sup \left\{\left.|\log
    \partial_x\xi^t_\beta(\theta,x)|\right|x\in\Gamma_{\theta},\beta\in[0,1]\right\}$
    is integrable with respect to $\mu$.
\end{enumerate}
Then there exists a unique critical parameter $\beta_{\mu}\in[0,1]$ such that
\begin{itemize}
  \item If $\beta<\beta_{\mu}$, then there exist exactly two
    $(\Xi_{\beta},\mu)$-invariant graphs
    $\varphi_{\beta}^{-}<\varphi_{\beta}^{+}$ in $\Gamma$ which are
    $\mu$-uniformly separated and satisfy $\lambda(\varphi_{\beta}^{-})>0$ and
    $\lambda(\varphi_{\beta}^{+})<0$.
  \item If $\beta=\beta_{\mu}$, then either there exists exactly one
    $(\Xi_{\beta},\mu)$-invariant graph $\varphi_{\beta}$ in $\Gamma$, or there
    exist two measurably pinched invariant graphs
    $\varphi_{\beta}^{-}\leq\varphi_{\beta}^{+}$ in $\Gamma$. In the first case,
    $\lambda_{\mu}(\varphi_{\beta})=0$; in the second case,
    $\lambda_{\mu}(\varphi_{\beta}^{-})>0$ and
    $\lambda_{\mu}(\varphi_{\beta}^{+})<0$.
  \item If $\beta>\beta_{\mu}$, then no $f_{\beta}$-invariant graphs exist in $\Gamma$.
\end{itemize}
\end{theorem}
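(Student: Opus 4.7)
The strategy is to mirror the topological bifurcation scenario of Theorem~\ref{qpf}, but to replace semi-continuous envelopes and supremum arguments with monotone pointwise limits and the ergodic decomposition, using condition~(vii) as the key tool that lets Lyapunov exponents pass through these limits by dominated convergence. Throughout I fix some $t>0$ for which the uniform bound in~(v) holds and work with the discrete-time cocycle generated by $\xi^t_\beta$; the continuous-time statement then follows from the continuity in~(iii).

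\emph{Existence and structure of invariant graphs for fixed $\beta$.} I would build extremal invariant graphs from the boundary curves $\gamma^\pm$. By~(ii) and the fibrewise monotonicity~(iv), the iterates $\xi^{nt}_\beta(\omega^{-nt}\theta,\gamma^-(\omega^{-nt}\theta))$ are $\mu$-a.s.\ non-decreasing in $n$ and bounded above by $\gamma^+$, so they converge to a measurable function $\varphi^-_\beta$; running the same argument from $\gamma^+$ yields $\varphi^+_\beta$. Both are $(\Xi_\beta,\mu)$-invariant, and concavity~(vi) forces each fibre map $\xi^t_\beta(\theta,\cdot)$ to have at most two fixed points, so Theorem~\ref{1} implies that any $(\Xi_\beta,\mu)$-invariant graph lying in $\Gamma$ must $\mu$-a.s.\ equal $\varphi^-_\beta$ or $\varphi^+_\beta$.

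\emph{Monotone $\beta$-dependence, the critical parameter, and non-existence beyond it.} Condition~(v) gives $\partial_\beta\xi^t_\beta\le c_1<0$ uniformly on $\Gamma$, which together with~(iv) and the chain rule implies that $\varphi^-_\beta$ is $\mu$-a.s.\ non-decreasing in $\beta$ while $\varphi^+_\beta$ is non-increasing. Set
\begin{equation*}
\beta_\mu \ = \ \sup\bigl\{\beta\in[0,1]\mid \varphi^-_\beta \text{ and } \varphi^+_\beta \text{ are }\mu\text{-uniformly separated in }\Gamma\bigr\}.
\end{equation*}
Hypothesis~(i) guarantees $\beta_\mu\in(0,1)$. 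For $\beta>\beta_\mu$, the uniform drift $c_1$ in~(v) together with the Birkhoff sums applied to the orbit of any candidate graph force trajectories to leave $\Gamma$ on a set of positive $\mu$-measure in finitely many iterates, contradicting invariance; hence no invariant graphs exist in $\Gamma$ for such $\beta$.

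\emph{Behaviour at $\beta=\beta_\mu$ and Lyapunov exponents.} Define $\varphi^\pm_{\beta_\mu} = \lim_{\beta\nearrow\beta_\mu}\varphi^\pm_\beta$; these are measurable and $(\Xi_{\beta_\mu},\mu)$-invariant by~(iii). Either $\varphi^-_{\beta_\mu}=\varphi^+_{\beta_\mu}$ $\mu$-a.s., yielding the smooth case, or by $\omega$-invariance of their coincidence set and the ergodic decomposition, $\varphi^-_{\beta_\mu}<\varphi^+_{\beta_\mu}$ $\mu$-a.s., with measurable pinching inherited from the definition of $\beta_\mu$. For the exponents, condition~(vii) supplies an integrable majorant for $\log\partial_x\xi^t_\beta$ on $\Gamma$, so dominated convergence gives $\lambda_\mu(\varphi^\pm_\beta)\to\lambda_\mu(\varphi^\pm_{\beta_\mu})$ as $\beta\nearrow\beta_\mu$. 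In the smooth case the single limit graph is sandwiched by the signs $\lambda_\mu(\varphi^-_\beta)>0>\lambda_\mu(\varphi^+_\beta)$ and therefore has exponent $0$. In the measurably-pinched case, strict concavity~(vi) applied on the positive-measure set where $\varphi^-_{\beta_\mu}<\varphi^+_{\beta_\mu}$ keeps the two integrated log-derivatives strictly on opposite sides of $0$, and the signs of $\lambda_\mu(\varphi^\pm_\beta)$ propagate to the limit.

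\textbf{Main obstacle.} The delicate step is the Lyapunov-exponent analysis at $\beta=\beta_\mu$ in the non-smooth case: one must combine (vii) with the strict concavity~(vi) to show that the limit graphs retain strictly signed exponents even though $\varphi^-_{\beta_\mu}$ and $\varphi^+_{\beta_\mu}$ may come arbitrarily close on sets of large $\mu$-measure. The key observation is that measurable pinching is still compatible with a substantial region of genuine separation, on which concavity forces the derivatives of the fibre maps at the two graphs to straddle $1$ in an integrated sense; without~(vii) one could not pass this separation through the limit $\beta\nearrow\beta_\mu$ into $\lambda_\mu$.
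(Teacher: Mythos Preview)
The paper does not actually prove this theorem: it is quoted verbatim from \cite{AnagnostopoulouJaeger2012SaddleNodes} and used as a black box, so there is no in-paper proof to compare against. That said, your outline has a concrete error and a couple of genuine gaps worth flagging.

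\textbf{The construction of $\varphi^-_\beta$ is wrong.} Condition~(ii) reads $\xi^t_\beta(\theta,\gamma^-(\theta))\le\gamma^-(\omega^t(\theta))$, i.e.\ the lower boundary is mapped \emph{below} itself in forward time. Hence the pullback sequence $\xi^{nt}_\beta(\omega^{-nt}\theta,\gamma^-(\omega^{-nt}\theta))$ is non-\emph{increasing}, not non-decreasing, and has no lower barrier inside $\Gamma$; it will typically exit $\Gamma$ and not converge to an invariant graph. The correct construction (cf.\ Remark~\ref{r.bounding_graphs} and the proof of Lemma~\ref{lem01} in the present paper) is to obtain $\varphi^-_\beta$ by \emph{pushforward}, i.e.\ as the limit of $\xi^{-nt}_\beta(\omega^{nt}\theta,\gamma^-(\omega^{nt}\theta))$, which is increasing and bounded above by $\gamma^+$. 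With this fixed, the rest of your extremal-graph paragraph is reasonable, except that the sentence ``concavity forces each fibre map to have at most two fixed points'' does not by itself give at most two invariant graphs: invariant graphs are not pointwise fixed points of a single fibre map. What concavity actually buys is that any three invariant graphs $\psi_1<\psi_2<\psi_3$ would yield an integrated inequality between $\log\partial_x\xi^t_\beta$ along them that is incompatible with the concavity of $x\mapsto\xi^t_\beta(\theta,x)$ and the sign structure of the Lyapunov exponents; this is the argument you need to spell out.

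\textbf{Non-existence for $\beta>\beta_\mu$ is not established.} Your appeal to ``Birkhoff sums applied to the orbit of any candidate graph'' and the drift $c_1$ is not an argument: the constant $c_1$ bounds $\partial_\beta\xi^t_\beta$, not the vertical displacement of orbits, and there is no a priori graph to apply Birkhoff to. The actual mechanism in \cite{AnagnostopoulouJaeger2012SaddleNodes} is a comparison argument: if invariant graphs existed for some $\beta>\beta_\mu$, then by the monotonicity in~(v) the extremal graphs at $\beta_\mu$ would be $\mu$-uniformly separated (being squeezed between those at some $\beta'<\beta_\mu$ and those at $\beta$), contradicting the definition of $\beta_\mu$; and if none exist at $\beta_\mu$ one uses Lemma~\ref{lem01}-type reasoning together with~(v) to transfer non-existence to a neighbourhood of $\beta_\mu$. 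Your sketch does not capture either half of this.
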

In analogy to the deterministic setting, we again speak of a smooth bifurcation
if there exists a unique neutral invariant graph at the bifurcation point, and
of a non-smooth bifurcation if there exists an attractor-repeller pair.

\subsection{Application to the forced Allee model}
\label{ApplicationToAllee} 

We now aim to verify that the forced Allee model (\ref{e.forced_Allee})
satisfies the assumptions of Theorems \ref{qpf} and \ref{random},
respectively. More precisely, we specify the admissible parameter ranges stated
in Remark~\ref{r.parameter_range} in the introduction and show that the
respective conditions are met for all admissible parameters.  As pointed out in
Remark~\ref{rem: flows induced by odes}, conditions (iii), (iv) and (v) in
Theorems~\ref{qpf} and \ref{random} follow directly from the specific form of
the scalar field (\ref{e.forced_Allee}). Moreover, condition (vii) in
Theorem~\ref{random} holds as well, since all the involved functions are bounded
(and therefore, in particular, integrable). However, it remains to specify a
suitable parameter range and appropriate functions $\gamma^\pm$ so that (i),
(ii) and (vi) hold.

 It is easy to check that the fold bifurcation of the unforced equation
 (\ref{e.simple_Allee}) takes place at
  \begin{equation}\label{e.brKS}
   \beta \ = \ b(r,K,S) \ := \ \frac{r}{K^2}\cdot \left(\frac{K-S}{2}\right)^2 
   \ .
  \end{equation}
Moreover, the neutral equilibrium point at the bifurcation is
$x_0=\frac{K+S}{2}$.  If $\kappa<b(r,K,S)$ and $\beta\leq b(r,K,S)-\kappa$, then
we have that $V_\beta(\theta,x_0)>0$ for all $\theta\in \Theta$ and both forcing
terms (\ref{e.quasiperiodic_forcing_term_example}) and
(\ref{e.random_forcing_term_example}) (note here that $F\leq 1$). At the same
time, given $\beta<b(r,K,S)$, the unforced Allee model (\ref{e.simple_Allee})
has equilibrium points $x=0$ and
\[
x_\beta^\pm \ = \ \frac{K+S}{2} \ \pm \halb\sqrt{(K-S)^2 - \frac{4\beta K^2}{r}}
\ = \ \frac{K+S}{2} \ \pm \frac{K-S}{2} \cdot \sqrt{1-\bar \beta} \ ,
\]
where
\[
\bar \beta \  = \ \frac{4\beta K^2}{r(K-S)^2} \ .
\]
As the forcing is always downwards (recall that the forcing term is $-\kappa F$
with $F\geq 0$), this implies in particular that
$V_\beta(\theta,x^\pm_{\beta_0})<0$ for all $\theta\in\Theta$ and $\beta\geq
\beta_0$. Hence, we obtain a forward invariant region $\Theta\times
     [x_0,x^+_{\beta_0}]$ and a backward invariant region $\Theta\times
     [x^-_{\beta_0},x_0]$, where $\beta_0$ will be specified below. Using the
     concavity of $V_\beta$, equally shown below, this implies the existence of
     two invariant graphs in
     $[\gamma^-,\gamma^+]=[x^-_{\beta_0},x^+_{\beta_0}]$. Similarly, if
     $\beta>b(r,K,S)$, then the bifurcation has already taken place and there
     will not be any invariant graph above the equilibrium at $0$. Hence,
     conditions (i) and (ii) are satisfied.

It remains to ensure the concavity of $V_\beta(\theta,\cdot)$ in the considered
region $\Theta\times(\gamma^-,\gamma^+)$, where $\gamma^\pm$ still need to be
specified. The second derivative of $V_{\kappa,\beta}$ with respect to $x$ is
given by
\[
\partial_x^2 V_{\kappa,\beta}(\theta,x) \ = \ \frac{r}{K^2}\cdot (-6x + 2(K+S))
\ ,
\]
and is thus independent of $\beta$ and $\kappa$. We have
\[
\partial_x^2 V_{\kappa,\beta}(\theta,x) < 0 \quad \equi \quad x \ >
\ \frac{K+S}{3} \ .
\]
Hence, we simply need to choose $\beta_0$ such that
$x^-_{\beta_0}\geq \frac{K+S}{3}$. By the above, this means that we require
\[
x^-_\beta \ = \ \frac{K+S}{2} \ - \frac{K-S}{2} \cdot \sqrt{1-\bar \beta} \ \geq
\ \frac{K+S}{3} \ ,
\]
which is equivalent to
\[
\bar\beta \ \geq \ 1-\gamma(K,S)\ ,
\]
where $\gamma(K,S)=\frac{1}{9}\left(\frac{K+S}{K-S}\right)^2$, and hence to
\[
\beta \ \geq \ b(r,K,S)\cdot \left( 1-\gamma(K,S)\right) \ .
\]
This means that if $\kappa>0$ satisfies
\[
  \kappa \ < \ b(r,K,S) \cdot \gamma(K,S)
  \]
  and we let
  \[
      J(r,K,S) \ = \ \left[ b(r,K,S)\cdot \left(
      1-\gamma(K,S)\right),b(r,K,S)+1\right] \ ,
  \]
then the parameter family $(V_{\kappa,\beta})_{\beta\in J(r,K,S)}$ satisfies all
the assertions of Theorem~\ref{qpf} (modulo rescaling the parameter interval
$J(r,K,S)$) and therefore undergoes a non-autonomous fold bifurcation.

\subsection{Forcing processes}\label{ForcingProcesses}
For later use, we introduce forcing processes both in discrete and continuous time. 
Quasiperiodic motion in discrete time is given by a rotation
$\omega:\T^d\to\T^d,\ \theta\mapsto \theta+\rho\bmod 1$ which is {\em
  irrational}, in the sense that its rotation vector $\rho=(\rho_1\ld \rho_d)$
has incommensurate entries.\foot{Here $\rho_1\ld \rho_d$ are called {\em
    incommensurate} if $n_0+\sum_{j=1}^d n_j\rho_j=0$ implies
  $n_0=n_1=\ldots=n_d=0$.} In this case, the transformation $\omega$ is minimal
and uniquely ergodic, with the Lebesgue measure on $\T^d$ as the unique
invariant probability measure. The continuous time analogue is an irrational
Kronecker flow $\omega:\R\times\T^d\to\T^d,\ \omega^t(\theta)=\theta+t\rho$,
where $\rho$ (or some scalar multiple thereof) is again incommensurate.

In order to model random forcing in discrete time, we will simply use Bernoulli
processes as examples. Hence, we let $\Sigma=\{0,1\}^\Z$ and equip this space
with the Bernoulli measure $\mu$ with probabilities $1/2$ for the symbolds $0$
and $1$.  Actually, we could likewise set $\Sigma$ to be $[0,1]^\Z$ and $\mu$ to
be the infinite product $\Leb_{[0,1]}^\Z$ of the Lebesgue measure on $[0,1]$, or
even replace $\Leb_{[0,1]}$ by any measure $\lambda$ on $[0,1]$ whose
topological support includes $0$ and $1$.  In any case, the dynamics on $\Sigma$
are given by the shift map $\sigma: \Sigma\ni (\theta_n)_{n\in \Z} \mapsto
(\theta_{n+1})_{n\in \Z}$ which is ergodic with respect to each such measure.

A slight complication occurs in the case of continuous-time random forcing. As
mentioned in the introduction, we would like to use $\sin(W_t)$ as a forcing
term in (\ref{e.forced_Allee}). Hence, it is natural to consider the Wiener space,
that is, the space of continuous real-valued functions $\cC(\R,\R)$ equipped
with the Borel $\sigma$-algebra generated by uniform topology and the classical
Wiener measure $\Proj$. However, in order to obtain a skew product flow we need
a measure-preserving transformation on our probability space. If
$\theta\in\cC(\R,\R)$ is a path of a Brownian motion, the standard
measure-preserving shift on the Wiener space is given by
$\omega^t(\theta)(s)=\theta(s+t)-\theta(t)$. The problem that occurs is the fact
that if we now want to define a forcing term $f$ on $\cC(\R,\R)$ by
evaluating the sinus at $\theta(0)$, that is, $f(\theta)=\sin(2\pi\theta(0))$,
then $f(\omega^t(\theta))=0$ for all $t\in\R$ (the standard Brownian motion
starts in zero, and the classical shift respects this property).  We therefore
use a slightly modified version of this process to model bounded random forcing
for our purposes. To that end, we let $p:\cC(\R,\R)\to \cC(\R,\kreis)=\Theta$ be
the projection of real-valued to circle-valued functions (induced by the
canonical projection $\pi:\R\to\kreis$) and let $\Proj_0=p_*\Proj$ be the
push-forward of $\Proj$. Further, we let $S:\kreis\times\Theta\to \Theta,\,
(x,\theta)\mapsto \theta+x$ and equip $\Theta$ with the measure
$\nu=S_*(\Leb_{\kreis}\times \Proj_0)$. By definition, $\nu$ has equidistributed
marginals and can therefore be seen to be invariant under the shift
$\omega:\R\times\Theta\to\Theta$ defined by
$\omega^t(\theta)(s)=\theta(t+s)$. This construction will allow us to define a
forcing term simply by evaluating the sinus (viewed as a function on
$\kreis$) at $\theta(0)$.

\subsection{A simplified discrete-time model} \label{DiscreteTimeModel} As a basic model for the discrete-time case, we will consider the 
parameter families of skew product maps
\begin{equation}
  \label{e.discrete_time_examples}\textstyle
  f_\beta : \Theta\times \R \to\Theta\times\R \quad , \quad (\theta,x)\mapsto
  \left(\omega(\theta),\arctan(\alpha x) - \kappa \cdot F(\theta) - \beta\right)
  \
\end{equation}
with real parameters $\alpha>\pi/2$, $\beta\in [0,1]$ and $\kappa\in (0,\tilde \beta_c)$, where
$\tilde \beta_c=\arctan(\sqrt{\alpha -1})-\sqrt{\alpha-1}/\alpha$ is the critical parameter at which the fold bifurcation of the autonomous family
$x\mapsto \arctan(\alpha x) - \beta$ occurs.
The forcing processes we consider are either defined on $\Theta=\T^d$ and given by a rotation
$\omega:\theta\mapsto\theta+\rho$ with rotation vector $\rho\in\T^d$
(quasiperiodic forcing) or on $\Theta=\Sigma$, where $\Sigma=\{0,1\}^\Z$ and
$\omega$ is given by the shift $\sigma$ on $\Sigma$ (random forcing), all as
in Section~\ref{ForcingProcesses} above.
For the forcing function $F$, we use
\begin{equation}
  F(\theta) \ = \ \frac{\sin(2\pi\theta)+1}{2}
\end{equation}
in the qpf case and 
\begin{equation}
  F(\theta) \ = \ \theta_0
\end{equation}
in the random case.

\begin{figure}[h!]
\includegraphics[scale=0.35]{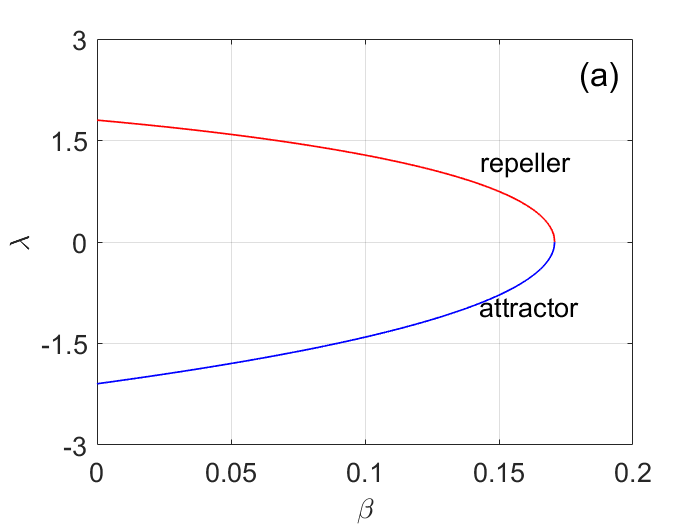} \quad
\includegraphics[scale=0.44]{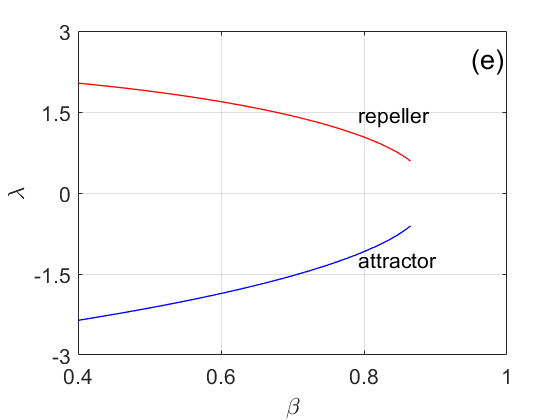}

\includegraphics[scale=0.35]{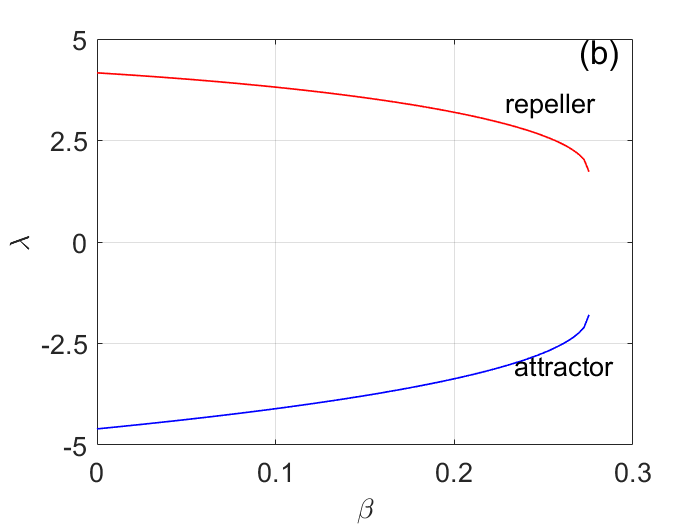} \quad 
\includegraphics[scale=0.44]{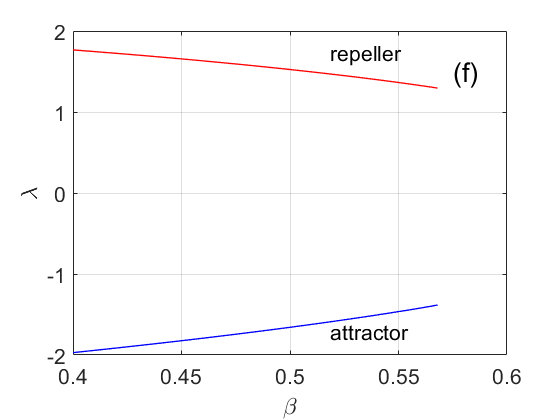}

\includegraphics[scale=0.35]{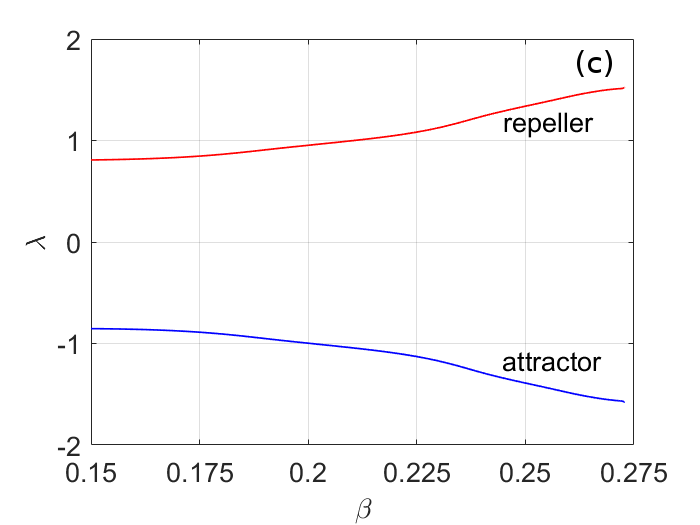}\quad 
\includegraphics[scale=0.43]{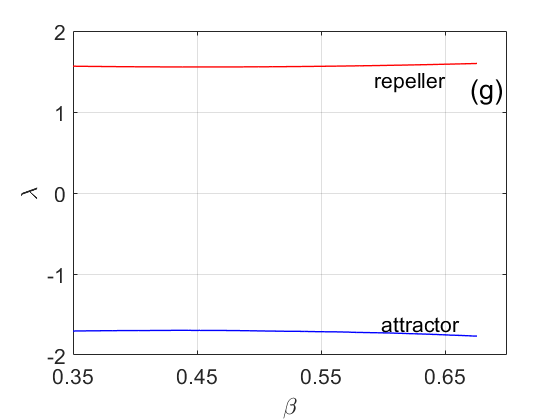}

\includegraphics[scale=0.43]{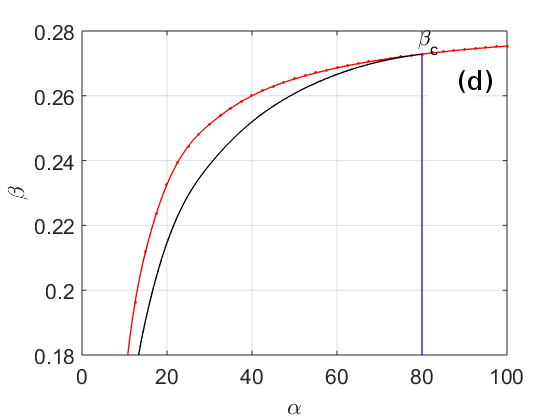}\quad
\includegraphics[scale=0.43]{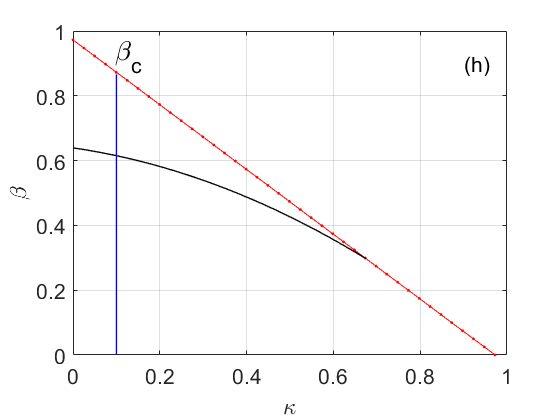}

  \caption{Lyapunov exponents during saddle-node bifurcations in the
    family~(\ref{e.discrete_time_examples}).  (a) smooth bifurcation in the qpf
    case, with parameters $\alpha=10$ and $\kappa=1$. The bifurcation occurs at
    $\beta_c=0.341502$.  (b) a non-smooth bifurcation in the same model, with
    parameters $\alpha=100$ and $\kappa=1$. The bifurcation occurs at
    $\beta_c=0.5507468$.  (c) non-smooth bifurcation with simultaneous variation
    of parameters $\alpha$ and $\beta$ along the black curve in (d).  (e) and
    (f) Lyapunov exponents in the randomly forced case, with parameters
    $\alpha=10$ and $\kappa=0.1$ and bifurcation parameter $\beta=0.866$ in (e)
    and $\alpha=10$ and $\kappa=0.4$ and bifurcation parameter $\beta=0.566$ in
    (f). In (g), the parameters $\kappa$ and $\beta$ are varied again at the
    same time along the black parameter curve shown in
    (h).}\label{f.LE_smooth/nonsmooth_discretetime}
\end{figure}

The behaviour of the attractors and repellers during smooth and non-smooth
bifurcations in the qpf case are shown in
Figure~\ref{f.forced_folds_discretetime}. This figure also illustrates some key
features of non-smooth bifurcations in qpf systems and allows us to give a
heuristic description of the mechanism that causes the non-smoothness. The
rigorous description of this mechanism is the basis for the mathematical
analysis of non-smooth bifurcations in
\cite{Jaeger2009CreationOfSNA,fuhrmann2013NonsmoothSaddleNodesI}. As can be seen
in Figure~\ref{f.forced_folds_discretetime}, when the attracting and repelling
graphs approach each other in a non-smooth way, they develop a sequence of
`peaks'. These appear in an ordered way, and the next peak is always the image
of the previous one and is generated as soon as the latter reaches into the
region with large derivatives which is centred around the $0$-line
$\kreis\times\{0\}$. The first peak is located around the minimum of the blue
curve in (d). The second peak emerges in (e) and is fully developed in (f),
where a number of further peaks can be seen as well. Thereby, the movement of
each peak is amplified by the large derivatives close to zero (of magnitude
$\alpha$, see (\ref{e.discrete_time_examples})). For this reason, as $\beta$ is
increased, the speed by which the peaks move as $\beta$ is varied increases
exponentially with the order of the peak, whereas its width decreases
exponentially (since each peak is stretched vertically due to the expansion
around $0$). In the limit, the two curves touch each other with the tips of the
peaks. Note that only a finite number of peaks can be observed at the
bifurcation point in (f), since these quickly become too thin to be visible in
numerical simulations. However, it is known that the region between the two
graphs in (f) is actually filled densely by further peaks
\cite{GroegerJaeger2013SNADimensions,FuhrmannGroegerJaeger2015SNADimensions}. We
refer to the introduction of \cite{Jaeger2009CreationOfSNA} for a more detailed
discussion.

\begin{figure}[h!]
  \includegraphics[scale=0.35]{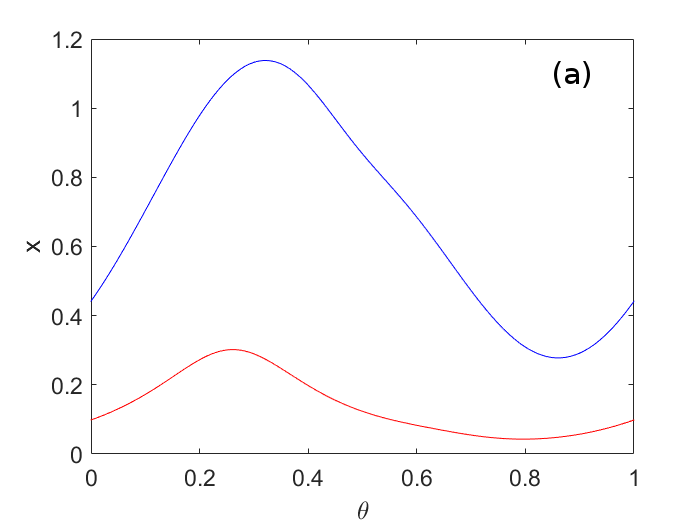} \qquad
  \includegraphics[scale=0.35]{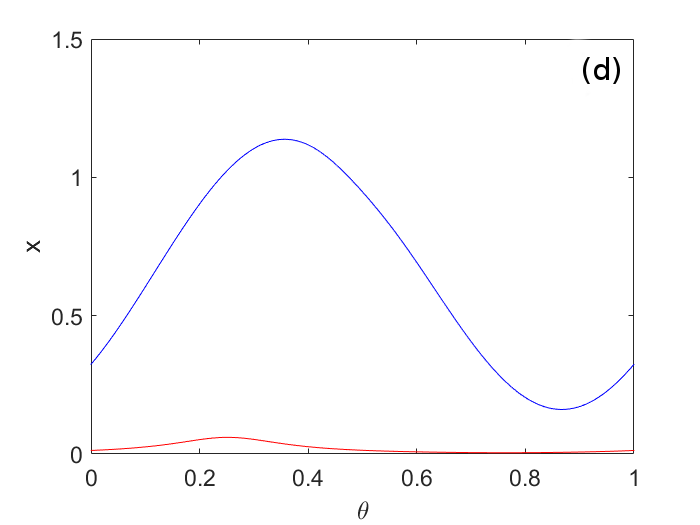}

 \includegraphics[scale=0.35]{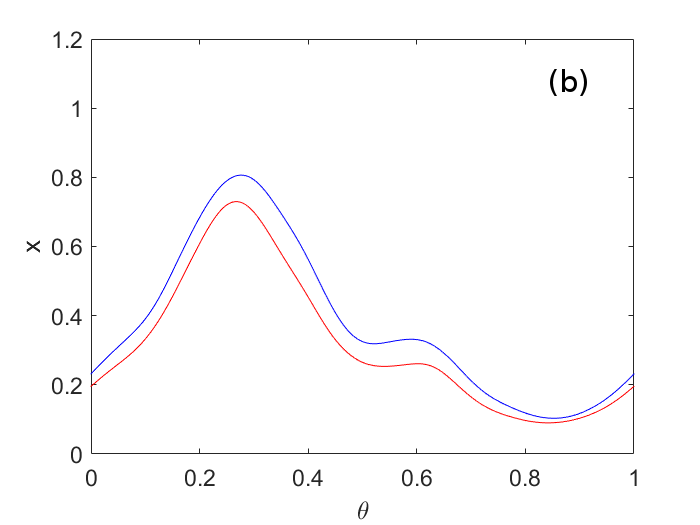} \qquad
 \includegraphics[scale=0.35]{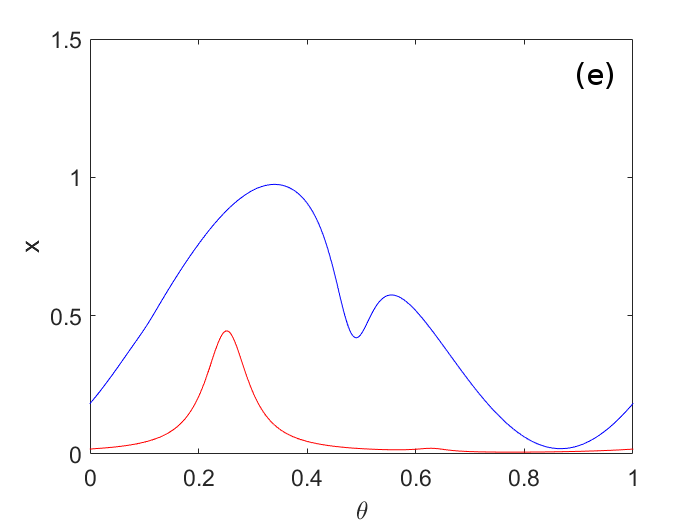}

\includegraphics[scale=0.35]{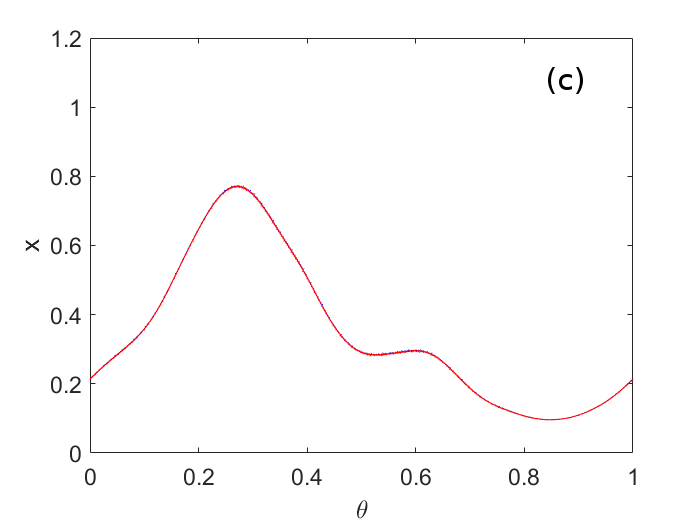} \qquad
\includegraphics[scale=0.35]{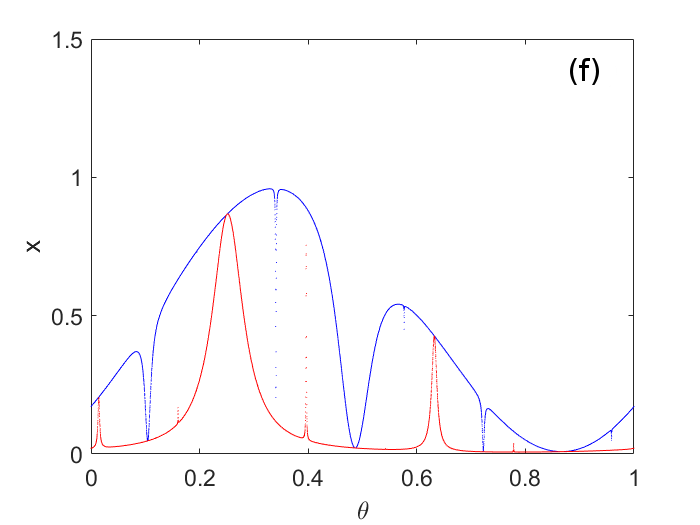}
    \caption{{\bf (a)--(c):} Smooth saddle-node bifurcation in
      (\ref{e.discrete_time_examples}) with quasiperiodic forcing. Parameter
      values are $\alpha=10$, $\kappa=1$, $\rho=\omega$ (golden mean) and (a) $\beta=0.1708$ (b) $\beta=0.34$ and
      (c) $\beta=0.341502$. \\ {\bf (d)--(f):} Non-smooth saddle-node bifurcation
      bifurcation in in (\ref{e.discrete_time_examples}) with quasiperiodic
      forcing. Parameter values are $\alpha=100$, $\kappa=1$, $\rho=\omega$ and (a)
      $\beta=0.4$ (b) $\beta=0.54$ and (c)
      $\beta=0.5507486$.} \label{f.forced_folds_discretetime}
   \end{figure}

The range of the finite time Lyapunov exponents for the same parameter
families as in Figure~\ref{f.LE_smooth/nonsmooth_discretetime} is shown in
Figure~\ref{f.ftle_discretetime}.
\begin{figure}[h!]
\includegraphics[scale=0.35]{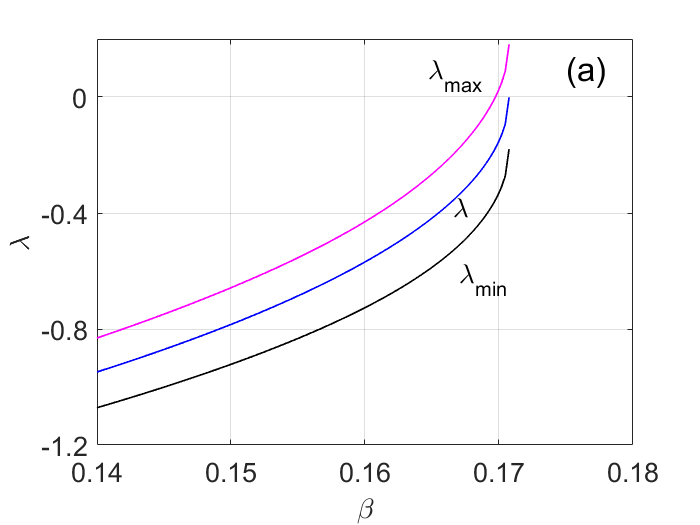} \quad
\includegraphics[scale=0.35]{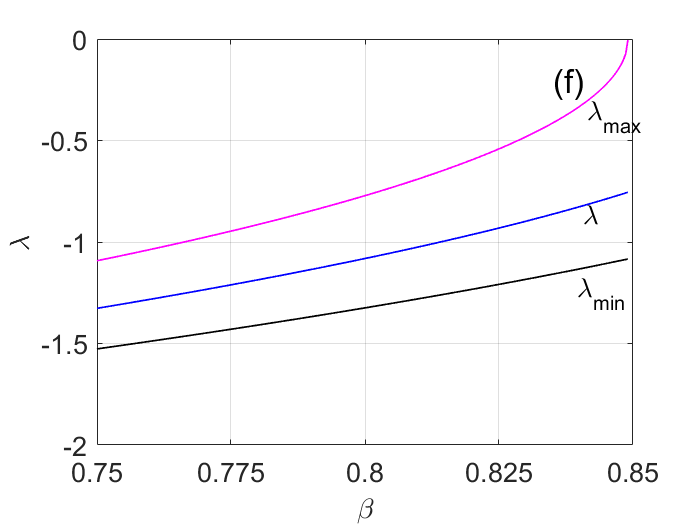}

\includegraphics[scale=0.35]{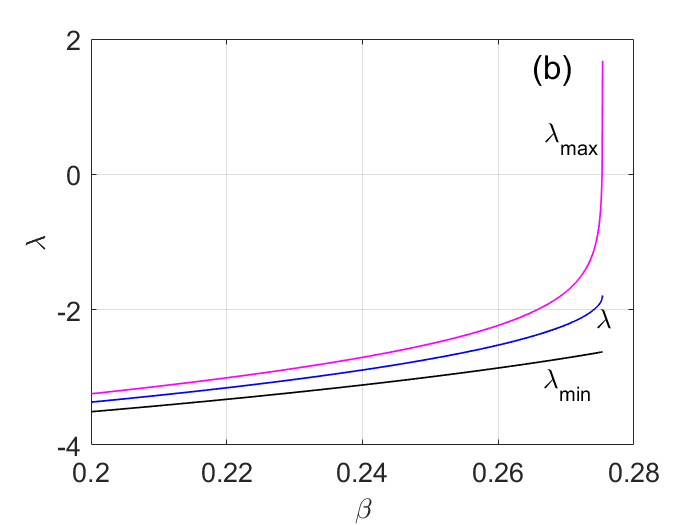}
\includegraphics[scale=0.44]{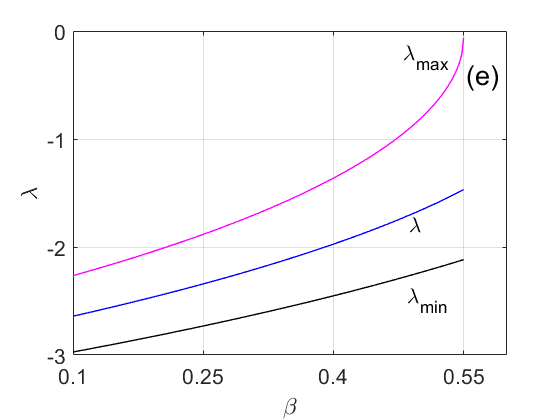}

\includegraphics[scale=0.44]{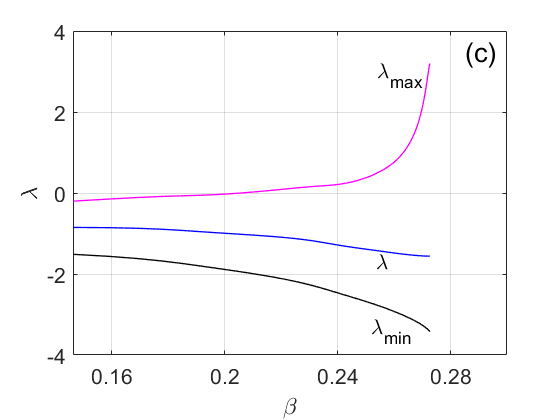}
\includegraphics[scale=0.44]{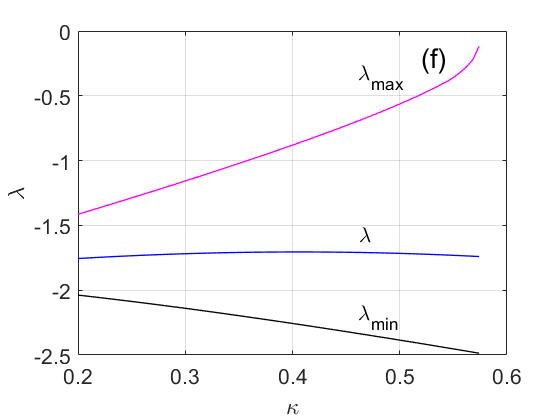}

\caption{Minimal and maximal finite-time Lyapunov exponents (with the asymptotic
  one plotted in the middle) for the parameter families used in
  Figure~\ref{f.LE_smooth/nonsmooth_discretetime} (in the same order). The time
  is $n=5$ in all cases.  }
\label{f.ftle_discretetime}
\end{figure}
Finally, the distribution of finite-time Lyapunov exponents (on different
time-scales) is plotted in Figure~\ref{f.distributions_discrete}, and the
corresponding relative frequencies as a function of time in
Figure~\ref{f.distributions_evolution_discrete}. In both cases, we restrict to
the qpf case.

\begin{figure}[h!]
   \includegraphics[scale=0.22]{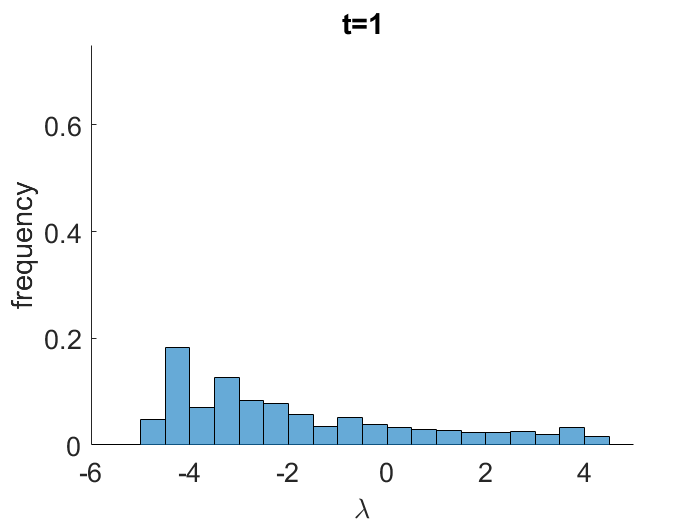} \quad
    \includegraphics[scale=0.22]{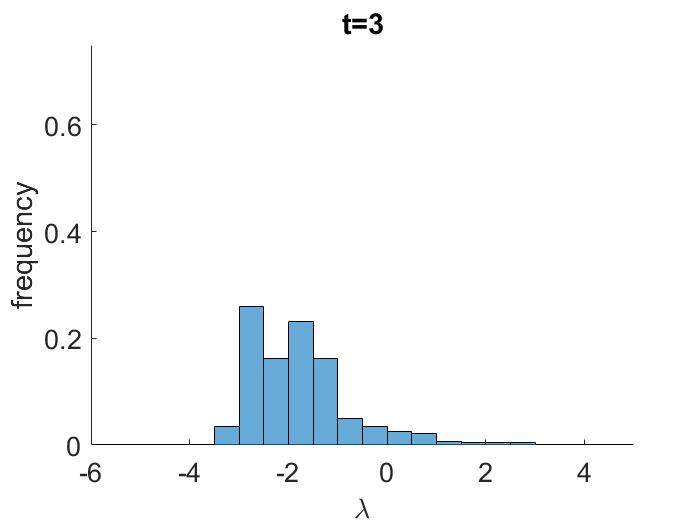} \quad
  \includegraphics[scale=0.22]{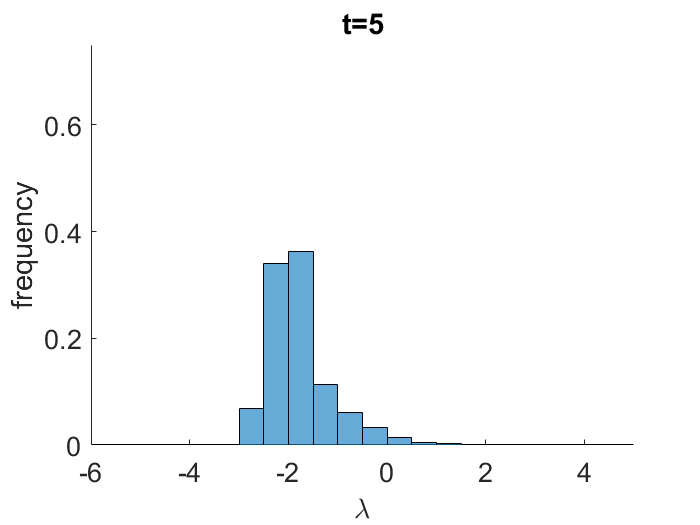}
\medskip

   \includegraphics[scale=0.23]{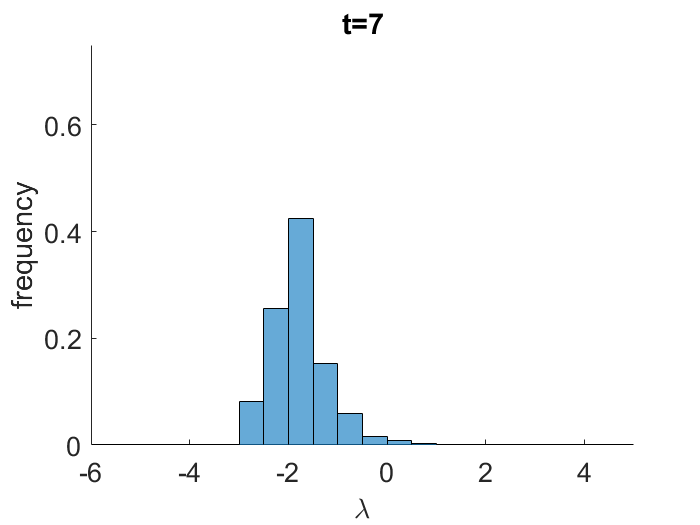} \quad
   \includegraphics[scale=0.23]{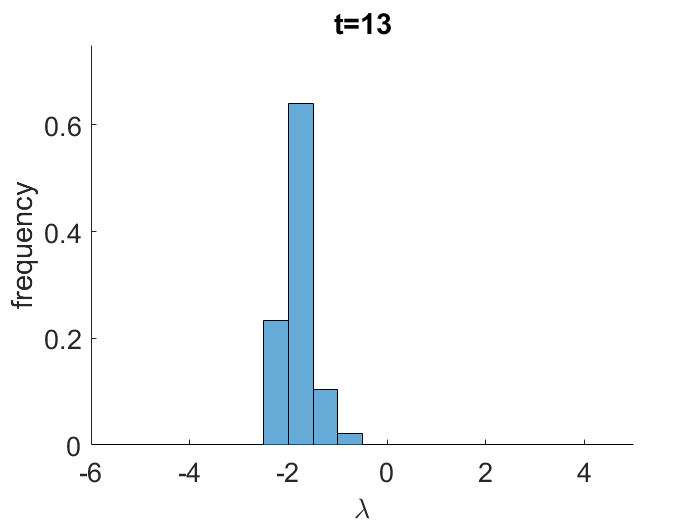} \quad
\includegraphics[scale=0.23]{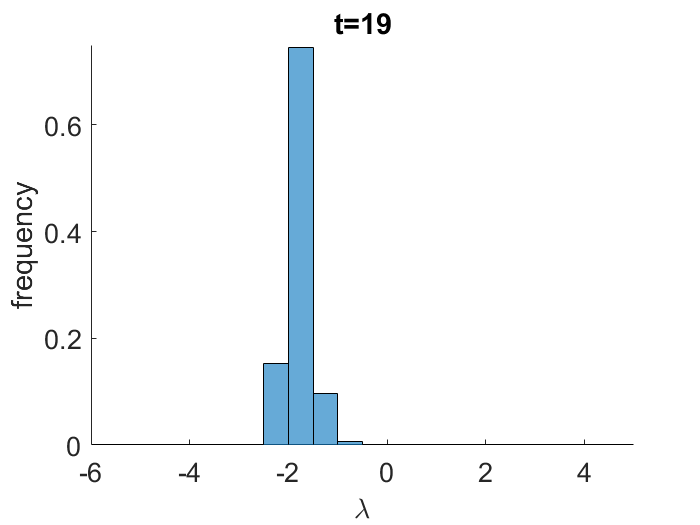}
  \caption{Distributions of the finite-time Lyapunov exponents in the qpf
    discrete-time system (\ref{e.discrete_time_examples}) with parameters
    $\alpha=100$, $\kappa=1$ and $\rho$ the golden
    mean. } \label{f.distributions_discrete}
\end{figure}

\begin{figure}[h!]
   \includegraphics[scale=0.23]{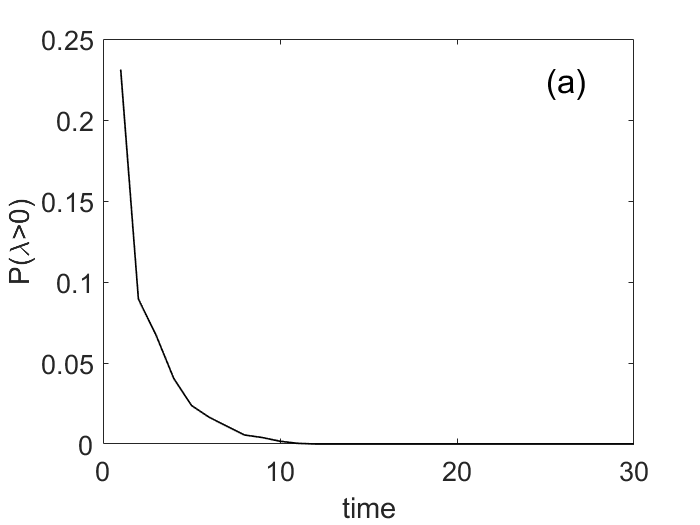} \quad
  \includegraphics[scale=0.23]{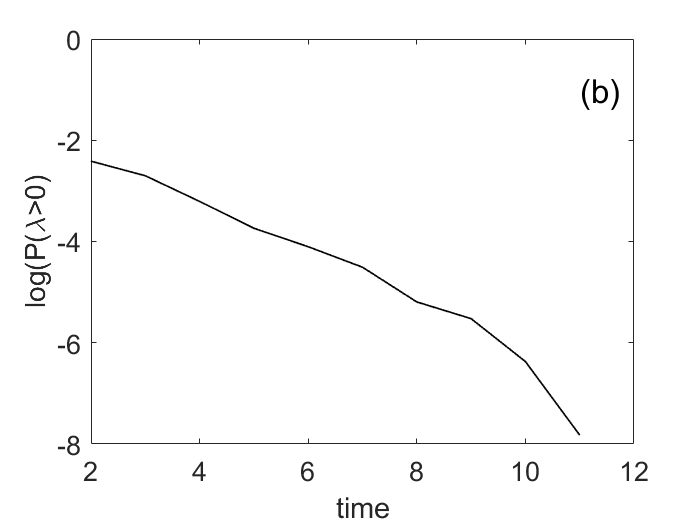} \quad
\includegraphics[scale=0.23]{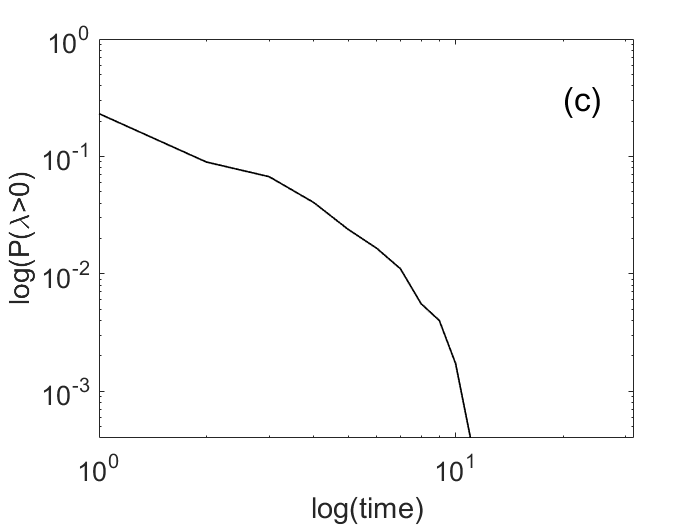} 
     \caption{A plot of the relative frequency of positive exponents (as
       observed in Figure~\ref{f.distributions_discrete}) on a (a) standard, (b)
       logarithmic and (c) $\log$-$\log$-scale.
     } \label{f.distributions_evolution_discrete}
\end{figure}
\medskip

\newpage

\section{Abundance of nonsmooth fold bifurcations} \label{Relevance}

\subsection{Quasiperiodic forcing}

\noindent The results discussed in the previous section provide a general
setting for non-autonomous saddle-node bifurcations. We shall now take a closer
look at non-smooth bifurcations and discuss their widespread occurrence in
forced systems. In the case of quasiperiodic forcing, the latter is
well-established by a number of rigorous results both in the discrete- and
continuous-time case. Thereby, arithmetic properties of the rotation numbers or
vectors in the base play a crucial role.  Given $ \tau,\kappa>0$, we say $\rho
\in \T^d$ is \emph{Diophantine (of type $(\tau,\kappa)$)} if
\begin{align*}
 \forall k \in \Z^d \setminus \{0\}:\inf_{p\in \Z}\left|p+\sum_{i=1}^d \rho_i k_i \right|\geq
 \tau |k|^{-\kappa}.
\end{align*}
We note thta the set of rotation vectors that satisfy a Diophantine condition of
type $(\tau,\kappa)$ for some $\tau>0$ (with $\kappa$ fixed) is of full Lebesgue
measure for all $\kappa>\frac{d+1}{d}$. 

Let us start by considering the discrete-time case.  Generalising example
(\ref{e.discrete_time_examples}), consider discrete-time flows given by
quasiperiodically forced monotone interval maps of the form
\begin{equation}\label{qpf1}
f:\Theta\times\R\rightarrow\Theta\times\R,\quad (\theta,x)\mapsto(\omega(\theta),
f_{\theta}(x))\ ,
\end{equation}
where $\Theta=\T^d$, $\omega:\Theta\to\Theta, \theta\mapsto\theta+\rho$ is again
an irrational rotation with rotation vector $\rho$ and $f_{\theta}(\cdot)$ is
$\mathcal{C}^{2}$ and strictly increasing on $X$.  For a given rotation vector
$\rho\in\T^d$, we further consider the space of one-parameter families
\begin{equation*} \mathcal{F_{\rho}}=\left\{(f_{\beta})_{\beta\in
    [0,1]}:f_{\beta} \textrm{ is of form \eqref{qpf1} and } (\beta,\theta,x)
  \mapsto f_{\beta,\theta}(x) \textrm{ is } \cC^2\right\}
\end{equation*}
equipped with the metric
\begin{equation*}
 d((f_{\beta})_{\beta\in [0,1]},(g_{\beta})_{\beta\in[0,1]})=\sup_{\beta\in
   [0,1]}(\|f_{\beta}-g_{\beta}\|_{2}+\|\partial_{\beta}f_{\beta}-\partial_{\beta}g_{\beta}\|_{0}).
\end{equation*}
Then the following result is established in
\cite{fuhrmann2013NonsmoothSaddleNodesI}, with precursors in
\cite{bjerkloev:2005,Jaeger2009CreationOfSNA}.

\begin{theorem}[\cite{fuhrmann2013NonsmoothSaddleNodesI}]
\label{fg}
Suppose that $\rho\in\T^d$ is Diophantine. Then there exists a non-empty open
set \ $\mathcal{U}\subseteq\mathcal{F_{\omega}}$ such that each
$(f_{\beta})_{\beta\in[0,1]}\in\mathcal{U}$ satisfies the assertions of
Theorem~\ref{qpf} and undergoes a non-smooth saddle-node bifurcation.
\end{theorem}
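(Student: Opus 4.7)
The plan is to prove Theorem~\ref{fg} in two stages: first, exhibit one specific family $(f^*_\beta)_{\beta \in [0,1]} \in \mathcal{F}_\rho$ that satisfies the assumptions of Theorem~\ref{qpf} and undergoes a non-smooth saddle-node bifurcation; second, argue that the non-smoothness property is robust, producing a $\cC^2$-open neighborhood $\mathcal{U}$ around this family. For the explicit construction, I would take a strongly contracting version of the model \eqref{e.discrete_time_examples}, say with $\alpha$ chosen very large so that $\partial_x f^*_{\beta,\theta}$ is of order $\alpha$ on a thin window around $x=0$ and of order $1/\alpha$ elsewhere in the invariant region. Verifying hypotheses (i)--(vi) of Theorem~\ref{qpf} then proceeds much as in Section~\ref{ApplicationToAllee}: monotonicity and continuity are immediate from the definition, concavity of $\arctan$ on the relevant region yields (vi), and the downward-pushing forcing combined with comparison to the unforced case gives two invariant graphs at $\beta=0$ and none at $\beta=1$, yielding the existence of a unique $\beta_c \in (0,1)$.

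The heart of the argument is to exclude the smooth alternative at $\beta=\beta_c$. The strategy is a contradiction: suppose that $\varphi_{\beta_c}$ is a continuous $\Xi_{\beta_c}$-invariant graph. Then Theorem~\ref{qpf} forces $\lambda_{\Leb}(\varphi_{\beta_c}) = \int_{\T^d} \log \partial_x f^*_{\beta_c,\theta}(\varphi_{\beta_c}(\theta))\, d\theta = 0$. Continuity of $\varphi_{\beta_c}$, together with the cohomological relation $f^*_{\beta_c,\theta}(\varphi_{\beta_c}(\theta)) = \varphi_{\beta_c}(\theta+\rho)$ and the Diophantine property of $\rho$, restricts how $\varphi_{\beta_c}$ can oscillate: returning to a neighborhood of the expanding window requires base returns whose speed is quantitatively controlled by the Diophantine exponents $(\tau,\kappa)$. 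By choosing $\alpha$ sufficiently large relative to those exponents, one shows that the Lebesgue measure of $\{\theta : \varphi_{\beta_c}(\theta) \in \text{expanding window}\}$ cannot be large enough to offset the contribution from the strongly contracting region, forcing $\lambda_{\Leb}(\varphi_{\beta_c}) < 0$, which is the required contradiction. Hence the bifurcation must be non-smooth in the sense of Theorem~\ref{qpf}.

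Openness in $\cC^2$ follows once the quantitative gap established above is shown to depend only continuously on the family: both concavity and the strict inequality controlling the mass of the expanding region are preserved under sufficiently small perturbations in the metric $d$, and the critical parameter $\beta_c$ depends continuously on $(f_\beta) \in \mathcal{F}_\rho$ via the continuous dependence of the invariant graphs $\varphi^\pm_\beta$ for $\beta < \beta_c$. The main obstacle in this plan is precisely the quantitative step in the middle: converting the Diophantine condition on $\rho$ and the geometry of the fibre maps into a strict lower bound on $-\lambda_{\Leb}(\varphi)$ valid for every hypothetical continuous invariant graph $\varphi$. This requires a careful analysis of how continuous coboundaries of the rotation by $\rho$ can distribute relative to the expanding window, and is the ingredient where the precursors \cite{bjerkloev:2005,Jaeger2009CreationOfSNA} supply the key technical input.
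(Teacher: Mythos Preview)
The paper does not prove Theorem~\ref{fg}; it is quoted as a result from \cite{fuhrmann2013NonsmoothSaddleNodesI} (with precursors \cite{bjerkloev:2005,Jaeger2009CreationOfSNA}), so there is no in-paper proof to compare against. Your task here was therefore only to recognise that this is a cited theorem, not to supply a proof.

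That said, since you have sketched an argument, let me comment on it relative to what the cited references actually do. Your two-stage plan (explicit family plus $\cC^2$-openness) is correct at the architectural level, and the choice of the arctan model with large $\alpha$ is exactly the right prototype. However, the mechanism you propose for ruling out the smooth alternative is not the one used, and as written it would not close. You frame the core step as a cohomological obstruction: assume a continuous neutral graph exists, invoke the invariance relation as a coboundary equation over the Diophantine rotation, and derive a contradiction by bounding the time the graph can spend in the expanding window. But the invariance relation $f_{\beta_c,\theta}(\varphi(\theta))=\varphi(\theta+\rho)$ is highly nonlinear in $\varphi$, so there is no genuine cohomological equation here, and Diophantine regularity of $\rho$ does not by itself restrict how a merely continuous $\varphi$ distributes relative to a fixed window. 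The step you flag as ``the main obstacle'' is in fact the entire content of the theorem, and your outline does not indicate how to carry it out.

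The actual proofs in \cite{bjerkloev:2005,Jaeger2009CreationOfSNA,fuhrmann2013NonsmoothSaddleNodesI} proceed constructively rather than by contradiction. One sets up a multiscale induction on a sequence of nested ``critical intervals'' in $\T^d$ on which the forcing pushes the attracting graph strongly downward; the Diophantine condition is used not to control coboundaries but to guarantee that iterates of these critical intervals under $\omega$ stay disjoint for long enough that the inductive geometric estimates (on the height and width of successive ``peaks'' of $\varphi^+_\beta$) propagate. The output is a direct verification that at $\beta_c$ the upper and lower bounding graphs are pinched yet differ on a set of full measure, hence are non-continuous. Openness then follows because the inductive scheme is governed by finitely many explicit $\cC^2$-inequalities on the fibre maps, all of which are stable under perturbation. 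This is the ``explicit $\cC^2$-estimates'' characterisation of $\mathcal U$ alluded to in the paragraph following the theorem.
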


This confirms that the set of parameter families with a non-smooth saddle-node
bifurcation is large in a certain sense, and that the phenomenon can occur in a
robust way (both corresponding to the openness of the set $\cU$). Thereby, it is
important to note that in \cite{fuhrmann2013NonsmoothSaddleNodesI} the set $\cU$
in this result is characterised by explicit $\cC^2$-estimates. This makes it
possible to check if it contains a given parameter family and therefore provides
explicit examples of non-smooth saddle-node bifurcations.
\begin{cor}[\cite{fuhrmann2013NonsmoothSaddleNodesI}]\label{fg1}
If $\rho$ is Diophantine and $\alpha$ is sufficiently large, then the parameter
family $(f_\beta)_{\beta\in[0,1]}$ defined in (\ref{e.discrete_time_examples})
belongs to the set \ $\cU$ and hence undergoes a non-smooth saddle-node bifurcation.
\end{cor}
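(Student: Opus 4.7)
The plan is to verify, for the specific family defined in \eqref{e.discrete_time_examples}, all the explicit $\cC^2$-estimates that characterize the open set $\cU$ from Theorem~\ref{fg}, and then simply invoke Theorem~\ref{fg} to conclude the existence of a non-smooth saddle-node bifurcation. Since the family is an explicit one-parameter family of qpf monotone interval maps, this reduces to checking a finite list of quantitative inequalities, all of which become easy once $\alpha$ is taken large enough.

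First I would pin down the bifurcation region. For the unforced map $x \mapsto \arctan(\alpha x) - \beta$, a straightforward calculation shows that a tangent (fold) bifurcation of fixed points occurs at $\beta=\tilde\beta_c=\arctan(\sqrt{\alpha-1})-\sqrt{\alpha-1}/\alpha$, with the critical fixed point $x^\ast=\sqrt{\alpha-1}/\alpha$. For $\kappa<\tilde\beta_c$ fixed and $\beta$ ranging over a small interval around $\tilde\beta_c$, one chooses constant bounding curves $\gamma^-<\gamma^+$ enclosing $x^\ast$ so that $f_{\beta,\theta}(\gamma^\pm)\leq \gamma^\pm$ (using $F\in[0,1]$ and downward forcing), and so that $\gamma^->0$; this ensures concavity of $\arctan(\alpha\cdot)$ on $[\gamma^-,\gamma^+]$ since $\partial_x^2\arctan(\alpha x)=-2\alpha^3 x/(1+\alpha^2 x^2)^2<0$ for $x>0$. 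Monotonicity in $x$ ($\partial_x f = \alpha/(1+\alpha^2 x^2)>0$) and in $\beta$ ($\partial_\beta f=-1<0$), as well as $\cC^2$-smoothness, are immediate. This verifies assumptions (i)--(vi) of Theorem~\ref{qpf}, so the family does undergo \emph{some} fold bifurcation in the sense of Theorem~\ref{qpf}.

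The main step is to show that this family lies in the open set $\cU$ from Theorem~\ref{fg}, for which \cite{fuhrmann2013NonsmoothSaddleNodesI} provides explicit quantitative criteria. Morally, these criteria require two ingredients: (a) a strong expansion region (where $\partial_x f\gg 1$) through which the invariant graphs can be made to pass, producing the iterated ``peaks'' sketched in Figure~\ref{f.forced_folds_discretetime}; and (b) a controlled contraction region far from $x=0$ ensuring that the upper attracting graph is genuinely attractive. For the family \eqref{e.discrete_time_examples}, both are governed by the same parameter: $\partial_x f(\theta,x)=\alpha/(1+\alpha^2 x^2)$ attains a maximum of order $\alpha$ at $x=0$ and decays like $1/(\alpha x^2)$ away from it, so by taking $\alpha$ large one simultaneously makes the maximal expansion arbitrarily large and the contraction on the complement of a shrinking neighbourhood of $0$ arbitrarily strong. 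One must also use that the forcing amplitude $\kappa\cdot F(\theta)$ is large enough (relative to the distance of the invariant graphs from $0$) to push the attractor into the expansion region for suitable $\theta$; since $\kappa$ is fixed and positive and the invariant graphs collapse towards $x^\ast=\sqrt{\alpha-1}/\alpha\to 0$ as $\alpha\to\infty$, this condition is automatic for $\alpha$ sufficiently large. Making all these inequalities explicit in terms of $\alpha,\kappa,\rho$ and matching them to the conditions of \cite{fuhrmann2013NonsmoothSaddleNodesI} is the main technical task.

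The hard part is precisely this bookkeeping: translating the abstract $\cC^2$-conditions from \cite{fuhrmann2013NonsmoothSaddleNodesI} into explicit estimates on $\alpha$ (and using the Diophantine property of $\rho$ to control the recurrence times needed for the peaks to propagate). The Diophantine hypothesis enters only through Theorem~\ref{fg} and does not require new work here. Once the inequalities are checked, Theorem~\ref{fg} applies directly and yields a non-smooth saddle-node bifurcation for the family \eqref{e.discrete_time_examples}, which is the desired conclusion.
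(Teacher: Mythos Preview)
The paper does not actually prove this corollary; it is stated as a direct citation from \cite{fuhrmann2013NonsmoothSaddleNodesI}, with the paper only remarking that ``the set $\cU$ in this result is characterised by explicit $\cC^2$-estimates'' which ``makes it possible to check if it contains a given parameter family''. Your proposal---verify the explicit $\cC^2$-conditions from \cite{fuhrmann2013NonsmoothSaddleNodesI} for the concrete family~(\ref{e.discrete_time_examples}) by exploiting that $\partial_x f = \alpha/(1+\alpha^2 x^2)$ gives arbitrarily strong expansion near $x=0$ and strong contraction away from it as $\alpha\to\infty$---is exactly the approach the paper alludes to and that the cited reference carries out, so your plan is correct and aligned with the intended argument.

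One minor caution: your remark that ``$\kappa$ is fixed'' while $\alpha$ grows is slightly at odds with the standing assumption $\kappa\in(0,\tilde\beta_c)$ in~(\ref{e.discrete_time_examples}), since $\tilde\beta_c$ itself depends on $\alpha$; however, $\tilde\beta_c\to\pi/2$ as $\alpha\to\infty$, so any fixed $\kappa\in(0,\pi/2)$ eventually satisfies the constraint, and this does not affect the argument.
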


Continuous-time analogues of these results have been established in
\cite{Fuhrmann2016SNAinFlows}. In this case, one considers non-autonomous vector
fields, given by differentiable functions of the form
\begin{equation}\label{e.qpf_vector_fields}
  V : \T^d\times\R \to \R \ . 
\end{equation}
which induce quasiperiodically forced flows via the corresponding differential
equation
\begin{equation} \label{e.qpf_flow_equation}
  x'(t) \ = \ V(\omega_t(\theta_0),x(t)) \ , 
\end{equation}
where $\omega:\R\times\T^d\to\T^d,\ (t,\theta)\mapsto \theta+t\rho$ is an
irrational Kronecker flow with rotation vector $\rho\in\T^d$, as described
above. We let
\begin{equation}
  \cV \ = \ \left\{ (V_\beta)_{\beta\in[0,1]}\mid V \textrm{ is of the form
    (\ref{e.qpf_vector_fields}) and } (\beta,\theta,x)\mapsto V_\beta(\theta,x)
  \textrm{ is } \cC^2\right\} \
\end{equation}
and equip $\cV$ with the metric
\begin{equation*}
 d((V_{\beta})_{\beta\in [0,1]},(W_{\beta})_{\beta\in[0,1]})=\sup_{\beta\in
   [0,1]}(\|V_{\beta}-W_{\beta}\|_{2}+\|\partial_{\beta}V_{\beta}-\partial_{\beta}W_{\beta}\|_{0}).
\end{equation*}

\begin{theorem}[\cite{Fuhrmann2016SNAinFlows}] For any Diophantine $\rho\in\T^d$, 
there exists an open set \ $\cU_\rho\ssq\cV$ such that for any
$(V_\beta)_{\beta\in[0,1]}\in\cV$ the flow induced by
(\ref{e.qpf_vector_fields}) satisfies the assertions of Theorem~\ref{qpf} and
undergoes a non-smooth fold bifurcation.
\end{theorem}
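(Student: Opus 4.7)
The plan is to reduce the continuous-time assertion to the discrete-time result Theorem~\ref{fg} via time-one maps. Fix a Diophantine $\rho \in \T^d$. For each $(V_\beta)_{\beta\in[0,1]} \in \cV$ let $\Xi_\beta$ denote the induced skew product flow and set $f_\beta := \Xi_\beta^1$. Because $\omega^1$ is the rotation by $\rho$ on $\T^d$, each $f_\beta$ is a quasiperiodically forced, strictly increasing, $\cC^2$ interval map of the form (\ref{qpf1}), so $(f_\beta)_{\beta\in[0,1]}$ sits inside $\cF_{\rho}$.

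Next I would prove that the assignment $\Phi:\cV \to \cF_{\rho}$, $(V_\beta)\mapsto(f_\beta)$, is continuous with respect to the two metrics. This is a routine consequence of smooth dependence of ODE solutions on parameters and initial data over the compact time interval $[0,1]$: $\cC^2$-closeness in $(\theta,x)$ and $\cC^1$-closeness in $\beta$ of the vector fields translate into the corresponding closeness of their time-one maps. Theorem~\ref{fg} then furnishes a non-empty open set $\cU' \subseteq \cF_{\rho}$ of families satisfying the assumptions of Theorem~\ref{qpf} and undergoing a non-smooth saddle-node bifurcation, and I would define $\cU_\rho := \Phi^{-1}(\cU')$, which is open in $\cV$.

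It then remains to verify two things. First, that a non-smooth saddle-node of $f_\beta$ lifts to a non-smooth fold of $\Xi_\beta$: given a pair of weakly pinched $f_\beta$-invariant graphs $\varphi^{\pm}$ on $\T^d$, the formula $\tilde\varphi^{\pm}(\omega^s(\theta)) := \xi_\beta^s(\theta,\varphi^{\pm}(\theta))$ extends them to a pair of $\Xi_\beta$-invariant graphs which remains weakly pinched, because the fibre flow is a $\cC^2$-diffeomorphism on $[0,1]$ and therefore pinching at $\theta^{\ast}$ propagates to the entire continuous orbit through $\theta^{\ast}$. The remaining items (i)--(vi) of Theorem~\ref{qpf} for $\Xi_\beta$ follow from the corresponding items for $f_\beta$ via Remark~\ref{rem: flows induced by odes} applied to $V_\beta$. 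Second, one must check $\cU_\rho \neq \emptyset$, for which I would invoke the explicit $\cC^2$-criterion characterising $\cU'$ in \cite{fuhrmann2013NonsmoothSaddleNodesI} (the same criterion underlying Corollary~\ref{fg1}) and construct a vector field whose $\theta$-dependence is sharply concentrated on a short sub-interval of $[0,1]$, so that the time-one map reproduces an arctan-type profile with the large fibre derivatives required by Corollary~\ref{fg1}.

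The main obstacle, I expect, will be this last nonemptiness step. The estimates in \cite{fuhrmann2013NonsmoothSaddleNodesI} are quantitative and delicate, and engineering a $\cC^2$ vector field family that simultaneously satisfies the global monotonicity and concavity hypotheses of Theorem~\ref{qpf} and matches the sharp $\cC^2$-estimates defining $\cU'$ requires a careful balance between the pointwise constraints on $V_\beta$ and the integrated effect over one unit of time. A cleaner alternative, presumably carried out in \cite{Fuhrmann2016SNAinFlows}, is to rederive the peak-creation mechanism directly at the level of the flow, exploiting the additional freedom of continuous driving to bypass the tight constraints imposed by the discretisation.
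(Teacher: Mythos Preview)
The paper does not give its own proof of this theorem; it is quoted from \cite{Fuhrmann2016SNAinFlows} and the surrounding text explicitly says that the argument there ``is considerably more technical than in the discrete-time case'' and that the authors ``refrain from going into more details here''. So there is no in-paper proof to compare against; what can be compared is your outline versus what is known about the approach in \cite{Fuhrmann2016SNAinFlows}.

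Your reduction strategy via time-one maps is natural, and the continuity of $\Phi:\cV\to\cF_\rho$ is indeed routine. However, two of your steps are more delicate than you indicate. First, the lifting of a non-smooth bifurcation from $(f_\beta)$ to $(\Xi_\beta)$ is not just a matter of pushing graphs along the flow: you need that the critical parameters coincide and that a time-one-invariant graph is automatically flow-invariant. Neither is automatic in general; here it can be recovered because Theorem~\ref{qpf} (which applies equally to the continuous-time family once its hypotheses are checked) pins down the invariant graphs uniquely in $\Gamma$, so the flow-invariant graphs must agree with the discrete ones --- but you should argue this rather than the formula $\tilde\varphi^{\pm}(\omega^s(\theta)):=\xi_\beta^s(\theta,\varphi^{\pm}(\theta))$, which is not obviously well-defined on $\T^d$ since a given point has many representations $\omega^s(\theta)$. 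Second, and more seriously, you correctly flag the nonemptiness of $\cU_\rho=\Phi^{-1}(\cU')$ as the crux, but your suggestion of concentrating the $\theta$-dependence on a short time window to mimic an arctan-type map runs into the constraint that time-one maps of scalar ODEs are a restricted class (in particular, the autonomous part of the fibre map is always a flow map), and the sharp $\cC^2$-estimates defining $\cU'$ in \cite{fuhrmann2013NonsmoothSaddleNodesI} were tailored to maps, not to this subclass. There is also the global-existence issue flagged in Remark~\ref{rem: flows induced by odes}: you need the flow to be defined for all $t\in[0,1]$ before $\Phi$ even makes sense, which forces additional growth conditions on $V_\beta$.

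Your final paragraph is on the mark: the route actually taken in \cite{Fuhrmann2016SNAinFlows} is to redo the multiscale peak-creation analysis directly for flows, with estimates formulated at the level of the vector field rather than the time-one map. That bypasses both the nonemptiness obstruction and the lifting subtleties above, at the cost of the additional technicality the present paper alludes to.
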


Again, the explicit characterisation of the set $\cU_\rho$ given in
\cite{Fuhrmann2016SNAinFlows} makes it in principle possible to check for
non-smooth bifurcations in specific examples. However, this is considerably more
technical than in the discrete-time case. Moreover, the application to the
forced Allee model (\ref{e.forced_Allee}) with quasiperiodic forcing
(\ref{e.quasiperiodic_forcing_term_example}) would require a number of highly
non-trivial and technical modifications. Therefore, we refrain from going into
more details here and just point out that Figures~\ref{f.LE_smooth/nonsmooth}(b)
and \ref{f.forced_folds}(e)-(h) provide substantial numerical evidence for the
occurrence of non-smooth fold bifurcations in this case.

\subsection{Random forcing}

In contrast to the quasiperiodic case, the influence of bounded random noise on
saddle-node bifurcations has not been studied systematically so far. Our aim for
the remainder of this section is to establish the occurrence of non-smooth
bifurcations in a broad class of randomly forced monotone flows and maps. To
that end, we introduce the notion of an autonomous reference system. 
Let $\gamma^-<\gamma^+\in\R$ and suppose $(g_\beta)_{\beta\in[0,1]}$ is a one-parameter
family of differentiable flows $g_\beta:\T\times \R\to\R$ with the following
properties (which are supposed to hold for all $\beta\in[0,1],t\in\T$ and
$x\in\R$, where applicable).
\begin{itemize}
\item[(g1)] $g_0$ has two fixed points in the interval $[\gamma^-,\gamma^+]$,
  whereas $g_1$ has none;
\item[(g2)] $g^t_\beta(\gamma^\pm)\leq \gamma^\pm$;
\item[(g3)] $\partial_x g^t_\beta(x) > 0$; 
\item[(g4)] the mapping $(\beta,t,x)\mapsto g^t_\beta(x)$ is continuous;
\item[(g5)] the mapping $\beta\mapsto g^t_\beta(x)$ is differentiable and
  $\partial_\beta g^t_\beta(x)<0$;
\item[(g6)] $\partial^2_x g^t_\beta(x)<0$ for all $x\in[\gamma^-,\gamma^+]$ {\em (concavity)}.
\end{itemize}
We call such a family $(g_\beta)_{\beta\in[0,1]}$ an {\em (autonomous) reference
  family}.

\begin{remark}\alphlist
\item
  Properties (g1)--(g6) imply that the family $(g_\beta)_{\beta\in[0,1]}$
  undergoes a fold bifurcation in the interval $[\gamma^-,\gamma^+]$: Due to the
  concavity in (g6), $g_\beta$ can have at most two fixed points in this region,
  with the upper one attracting and the lower one repelling. By (g1), the map
  $g_0$ has two such fixed points. Due to the monotone dependence on the
  parameter assumed in (g5), these two fixed points have to move towards each
  other as $\beta$ is increased. They have to vanish before $\beta=1$, as $g_1$
  has no fixed points, and the only possibility to do so is to collide at a
  unique bifurcation parameter $\beta_c$.
\item If $\T=\Z$ (that is, in the discrete time case), the simplest way to obtain a reference family of this kind
 is to fix some strictly increasing map $g:\R\to\R$ such that $g$
  maps the points $\gamma^\pm$ below themselves, is strictly concave on
  $[\gamma^-,\gamma^+]$ and has two fixed points in $[\gamma^-,\gamma^+]$, but
  $g-1$ does not have any fixed points in this interval. Then $g_\beta=g-\beta$
  satisfies the above properties. \listend
\end{remark}

Our main result of this section now states that under some mild conditions, any
random perturbation of such a reference family will undergo a non-smooth fold
bifurcation.

\begin{theorem} \label{t.random_nonsmoothness}
  Suppose that $(\Theta,\cB,\nu,\omega)$ is an mpds and
  $(\Xi_\beta)_{\beta\in[0,1]}$ is a parameter family of $\omega$-forced monotone flows
  that satisfies the assumptions of Theorem~\ref{random} with constant curves
  $\gamma^\pm$. Further, assume that $(g_\beta)_{\beta\in [0,1]}$ is an
  autonomous reference family such that the following conditions hold for all $\beta \in [0,1]$.
  \begin{itemize}
  \item[(i)] For all $x\in X$, $t>0$ and $\nu$-almost all $\theta\in\Theta$ we have
    $g_\beta^t(x)\leq \xi_\beta^t(\theta,x)$. {\em (Lower bound)}
  \item[(ii)] For all $\eps,T>0$ there exists a set $A_{\eps,T}\ssq\Theta$ of positive measure
    $\nu(A_{\eps,T})>0$ such that 
    \begin{equation}
      |\xi^t_\beta(\theta,x)-g_\beta(x)| \ \leq \ \eps \eqand |\partial_x
      \xi^t_\beta(\theta,x)-\partial_x g^t_\beta(x)| \ \leq  \ \eps 
    \end{equation}
    for all $\theta\in A_{\eps,T},\ |t|\leq T$ and $x\in[\gamma^-,\gamma^+]$. {\em (Shadowing)}
  \item[(iii)] For $\nu$-almost every $\theta\in\Theta$ there exists $t\in\T$ and
    $\delta>0$ such that $\xi_\beta^t(\omega^{-t}(\theta),x)\geq g_\beta^t(x)+\delta$ for
    all $x\in[\gamma^-,\gamma^+]$.  {\em (Separation)}
  \end{itemize}
Then $(\Xi_\beta)_{\beta\in[0,1]}$ undergoes a non-smooth fold bifurcation and
the bifurcation parameter $\beta_c$ is the same as in the reference family
$(g_\beta)_{\beta\in[0,1]}$.
\end{theorem}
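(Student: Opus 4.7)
The plan is to identify the bifurcation parameter $\beta_*$ of $(\Xi_\beta)_{\beta\in[0,1]}$ furnished by Theorem \ref{random} with the reference parameter $\beta_c$, and then to rule out the smooth alternative of the ensuing dichotomy at $\beta=\beta_c$ by means of the separation condition (iii). Throughout, let $x^*$ denote the neutral fixed point of $g_{\beta_c}$ and, for $\beta<\beta_c$, let $x_\beta^-<x_\beta^+$ denote the two fixed points of $g_\beta$ in $[\gamma^-,\gamma^+]$, which exist by (g1) and (g6).

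For $\beta_*\geq\beta_c$ I would fix $\beta<\beta_c$ and use the lower bound (i) together with the forward invariance from Theorem \ref{random}(ii) and monotonicity to show that $\varphi_\beta^+(\theta):=\lim_{s\to\infty}\xi_\beta^s(\omega^{-s}\theta,\gamma^+)$ defines a $(\Xi_\beta,\nu)$-invariant graph bounded below by $x_\beta^+$ almost surely; the mere existence of an invariant graph at $\beta$ combined with the trichotomy of Theorem \ref{random} yields $\beta\leq\beta_*$. For the reverse inequality $\beta_*\leq\beta_c$ I would fix $\beta>\beta_c$ and choose $T>0$ so large that $g_\beta^T(\gamma^+)<\gamma^-$, which is possible since $g_\beta$ has no fixed points in $[\gamma^-,\gamma^+]$ and the autonomous orbit from $\gamma^+$ is monotone decreasing. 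The shadowing condition (ii) with $\epsilon$ sufficiently small then provides a set $A_{\epsilon,T}$ of positive $\nu$-measure on which $\xi_\beta^T(\theta,\gamma^+)<\gamma^-$, so any $(\Xi_\beta,\nu)$-invariant graph $\varphi$ in $\Gamma$ would satisfy $\varphi(\omega^T\theta)=\xi_\beta^T(\theta,\varphi(\theta))\leq\xi_\beta^T(\theta,\gamma^+)<\gamma^-$ on $A_{\epsilon,T}$, contradicting $\varphi\geq\gamma^-$. Hence $\beta_*=\beta_c$.

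At $\beta=\beta_c$, Theorem \ref{random} gives either a unique invariant graph (smooth) or a measurably pinched pair (non-smooth). To exclude the smooth case, I would construct the upper invariant graph $\varphi^+$ at $\beta_c$ by the same pullback, note $\varphi^+\geq x^*$ almost surely, and then apply (iii) at $x=x^*$ (using $g_{\beta_c}^t(x^*)=x^*$) together with monotonicity and invariance to obtain $\varphi^+(\theta)\geq\xi_{\beta_c}^{t(\theta)}(\omega^{-t(\theta)}\theta,x^*)\geq x^*+\delta(\theta)$, so $\varphi^+>x^*$ almost surely. For the lower graph, I would pass to a subsequence $\beta_n\uparrow\beta_c$ along which the unstable invariant graphs $\varphi_{\beta_n}^-$ (furnished by Theorem \ref{random} since $\beta_n<\beta_c=\beta_*$) have weakly convergent associated measures on $\Theta\times X$; by continuity of $\Xi_\beta$ in $\beta$ the limit is $\Xi_{\beta_c}$-invariant with correct marginal, and by Theorem \ref{1} it corresponds to a $(\Xi_{\beta_c},\nu)$-invariant graph $\varphi^-$. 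Using (ii) for each $\beta_n$ to force random orbits slightly above $x_{\beta_n}^-$ to approach $x_{\beta_n}^+$ on a positive-measure set, I would derive the bound $\varphi_{\beta_n}^-\leq x_{\beta_n}^-+\epsilon$ on such a set, and pass it to the limit to obtain $\varphi^-\leq x^*$ on a set of positive $\nu$-measure. Together with $\varphi^+>x^*$ almost surely, this gives $\varphi^-<\varphi^+$ on a set of positive measure, contradicting the uniqueness postulated in the smooth case.

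The main obstacle I anticipate is the construction and control of the lower graph $\varphi^-$ at $\beta=\beta_c$: one must pass a pointwise bound $\varphi_{\beta_n}^-\leq x_{\beta_n}^-+\epsilon$ which only holds on positive-measure shadowing sets $A_{\epsilon_n,T_n}$ (possibly varying with $n$) through a weak limit of measures, while simultaneously ensuring the limit measure is supported on a graph via Theorem \ref{1} and preserving invariance $\nu$-almost surely via the continuity assumption (iii) of Theorem \ref{random} and the integrability (vii). A careful diagonal choice of $(\beta_n,\epsilon_n,T_n)$ keeping the measures $\nu(A_{\epsilon_n,T_n})$ bounded below by a positive constant along the subsequence should make this work, but it is the delicate technical heart of the argument.
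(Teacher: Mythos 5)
Your identification $\beta_*=\beta_c$ and your treatment of the upper graph follow the paper's proof essentially verbatim: condition (i) plus monotonicity in $\beta$ gives existence of an invariant graph for $\beta\le\beta_c$, shadowing (ii) kills all graphs for $\beta>\beta_c$, and the separation condition (iii) applied at the pullback limit yields $\varphi^+>x^*$ almost surely. Up to that point the proposal is sound.

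The lower graph is where you go wrong, and it is a genuine gap rather than a mere technicality. First, your weak-limit scheme is not even well-posed in the stated generality: $(\Theta,\cB,\nu,\omega)$ is only an mpds, so there is no topology on $\Theta\times X$ in which to speak of weak convergence of the measures $\mu_{\varphi^-_{\beta_n}}$. Second, the intermediate bound $\varphi^-_{\beta_n}\le x^-_{\beta_n}+\eps$ on a positive-measure set does not follow from shadowing in the way you suggest: forward shadowing of $g_{\beta_n}$ pushes orbits towards the \emph{attracting} fixed point $x^+_{\beta_n}$, and backward shadowing started from $\gamma^+$ leaves the interval; neither produces an upper bound near the repelling fixed point. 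The correct (and much shorter) argument is the one the paper uses: the repeller at $\beta_c$ is obtained directly by the pushforward construction $\varphi^-(\theta)=\lim_{t\to\infty}\xi^{-t}_{\beta_c}(\omega^t(\theta),\gamma^-)$ of Remark~\ref{r.bounding_graphs}, and condition (i), read backwards in time via monotonicity of the fibre maps ($g^t_{\beta_c}(x)\le\xi^t_{\beta_c}(\theta,x)$ for $t>0$ is equivalent to $\xi^{-t}_{\beta_c}(\theta,x)\le g^{-t}_{\beta_c}(x)$), gives $\xi^{-t}_{\beta_c}(\omega^t(\theta),\gamma^-)\le g^{-t}_{\beta_c}(\gamma^-)\nearrow x^*$, hence $\varphi^-\le x^*$ $\nu$-almost surely. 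No limit in $\beta$, no weak convergence, and no shadowing are needed for this half; combined with $\varphi^+> x^*$ a.s.\ this rules out the smooth alternative outright.
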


\begin{remark}
  Note that by construction the forced Allee model (\ref{e.forced_Allee}) with
  random forcing term~(\ref{e.random_forcing_term_example}) satisfies the
  assumptions of Theorem~\ref{t.random_nonsmoothness}, with the unforced Allee
  model as a reference family. This then implies
  Theorem~\ref{t.random_nonsmooth}.
\end{remark}
In order to prove Theorem~\ref{t.random_nonsmoothness}, we first provide the
following auxiliary statement about the equivalence of the existence of
invariant graphs and the existence of orbits that remain in the region
$\Gamma=\Theta\times[-\gamma,\gamma]$ at all times.

\begin{lem}\label{lem01}
Suppose $\Xi$ is a monotone skew product flow of the form
(\ref{e.skew_product_flow}) with an mpds $(\Omega,\cB,\nu,\omega)$ in the
base. Further, assume that there exist measurable curves
$\gamma^-\leq\gamma^+:\Omega\to X$ that satisfy
\begin{equation}\label{mapbelow}
  \xi^t(\theta,\gamma^\pm(\theta)) \ \leq \ \gamma^\pm(\omega(\theta))
\end{equation}
for $\nu$-almost every $\theta\in\Theta$ and all $t\geq 0$. Let
\begin{equation}\label{e.boundary_condition}
  \Gamma \ = \ \{(\theta,x)\mid \theta\in\Theta,\ \gamma^-(\theta)\leq x \leq
  \gamma^+(\theta) \} \ .
\end{equation}
Then there exists a $(\Xi,\nu)$-invariant graph $\varphi$ in $\Gamma$ if and
only if
\begin{equation} \label{e.lowerboundedness}
   \xi^t(\theta,\gamma^+(\theta)) \ \geq \ \gamma^-(\omega^t(\theta))
\end{equation}
holds for all $t\geq 0$ and $\nu$-almost every $\theta\in\Omega$.
\end{lem}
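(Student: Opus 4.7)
The plan is to prove the two implications separately. The \emph{only if} direction is short: if $\varphi$ is a $(\Xi,\nu)$-invariant graph contained in $\Gamma$, then $\varphi\leq \gamma^+$ and, by monotonicity of the fibre maps together with invariance, one has
\[
\xi^t(\theta,\gamma^+(\theta)) \ \geq \ \xi^t(\theta,\varphi(\theta)) \ = \ \varphi(\omega^t(\theta)) \ \geq \ \gamma^-(\omega^t(\theta))
\]
for $\nu$-almost every $\theta$ and all $t\geq 0$, which is exactly \eqref{e.lowerboundedness}. The substantive direction is the converse, where the plan is to construct an invariant graph by pulling back the upper boundary curve $\gamma^+$ along the flow.

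Concretely, I would set
\[
\Phi_t(\theta) \ := \ \xi^t\bigl(\omega^{-t}(\theta),\gamma^+(\omega^{-t}(\theta))\bigr), \qquad t\geq 0,
\]
and first show that $t\mapsto \Phi_t(\theta)$ is monotonically decreasing $\nu$-almost surely. This should follow by writing $\xi^{t+s}(\omega^{-t-s}(\theta),\cdot)=\xi^t(\omega^{-t}(\theta),\cdot)\circ \xi^s(\omega^{-t-s}(\theta),\cdot)$, applying the upper-boundary property \eqref{mapbelow} at $\omega^{-t-s}(\theta)$ to bound $\xi^s(\omega^{-t-s}(\theta),\gamma^+(\omega^{-t-s}(\theta)))\leq \gamma^+(\omega^{-t}(\theta))$, and then invoking monotonicity of $\xi^t(\omega^{-t}(\theta),\cdot)$. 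The lower bound $\Phi_t(\theta)\geq \gamma^-(\theta)$ is precisely \eqref{e.lowerboundedness} applied at $\omega^{-t}(\theta)$, so the pointwise monotone limit
\[
\varphi(\theta) \ := \ \lim_{t\to\infty}\Phi_t(\theta)
\]
exists $\nu$-almost surely, is measurable as a monotone limit of measurable functions, and lies in $[\gamma^-(\theta),\gamma^+(\theta)]$. Invariance should then drop out by exchanging the limit with the fibre map: continuity of $\xi^s(\theta,\cdot)$ and the flow property give
\[
\xi^s(\theta,\varphi(\theta)) \ = \ \lim_{t\to\infty}\xi^s(\theta,\Phi_t(\theta)) \ = \ \lim_{t\to\infty}\Phi_{t+s}(\omega^s(\theta)) \ = \ \varphi(\omega^s(\theta)),
\]
where I use $\omega^{-(t+s)}(\omega^s(\theta))=\omega^{-t}(\theta)$.

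The main obstacle I foresee is purely measure-theoretic bookkeeping: I need a common full-measure set on which \eqref{mapbelow} and \eqref{e.lowerboundedness} hold and on which $\omega^{-t}(\theta)$ is well-defined simultaneously for all relevant $t\geq 0$. In discrete time this is just a countable intersection of full-measure sets and presents no difficulty. In continuous time, the plan is to first obtain a full-measure set on which the desired inequalities hold for all rational $t\geq 0$, and then to use monotonicity of $t\mapsto\Phi_t(\theta)$ together with continuity of the flow in $t$ to upgrade from rational to arbitrary $t\geq 0$. Beyond this, the construction is the standard ``pullback attractor'' argument familiar from non-autonomous dynamics, and I do not anticipate conceptual surprises.
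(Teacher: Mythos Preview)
Your proposal is correct and follows essentially the same approach as the paper: both directions are argued identically, with the substantive direction carried out via the pullback construction $\Phi_t(\theta)=\xi^t(\omega^{-t}(\theta),\gamma^+(\omega^{-t}(\theta)))$ (the paper calls this the graph transform $\Xi^t_*\gamma^+$), shown to be monotone decreasing and bounded below by $\gamma^-$, with invariance of the limit obtained by continuity of the fibre maps. Your additional remarks on the measure-theoretic bookkeeping in continuous time are a welcome bit of extra care that the paper leaves implicit.
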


\begin{proof} An important ingredient for the proof are the {\em graph transforms} $\Xi^t_*\gamma$ of
a measurable function $\gamma:\Theta\to X$.
For $t\in\T$, these are defined by
\begin{equation}\label{grtrans}
\Xi^t_{*}\gamma(\theta) \ =
\ \xi^t(\omega^{-t}(\theta),\gamma(\omega^{-t}(\theta))) \ .
\end{equation}
If $t\geq 0$, we speak of a {\em forwards transform} and if $t\leq 0$, of
a {\em backwards transform}. 
 We define
\begin{equation}\label{sequences}
 \gamma^{+}_{t}=\Xi^t_{*}\gamma^{+}~\textrm{and}~\gamma^{-}_{t}=\Xi^{-t}_{*}\gamma^{-}.
\end{equation}
Then (\ref{mapbelow}) together with the monotonicity of the fibre maps implies
that the family of functions $ \gamma^{+}_{t}$ is decreasing in $t$. Similarly,
the family $\gamma^{-}_{t}$ is increasing (note here that
$\xi^t(\theta,\gamma^-(\theta))\leq \gamma^-(\theta)$ for $t>0$ implies
$\xi^{t}(\theta,\gamma^-(\theta))\geq\gamma^-(\theta)$ for $t<0$).

 Suppose now that (\ref{e.lowerboundedness}) holds for all $t>0$ and
 $\nu$-almost every $\theta\in\Theta$. Then $\gamma^+_t$ is bounded below by
 $\gamma^-$ for all $t>0$ and thus converges $\nu$-almost everywhere to a
 function
\begin{equation}
  \varphi^{+}(\theta) \ = \ \lim_{t\to\infty}\gamma^+_t(\theta) \ .
\end{equation}
Due to the continuity of the fibre maps, we have that
\begin{equation}\label{invariance}
  \begin{split}
    \xi^s(\theta,\varphi^{+}(\theta)) & = \ \xi^s\left(\theta,\lim_{t\to\infty} \gamma^{+}_{t}(\theta)\right)
 \  = \ \lim_{t\to\infty} \xi^s(\theta,\gamma^{+}_{t}(\theta))  \\ & = \ \lim_{t\to\infty}
  \gamma^{+}_{t+s}(\omega^s(\theta)) \ = \ \varphi^{+}(\omega^s(\theta))
  \end{split}
\end{equation}
$\nu$-almost everywhere.
Hence, $\varphi^+$ is the desired invariant graph.

 Conversely, assume that there exists an invariant graph $\varphi$ in $\Gamma$.
 Then the monotonicity of the fibre maps gives
\begin{equation}
 \xi^t(\theta,\gamma^{+}(\theta))\geq \xi^t(\theta,\varphi(\theta)) \ =
 \ \varphi(\omega^t(\theta))\ \geq \ \gamma^-(\omega^t(\theta))
\end{equation}
for $\nu$-almost every $\theta\in\Theta$. 
\end{proof}

\begin{remark}\label{r.bounding_graphs}
  As we have seen in the above proof, if $\Xi$ satisfies the assumptions of
  Lemma~\ref{lem01} and has at least one invariant graph, then the formula
\begin{equation}
  \label{e.bounding_graph}
\varphi^{+}(\theta) \ = \ \lim_{t\to\infty}\gamma^+_t(\theta) \ =
\ \lim_{t\to\infty} \xi^t(\omega^{-t}(\theta),\gamma^+(\omega^{-t}(\theta)))
\end{equation}
yields one such graph. This way of defining an invariant graph is called {\em
  pullback construction} and generally works if the graph is an attractor.  In a
similar fashion, it is possible to show that the {\em pushforward construction}
\begin{equation}
  \label{e.lower_bounding_graph}
\varphi^{-}(\theta) \ = \ \lim_{t\to-\infty}\gamma^-_t(\theta) \ =
\ \lim_{t\to\infty} \xi^{-t}(\omega^{t}(\theta),\gamma^-(\omega^{t}(\theta)))
\end{equation}
also defines an invariant graph, which usually is a repeller.  However, the
graphs $\varphi^-$ and $\varphi^+$ may coincide, as in the case of a smooth fold
bifurcation (see Theorem~\ref{random}).
\end{remark}

We can now turn to the 

\begin{proof}[Proof of Theorem~\ref{t.random_nonsmoothness}.] Let $\beta_c$ be the bifurcation
parameter for the family $(\Xi_\beta)_{\beta\in[0,1]}$ and $\tilde\beta_c$ the
one for the reference family $(g_\beta)_{\beta\in[0,1]}$. 
We first show that $\beta_c=\tilde\beta_c$.
Denote the unique fixed point of 
$g_{\tilde\beta_c}$ in $[\gamma^-,\gamma^+]$ by $x_0$. 
Then, for
$\nu$-almost all $\theta\in\Theta$ and all $\beta<\tilde \beta_c$, we obtain
\[
\xi^t_\beta(\theta,\gamma^+) \ \geq \ \xi^t_{\tilde \beta_c}(\theta,\gamma^+)
\ \geq \ g^t_{\tilde \beta_c}(\gamma^+) \ \geq \ g^t_{\tilde\beta_c}(x_0) \ =
\ x_0 \ \geq \ \gamma^- \ .
\]
Hence, Lemma~\ref{lem01} implies that $\Xi_{\beta}$ has at least one invariant
graph in $\Gamma$ for all $\beta\leq\tilde\beta_c$, and thus $\beta_c\geq
\tilde\beta_c$.

Conversely, suppose that $\beta>\tilde\beta_c$. As $g_\beta$ has no fixed points
in $[\gamma^-,\gamma^+]$ and $g^t_\beta(\gamma^+)<\gamma^+$ for all $t\geq 0$,
we obtain that $g_\beta^T(\gamma^+)<\gamma^-$ for some $T>0$. Let $\eps>0$ be
such that $g_\beta^T(\gamma^+)<\gamma^--\eps$.  By assumption, the set
$A_{\eps,T}$ in the statement of the theorem has positive measure. For any
$\theta\in A_{\eps,T}$, we obtain
\[
\xi_\beta^T(\theta,\gamma^+) \ \leq \ g^T_\beta(\gamma^+)+\eps \  < \ \gamma^-
\ .
\]
Due to Lemma~\ref{lem01}, this excludes the existence of an invariant graph in
$\Gamma$ for $\beta>\beta_c$.  We therefore obtain $\beta_c\leq
\tilde\beta_c$. Together with the above, this yields $\beta_c=\tilde\beta_c$.

It remains to show the non-smoothness of the bifurcation. 
To that end, we set
\[
\varphi^+(\theta) \ = \ \lim_{t\to\infty} \xi_{\beta_c}^t(\omega^{-t}(\theta),\gamma^+)
\quad \text{ and } \quad 
\varphi^-(\theta) \ = \ \lim_{t\to\infty} \xi_{\beta_c}^{-t}(\omega^t(\theta),\gamma^-) \ . 
\]
As discussed in Remark~\ref{r.bounding_graphs}, $\varphi^+$ and $\varphi^-$ are
well-defined invariant graphs.  To finish the proof, we have to show that
$\varphi^-(\theta)<\varphi^+(\theta)$ $\nu$-almost surely.

To that end, observe that condition (i) in Theorem~\ref{t.random_nonsmoothness}
together with the monotonicity of the fibre maps implies
$\xi^t_{\beta_c}(\theta,x)\leq g^t_{\beta_c}(x)$ for all
$x\in[\gamma^-,\gamma^+]$, all $t<0$ and $\nu$-almost all $\theta\in \Theta$.
As a consequence, we have $\varphi^-(\theta) \leq x_0$ almost surely.  Now, by
assumption (iii) we have that for $\nu$-almost every $\theta\in\Theta$ there
exists $\delta>0$ and $s\in\T$ such that
$\xi^s_{\beta_c}(\omega^{-s}(\theta),x_0)>g^s_{\beta_c}(x_0)+\delta=x_0+\delta$. Thus,
we obtain
\begin{eqnarray*}
\varphi^+(\theta) & = & \lim_{t\to\infty} \xi^t_{\beta_c}(\omega^{-t}(\theta),\gamma^+)
\\ & = & \lim_{t\to\infty}
\xi^s_{\beta_c}(\omega^{-s}(\theta),\xi^{t-s}_{\beta_c}(\omega^{-t}(\theta),\gamma^+))
\geq\lim_{t\to\infty}
\xi^s_{\beta_c}(\omega^{-s}(\theta),g^{t-s}_{\beta_c}(\gamma^+))
\\ & \geq &
\xi^s_{\beta_c}(\omega^{-s}(\theta),x_0) \ \geq \ x_0+\delta \
\end{eqnarray*}
and hence, in particular, $\varphi^+(\theta)>x_0\geq \varphi^-(\theta)$. This
finishes the proof.
\end{proof}

\section{Lyapunov exponents in nonsmooth fold bifurcations}
\label{LyapunovExponents}

\subsection{Lyapunov gap in nonsmooth fold bifurcations} The aim of this section is to provide a 
proof of Theorem~\ref{t.lyapunov_gap}, which we restate here in a more general
form.
\begin{theorem} \label{t.lyapunov_gap2} Suppose that $(\Xi_\beta)_{\beta\in[0,1]}$ is a parameter family of forced
 monotone $\mathcal C^2$-flows that satisfies the assumptions of Theorem~\ref{qpf} (for
 deterministic forcing) or Theorem~\ref{random} (for random forcing). Further,
 assume that the fold bifurcation that occurs in this family at the critical
 parameter $\beta_c$ is non-smooth. Then
   \begin{equation}
     \label{e.lyapunov_gap2} \lim_{\beta\nearrow\beta_c} \lambda(\varphi^+_\beta) \ = \ \lambda(\varphi^+_{\beta_c}) \ < \  0 \  . 
  \end{equation}
  If the fold bifurcation is smooth, then $\lim_{\beta\nearrow\beta_c}
  \lambda(\varphi^+_\beta) = 0$.  The analogous results hold for the unstable
  equilibrium $\varphi^-_\beta$.
\end{theorem}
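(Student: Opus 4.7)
The plan is to combine a monotone convergence argument for the invariant graphs themselves with the dominated convergence theorem applied to an integral representation of the Lyapunov exponent; the sign statements then follow immediately from Theorems~\ref{qpf} and~\ref{random}. The first step is to show that the family $(\varphi^+_\beta)_{\beta \leq \beta_c}$ is monotonically decreasing in $\beta$, with pointwise (respectively $\mu$-almost sure) limit $\varphi^+_{\beta_c}$ as $\beta \nearrow \beta_c$. The monotonicity $\partial_\beta \xi^t_\beta < 0$ (condition (v) of Theorems~\ref{qpf} and~\ref{random}) propagates through the pullback construction of Remark~\ref{r.bounding_graphs} to give $\varphi^+_\beta \geq \varphi^+_{\beta'}$ whenever $\beta \leq \beta'$, so the graphs decrease to some measurable limit $\tilde\varphi \geq \gamma^-$. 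Joint continuity of the flow in $(\beta,\theta,x)$ (condition (iii)) makes $\tilde\varphi$ a $\Xi_{\beta_c}$-invariant graph in $\Gamma$; since $\tilde\varphi \geq \varphi^+_{\beta_c}$ by construction and $\tilde\varphi \leq \varphi^+_{\beta_c}$ by the maximality of the upper invariant graph at $\beta_c$ (again via pullback), we conclude $\tilde\varphi = \varphi^+_{\beta_c}$.

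The second step is to rewrite the Lyapunov exponent as a fibre integral. Applying Birkhoff's ergodic theorem to the time-$1$ map of $\Xi_\beta$ equipped with the graph measure $\mu_{\varphi^+_\beta}$ (as defined in (\ref{e.graph_measure})), the defining limit (\ref{lyapunov}) reduces to
\[
\lambda_\mu(\varphi^+_\beta) \;=\; \int_\Theta \log \partial_x \xi^1_\beta\bigl(\theta, \varphi^+_\beta(\theta)\bigr)\, d\mu(\theta),
\]
with the discrete-time analogue $\int \log|\partial_x f_\beta(\theta, \varphi^+_\beta(\theta))|\,d\mu$. The integrands converge $\mu$-a.s.\ as $\beta \nearrow \beta_c$ by Step~1 and the joint continuity of $\partial_x \xi^1_\beta$; they are uniformly bounded on the compact set $[0,1]\times\Theta\times[\gamma^-,\gamma^+]$ in the deterministic setting, and in the random case are integrably dominated by the function $\eta$ from condition (vii). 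Dominated convergence then yields $\lambda(\varphi^+_\beta) \to \lambda(\varphi^+_{\beta_c})$.

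The sign of this limit is already contained in the bifurcation theorems themselves: in the non-smooth case $\varphi^+_{\beta_c}(\theta) > \varphi^-_{\beta_c}(\theta)$ $\mu$-almost surely, so $\lambda(\varphi^+_{\beta_c}) < 0$, whereas in the smooth case $\varphi^+_{\beta_c} = \varphi_{\beta_c}$ with Lyapunov exponent zero. The analogous statement for $\varphi^-_\beta$ is obtained by replacing the pullback (\ref{e.bounding_graph}) with the push-forward (\ref{e.lower_bounding_graph}) throughout and reversing the monotonicities. The main technical obstacle lies in Step~1: in the non-smooth case $\varphi^+_{\beta_c}$ is only upper semi-continuous and defined up to a $\mu$-null set, so one must take care to identify the monotone limit $\tilde\varphi$ with the canonical representative. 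The argument above circumvents this by invoking the maximality of $\varphi^+_{\beta_c}$ among all $\Xi_{\beta_c}$-invariant graphs in $\Gamma$, which pins down the relevant equivalence class uniquely.
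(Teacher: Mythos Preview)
Your proposal is correct and follows essentially the same approach as the paper: establish that $\varphi^+_\beta$ decreases monotonically to a limit, verify this limit is $\Xi_{\beta_c}$-invariant via continuity in $(\beta,x)$, identify it with $\varphi^+_{\beta_c}$ by maximality, and conclude by dominated convergence. Your write-up is in fact slightly more explicit than the paper's in spelling out the integral representation of the Lyapunov exponent and the dominating function (compactness in the deterministic case, condition~(vii) in the random case).
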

\proof We only consider the deterministic case, as the random case can be dealt
with in a completely analogous way. Let $(\Xi_\beta)_{\beta\in[0,1]}$ satisfy
the assumptions of Theorem~\ref{qpf}.  We claim that for each $\theta\in \Theta$
we have that $\varphi^+_{\beta_c}(\theta)$ coincides with
\[
\varphi(\theta) \ = \ \lim_{\beta\nearrow\beta_c} \varphi^+_\beta(\theta) \ .
\]
Note that $\varphi$ is well-defined due to the monotone dependence of
$\xi_\beta^t$ on $\beta$ (see assumption (v) in Theorem~\ref{qpf}) which results
in $\varphi_\beta\geq \varphi_{\beta'}$ whenever $\beta<\beta'\leq \beta_c$.

In order to see that $\varphi$ is an invariant graph, fix $\theta\in \Theta$,
$t>0$ and $\eps>0$. Choose $\delta>0$ such that
$|\xi^t_{\beta_c}(\varphi(\theta)) - \xi^t_{\beta_c}(x)|<\eps$ for all $x\in
B_\delta(\varphi(\theta))$ and at the same time
$|\xi_\beta^t(x)-\xi_{\beta'}^t(x)|<\eps$ whenever $|\beta-\beta'|<\delta$. Note
that such $\delta$ exists due to the uniform continuity of $(\beta,x)\mapsto
\xi^t_\beta(\theta,x)$.  Let $\beta<\beta_c$ be such that
$\beta_c-\beta<\delta$, $\varphi^+_\beta(\theta)-\varphi(\theta)<\delta$ and
$\varphi^+_\beta(\omega^t(\theta))-\varphi(\omega^t(\theta))<\eps$.  We obtain
\begin{eqnarray*}
|\xi^t_{\beta_c}(\varphi(\theta)) - \varphi(\omega^t(\theta))| & \leq &
\ |\xi^t_\beta(\varphi(\theta))-\varphi^+_\beta(\omega^t(\theta))| + 2\eps \\ &
= & \ |\xi^t_\beta(\varphi(\theta))-\xi^t_\beta(\varphi^+_\beta(\theta))| +
2\eps \ \leq \ 3\eps \ .
\end{eqnarray*}
As $\eps>0$ was arbitrary, this proves
$\xi^t_{\beta_c}(\varphi(\theta))=\varphi(\omega^t(\theta))$ and hence the
invariance of $\varphi$ under $\Xi_{\beta_c}$.  Now, since the graphs
$\varphi^+_{\beta_c}$ are monotonically decreasing in $\beta$, we have
$\varphi\geq \varphi^+_{\beta_c}$. As there is no $\Xi_{\beta_c}$-invariant
graph above $\varphi^+_{\beta_c} $ in the considered region $\Gamma$, we obtain
$\varphi^+_{\beta_c} = \varphi$. Using dominated convergence, this proves the
statement about the Lyapunov exponents.  \qed\medskip

\subsection{Slope at the bifurcation point}
\label{Slope}

Although this is not in our main focus, we want to comment in this section on a
particular qualitative difference between non-smooth fold bifurcations in the
quasiperiodically forced and the randomly forced case. As it can be seen from
Figure~\ref{f.LE_smooth/nonsmooth}(b)-(d), the slope of the Lyapunov exponent
of the attractor increases strongly towards the bifurcation in the
quasiperiodically forced case, whereas it only increases slightly
or even remains constant in the random case.

In fact, the heuristic description of non-smooth fold bifurcations in qpf systems
given in Section~\ref{DiscreteTimeModel} suggests that $\partial_\beta\lambda(\varphi^+_\beta)$ should
actually increase to infinity as $\beta\nearrow\beta_c^+$. The reason is that
due to the concavity of the right side of the vector field, the Lyapunov
exponent increases whenever the graph decreases. Thereby, the quantitative
contribution of each peak that develops should be the product between its width
and its speed, which is more or less constant since both decrease, respectively
increase, with the same exponential rate. Hence, every peak contributes a similar
amount to the slope of the Lyapunov exponent, and as there are infinitely many
peaks, this slope grows to infinity as the bifurcation is approached. In
principle, we believe that this heuristic explanation could be made precise by
using the machinery for the proof of non-smooth fold bifurcations in
\cite{fuhrmann2013NonsmoothSaddleNodesI,Fuhrmann2016SNAinFlows} which, however, goes
beyond our current scope.

For the case of random forcing, we provide a proof for the boundedness of the
slope of the Lyapunov exponent of $\varphi^+_\beta$ as $\beta\nearrow \beta_c$.
In order to avoid too many technicalities and to not obstruct the view on the underlying mechanism,
we restrict to the case of the
discrete-time example~(\ref{e.discrete_time_examples}). We note, however, that
the proof can be generalised to broader classes of monotone skew product maps
and, with some more work required, to continuous-time systems.

\begin{theorem}\label{nsmthrand} Suppose
$(f_\beta)_{\beta\in[0,1]}$ is the family of skew product maps given by
  (\ref{e.discrete_time_examples}) with $\Theta=\{0,1\}^\Z$ the Bernoulli space equipped with the shift map $\sigma$
  and the Bernoulli measure $\mu$. 
  Let
  $\beta_c=\arctan(\sqrt{\alpha -1})-\sqrt{\alpha-1}/\alpha-\kappa$. Then
  $(f_{\beta})_{\beta\in [0,1]}$ satisfies the hypothesis of
  Theorem~\ref{random} (with $\gamma^-=0$ and $\gamma^+=\pi/2$) and undergoes a
  non-smooth saddle-node bifurcation with critical parameter $\beta_c$. 
  Moreover, there exists a constant  $C>0$ such
  that 
  \[
   |\partial_\beta \lambda_\mu(\varphi_\beta^\pm)|\ \leq \ C
  \]
 for all $\beta\in [0,\beta_c]$.
\end{theorem}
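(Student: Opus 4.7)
For the first assertion, the plan is to verify the hypotheses of Theorem~\ref{random} and apply Theorem~\ref{t.random_nonsmoothness}. Monotonicity, $\mathcal{C}^2$-regularity, concavity of the fibre maps on $[0,\pi/2]$, and integrability of the logarithmic derivative are all immediate from properties of $\arctan$, while forward invariance of $\gamma^\pm$ follows from the trivial inequalities $f_\beta(\theta,0)\leq 0$ and $f_\beta(\theta,\pi/2)\leq\pi/2$. To establish non-smoothness I would apply Theorem~\ref{t.random_nonsmoothness} with the autonomous reference family $g_\beta(x)=\arctan(\alpha x)-\kappa-\beta$, corresponding to the worst-case forcing $\theta_0=1$; its fold point is exactly the claimed $\beta_c$. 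The lower-bound condition (i) holds since $-\kappa\theta_0\geq-\kappa$; the shadowing condition (ii) is witnessed by the cylinders $\{\theta:\theta_0=\cdots=\theta_{T-1}=1\}$ of positive Bernoulli measure $2^{-T}$; and the separation condition (iii) follows because $\mu$-a.e.\ sequence contains a symbol $0$ in its backward tail, where $g_0>g_1$ strictly.

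For the slope bound, I would first establish the a priori estimate $\varphi_\beta^+(\theta)\geq x_0:=\sqrt{\alpha-1}/\alpha$ for $\mu$-a.e.\ $\theta$ and every $\beta\in[0,\beta_c]$. A direct computation shows that $[x_0,\pi/2]$ is forward-invariant under both fibre maps $g_0^{(\beta_c)}$ and $g_1^{(\beta_c)}$, so the pullback $\varphi_{\beta_c}^+$ takes values in $[x_0,\pi/2]$; monotone dependence on $\beta$ extends the bound to all $\beta\leq\beta_c$. As a consequence the fibre derivative $A_\beta(\theta):=\alpha/(1+\alpha^2\varphi_\beta^+(\theta)^2)$ satisfies $A_\beta\leq 1$ pointwise $\mu$-a.s., which is essential to keep the series appearing below from diverging pointwise.

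Next, differentiating the pullback formula $\varphi_\beta^+(\theta)=\lim_n g_{\theta_{-1}}^{(\beta)}\circ\cdots\circ g_{\theta_{-n}}^{(\beta)}(\gamma^+)$ in $\beta$ yields $\partial_\beta\varphi_\beta^+(\theta)=-P_\beta(\theta)$ with
\[
P_\beta(\theta)\ =\ \sum_{k=0}^\infty\prod_{j=1}^k A_\beta(\sigma^{-j}\theta),
\]
a pointwise $\mu$-a.s.\ finite series by Birkhoff's theorem applied to $\log A_\beta$ (whose integral is the uniformly strictly negative Lyapunov exponent guaranteed by Theorem~\ref{t.lyapunov_gap2}). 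A dominated convergence argument then yields
\[
\partial_\beta\lambda_\mu(\varphi_\beta^+)\ =\ \int\frac{2\alpha^2\varphi_\beta^+}{1+\alpha^2\varphi_\beta^{+\,2}}\,P_\beta\,d\mu,
\]
whose integrand is bounded pointwise by $\alpha\,P_\beta$. Thus the theorem reduces to the uniform estimate $\|P_\beta\|_{L^1(\mu)}\leq C$ for $\beta\in[0,\beta_c]$.

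The main obstacle will be precisely this $L^1$-bound on $P_\beta$. By Fubini and $\sigma$-invariance of $\mu$, $\mathbb{E}[P_\beta]=\sum_{k\geq 0}\mathbb{E}\bigl[\prod_{j=0}^{k-1}A_\beta(\sigma^j\theta)\bigr]$, and the Markov structure in the fibre recasts each summand as $\int T_\beta^k\mathbf{1}\,d\pi_\beta$, where $T_\beta\phi(x)=\tfrac{\alpha}{1+\alpha^2x^2}\cdot\tfrac12\bigl(\phi(g_0(x))+\phi(g_1(x))\bigr)$ and $\pi_\beta$ is the $x$-marginal of the invariant graph measure. The delicate point is that at $\beta=\beta_c$ the map $g_1^{(\beta_c)}$ has a neutral fixed point at $x_0$, so $A_{\beta_c}$ comes arbitrarily close to $1$ on fibres with long all-ones backward tails, producing only polynomial decay $(g_1^k)'\sim k^{-2}$ along such trajectories. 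This is compensated by the exponentially small Bernoulli probability $2^{-k}$ of such blocks and by the fact that each intervening symbol $0$ pushes orbits away from $x_0$ and restores strict contraction; a careful decomposition of trajectories according to the positions of the symbols $0$ should yield uniform geometric decay of $\int T_\beta^k\mathbf{1}\,d\pi_\beta$ in $k$, which is the heart of the argument. The analogous bound for $\varphi_\beta^-$ is then obtained by applying the same scheme to the pushforward construction from Remark~\ref{r.bounding_graphs}.
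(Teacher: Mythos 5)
Your first part (verification of the hypotheses of Theorem~\ref{random} and the application of Theorem~\ref{t.random_nonsmoothness} with reference family $g_\beta = \arctan(\alpha\,\cdot) - \kappa - \beta$) is exactly the paper's argument, and your reduction of the slope bound — a priori estimate $\varphi_\beta^+ \geq x_0$, hence fibre derivatives $A_\beta \leq 1$, differentiation of the pullback to get $\partial_\beta\varphi_\beta^+ = -\sum_{k}\prod_{j=1}^{k}A_\beta(\sigma^{-j}\theta)$, and the resulting need for a uniform $L^1$-bound on this series — also coincides with the paper's Lemma on the graph transforms $\gamma^\pm_{n,\beta}$ and its use in the final computation.

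The issue is that the one step you yourself identify as ``the heart of the argument'' — the uniform geometric decay of $\mathbb{E}\bigl[\prod_{j=1}^{k}A_\beta(\sigma^{-j}\theta)\bigr]$ in $k$, uniformly in $\beta$ up to and including $\beta_c$ — is asserted (``should yield'') rather than proven, and your transfer-operator reformulation does not by itself resolve the neutral-fixed-point degeneracy you correctly diagnose. The missing observation, which closes the gap in one line, is quantitative: whenever $\theta_{-j}=0$, the fibre map applied at step $-j$ is $x\mapsto g(x)-\beta$ with no $\kappa$-term, so for any $x\geq x_0$ and any $\beta\leq\beta_c$ one has $g(x)-\beta \geq g(x_0)-\beta_c = x_0+\kappa$; hence the \emph{next} derivative factor is at most $a := g'(x_0+\kappa) < g'(x_0)=1$, a constant independent of $\beta$. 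All other factors are $\leq 1$. Therefore the product over a backward window of length $k$ is bounded by $a^{\#\{\text{zeros in the window}\}}$, and integrating against the Bernoulli measure gives
\begin{equation*}
\mathbb{E}\Bigl[\prod_{j=1}^{k}A_\beta(\sigma^{-j}\theta)\Bigr] \ \leq \ \sum_{m=0}^{k} a^m \binom{k}{m} 2^{-k} \ = \ \Bigl(\tfrac{1+a}{2}\Bigr)^{k},
\end{equation*}
which is summable with a bound independent of $\beta\in[0,\beta_c]$. This is precisely how the paper concludes; no decomposition into maximal all-ones blocks or analysis of the polynomial escape rate $(g_1^k)'\sim k^{-2}$ is needed — the crude bound ``$\leq 1$ always, $\leq a$ after every $0$'' already beats the Bernoulli weights. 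You should supply this estimate explicitly; without it the $L^1$-bound on $P_\beta$, and hence the theorem, is not established.
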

\noindent Before we turn to the proof of the Theorem \ref{nsmthrand}, we first
need the following preliminary result.
\begin{lem}\label{lem: uniform convergence of iterated boundary lines}
In the situation of Theorem~\ref{nsmthrand} we let
$\gamma^\pm_{n,\beta}=f^{\pm n}_{\beta*}\gamma^{\pm}$, where the graph transform
$f^n_{\beta*}$ is defined as in the proof of Theorem~\ref{lem01}. Then
$\gamma^\pm_{n,\beta}(\theta)$ converges to $\varphi_\beta^\pm(\theta)$
uniformly in $\beta$ and $\theta$ (with $\beta\in [0,\beta_c]$ and $\theta\in
\Sigma$) as $n\to\infty$.

Moreover, for all $\theta\in \Sigma$, the map 
$[0,\beta_c) \ni \beta\mapsto \varphi_{\beta}^\pm(\theta)$ is differentiable
and for every $\beta'\in[0,\beta_c)$, we have that
\begin{align}\label{eq: uniform convergence of derivatives}
\partial_\beta\gamma^\pm_{n,\beta}(\theta)\ = \ -\sum_{i=1}^n\, 
\prod_{\ell=1}^{i-1} \partial_x{f^{\pm1}}_{\beta,\sigma^{-\ell}\theta}(f_{\beta,\sigma^{-n}\theta}^{\pm (n-\ell)}(\gamma^+)) \ 
\stackrel{n\to\infty}{\longrightarrow} \ \partial_\beta\varphi_\beta^\pm(\theta)\end{align}
uniformly in $\beta$ and $\theta$ for all $\beta\in[0,\beta']$ and all
$\theta\in \Sigma$.
\end{lem}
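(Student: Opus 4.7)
The plan is to establish both claims via a concavity-based contraction argument, comparing with the autonomous reference family $g_\beta(x) = \arctan(\alpha x) - \kappa - \beta$, which is the pointwise lower envelope of the fiber maps $f_{\beta,\theta}$ for the upper branch; the lower branch is handled analogously using the (convex, strictly increasing) inverse fiber maps. I describe the argument for $\gamma^+$ only. The key a priori input is the uniform separation $\varphi^+_\beta(\theta) \geq x_1^+(\beta)$ on all of $\Sigma$, where $x_1^+(\beta)$ denotes the upper fixed point of $g_\beta$ for $\beta \leq \beta_c$, with $x_1^+(\beta_c) = x_0 := \sqrt{\alpha-1}/\alpha$. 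This follows from $f_{\beta,\theta} \geq g_\beta$, monotonicity of the fiber maps, and the pullback construction of $\varphi^+_\beta$.

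For the first claim, $\gamma^+_{n,\beta}(\theta)$ is monotonically decreasing in $n$ by condition (ii) of Theorem~\ref{random}, so $\gamma^+_{n,\beta}(\theta) \geq \varphi^+_\beta(\theta)$ for every $n$. Setting $a_n(\theta,\beta) := \gamma^+_{n,\beta}(\theta) - \varphi^+_\beta(\theta)$ and applying a second-order Taylor expansion of $f_{\beta,\theta_{-1}}$ around $\varphi^+_\beta(\sigma^{-1}\theta)$, using that $\partial_x f \leq 1$ on $[x_0,\gamma^+]$ (with equality only at $x_0$ for $\beta = \beta_c$) and that $-\partial_x^2 f$ is uniformly bounded below on the invariant region for all $(\beta,\theta) \in [0,\beta_c] \times \Sigma$, I obtain the recursion
\[
a_{n+1}(\theta,\beta) \leq a_n(\sigma^{-1}\theta,\beta) - c\, a_n(\sigma^{-1}\theta,\beta)^2
\]
with a uniform constant $c > 0$. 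Passing to $W_n := \sup_{(\theta,\beta) \in \Sigma \times [0,\beta_c]} a_n(\theta,\beta)$, the invertibility of $\sigma$ yields $W_{n+1} \leq W_n - c W_n^2$ and hence $W_n = O(1/n)$, which is the desired uniform convergence.

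For the derivative formula, the chain rule applied to $\gamma^+_{n,\beta}(\theta) = f_{\beta,\sigma^{-1}\theta} \circ \cdots \circ f_{\beta,\sigma^{-n}\theta}(\gamma^+)$ together with $\partial_\beta f_{\beta,\theta}(x) \equiv -1$ directly yields the stated telescoping sum. To prove uniform convergence on $[0,\beta'] \times \Sigma$ for any $\beta' < \beta_c$, observe that $\gamma^+_{n-\ell,\beta}(\sigma^{-\ell}\theta) \geq \varphi^+_\beta(\sigma^{-\ell}\theta) \geq x_1^+(\beta') > x_0$, so each factor in the products is bounded by $\rho(\beta') := \alpha/(1+\alpha^2 x_1^+(\beta')^2) < 1$. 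Hence the $i$-th summand is dominated by $\rho(\beta')^{i-1}$ and the series converges absolutely and uniformly by the Weierstrass $M$-test. Combined with the uniform convergence $\gamma^+_{m,\beta} \to \varphi^+_\beta$ from the first claim and continuity of $\partial_x f$, each fixed-$i$ summand converges uniformly in $(\beta,\theta)$ to the corresponding term with $\varphi^+_\beta$ in place of $\gamma^+_{n-\ell,\beta}$, giving a uniform limit $\Phi(\beta,\theta) := -\sum_{i=1}^\infty \prod_{\ell=1}^{i-1} \partial_x f_{\beta,\sigma^{-\ell}\theta}(\varphi^+_\beta(\sigma^{-\ell}\theta))$. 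Differentiability of $\beta \mapsto \varphi^+_\beta(\theta)$ with $\partial_\beta \varphi^+_\beta(\theta) = \Phi(\beta,\theta)$ then follows from the classical theorem on uniform convergence of derivatives.

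The main obstacle will be establishing the polynomial bound at the critical parameter in the first claim: the second-order concavity constant $c$ must be made uniform in $\theta$ (i.e., over both forcing symbols $\theta_{-1} \in \{0,1\}$) and in $\beta \in [0,\beta_c]$, which requires careful control of the distance of the invariant graph from $x_0$ and of the curvature of the fiber maps on the invariant region. The lower case for $\varphi^-_\beta$ is entirely analogous, using the inverse fiber maps (strictly increasing and convex) and the dual reference family whose attracting fixed point from below $x_1^-(\beta)$ satisfies $x_1^-(\beta_c) = x_0$.
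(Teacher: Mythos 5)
Your proposal is correct, and while the derivative part coincides with the paper's argument, the first part goes by a genuinely different route. For the uniform convergence of $\gamma^{\pm}_{n,\beta}$, the paper does not estimate the distance to the limit directly: it bounds the Cauchy increments $|\gamma^+_{n',\beta}(\theta)-\gamma^+_{n,\beta}(\theta)|$ by the corresponding increments of the single scalar orbit $f^n_{\beta_c,\theta^*}(\gamma^+)$ at the worst-case sequence $\theta^*=(\dots,1,1,1,\dots)$, exploiting the exact translation identity $f_{\beta_c,\theta^*}(x)=f_{\beta,\theta}(x)-(\beta_c-\beta)-\kappa(1-\theta_0)$ together with concavity and monotonicity; uniformity in $(\theta,\beta)$ then comes for free from convergence of that one autonomous orbit to the neutral fixed point. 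You instead prove a quantitative statement: the quadratic-decrement recursion $W_{n+1}\leq W_n-cW_n^2$ for the sup of the deviation, giving an explicit $O(1/n)$ rate uniformly up to $\beta_c$. Your route needs two inputs the paper's first part avoids, namely the a priori bound $\varphi^+_\beta\geq x_1^+(\beta)\geq x_0$ (which you correctly derive from the comparison $f_{\beta,\theta}\geq g_\beta$ and the pullback construction; the paper uses the same bound, but only in the derivative part) and a uniform lower bound on $-\partial_x^2$ on the invariant region; in exchange you get a rate. Two small points to tidy up: when passing from the pointwise recursion to $W_{n+1}\leq W_n-cW_n^2$ you should shrink $c$ so that $t\mapsto t-ct^2$ is monotone on $[0,\pi/2]$, and for the lower branch the curvature of the inverse fibre maps is \emph{not} uniformly bounded below near $x=0$ (since $g''(0)=0$), so the "entirely analogous" argument needs the standard patch of combining the quadratic decrement near the neutral point $x_0$ with uniform contraction (derivative at most $1/g'(x_0/2)<1$, say) away from it — neither affects the validity of the scheme. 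The second half of your proof (chain rule with $\partial_\beta f\equiv -1$, geometric bound $g'(x_1^+(\beta'))<1$ on the factors for $\beta\leq\beta'<\beta_c$, Weierstrass $M$-test, and the classical theorem on differentiating uniform limits) is exactly the paper's argument, down to the same constant.
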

\begin{proof}
 We only consider $\gamma^+_{n,\beta}$ and $\varphi_\beta^+$, the statements for
 $\gamma^-_{n,\beta}$ and $\varphi_\beta^-$ follow analogously.  Recall that
\[
  \textstyle f_\beta : \Sigma\times \R \to\Sigma\times\R \quad , \quad
  (\theta,x)\mapsto \left(\omega(\theta),g(x)- \kappa
  \cdot \theta_0 - \beta\right) \ ,
\]
with $g(x)=\arctan(\alpha x)$.

First, observe that
$\partial_x^2 f^{n}_{\beta,\theta}(x)<0$ for all $n\in\N$ as long as
$x,f_{\beta,\theta}(x)\ld f_{\beta,\omega^{n-1}(\theta)}(x)>0$
since the composition of concave increasing functions is again concave.  
Second, note that with
$\theta^*=\ldots 1,1,1\ldots\in \Sigma$, we have
 \begin{align}\label{eq: monotonicity}
 f_{\beta_c,\theta^*}(x)= f_{\beta,\theta}(x)-(\beta_c-\beta)-\kappa(1-\theta_0)
 \end{align}
for all $\beta\in [0,\beta_c]$ and all $\theta\in \Sigma, x\in X$.

 Hence, we obtain that for all $\beta\in[0,\beta_c],\ \theta\in\Sigma$ and all $n,n'\in \N$ with 
 $n\geq n'$ we have
 \begin{align*}
 | \gamma_{n',\beta}^+(\theta)-\gamma_{n,\beta}^+(\theta)| &\ =
 f^{n'}_{\beta,\sigma^{-n'}\theta}(\gamma^+)-
 f^{n'}_{\beta,\sigma^{-n'}\theta}(f^{n-n'}_{\beta,\sigma^{-n}\theta}(\gamma^+)) \\ & \leq\ 
 f^{n'}_{\beta,\sigma^{-n'}\theta}(\gamma^+)-
 f^{n'}_{\beta,\sigma^{-n'}\theta}(f^{n-n'}_{\beta_c,\theta^*}(\gamma^+))\\ &=\ f^{n'-1}_{\beta,\sigma^{1-n'}\theta}(f_{\beta,\sigma^{-n'}\theta}(\gamma^+))-
 f^{n'-1}_{\beta,\sigma^{1-n'}\theta}(f_{\beta,\sigma^{-n'}\theta}(f^{n-n'}_{\beta_c,\theta^*}(\gamma^+)))
 \\ &\leq f^{n'-1}_{\beta,\sigma^{1-n'}\theta}(f_{\beta_c,\theta^*}(\gamma^+))-
 f^{n'-1}_{\beta,\sigma^{1-n'}\theta}(f_{\beta_c,\theta^*}(f^{n-n'}_{\beta_c,\theta^*}(\gamma^+)))\\ &\leq\ 
 f^{n'-2}_{\beta,\sigma^{2-n'}\theta}(f^2_{\beta_c,\theta^*}(\gamma^+))-
 f^{n'-2}_{\beta,\sigma^{2-n'}\theta}(f^2_{\beta_c,\theta^*}(f^{n-n'}_{\beta_c,\theta^*}(\gamma^+)))\\ &\leq\ldots\leq\ 
 f^{n'}_{\beta_c,\theta^*}(\gamma^+)-
 f^{n'}_{\beta_c,\theta^*}(f^{n-n'}_{\beta_c,\theta^*}(\gamma^+)) \\ & = \  |
 \gamma_{n',\beta_c}^+(\theta^*)-\gamma_{n,\beta_c}^+(\theta^*)|,
 \end{align*}
 where we used the monotonicity of the fibre maps in the first inequality and the 
 above mentioned concavity together with \eqref{eq:
   monotonicity} in the steps to the fourth, fifth and sixth line.  This proves the first
 part.
 
 Next, we show that $\partial_\beta\gamma^+_{n,\beta}(\theta)$ converges
 uniformly in $\theta$ and $\beta$ which immediately implies the second part.
 To that end, we first provide a uniform upper bound on $$\sup\limits_{\theta\in
   \Sigma,\beta\in[0,\beta']}\partial_x
 f_{\beta,\theta}(\gamma_{n,\beta}^+(\theta))$$ for all $n\in \N$.
 Similarly as
 in the proof of Theorem~\ref{t.random_nonsmoothness}, we see that
 $$x_{\min}(\beta)\ := \ \min_{\theta\in
   \Sigma}\varphi_{\beta}^+(\theta)\geq x_0(\beta')$$ for all $\beta\in[0,\beta']$,
 where $x_0(\beta')$ is the upper fixed point of the map $g-\kappa-\beta'$.
 Hence, we have
 \begin{align}\label{eq: defn alpha}
 \begin{split}
 0 & \leq\ \sup_{\theta\in
   \Sigma,\beta\in[0,\beta']}\partial_x f_{\beta,\theta}(\gamma_{n,\beta}^+(\theta))
 =\ \sup_{\theta\in
   \Sigma,\beta\in[0,\beta']}g'(\gamma_{n,\beta}^+(\theta))\ \leq
 \ g'(\varphi^+_\beta(\theta)) 
 \\ &  
\ \leq\ g'(x_{0}(\beta'))\ =: \  c \ < \ 1 \ ,
 \end{split}
 \end{align}
where we used the concavity of $g$ and the monotone dependence of
$\varphi_\beta^+(\theta)$ on $\beta$.

Now, observe that
\begin{align*}
 \partial_\beta\gamma^+_{n,\beta}(\theta) &=\ \partial_\beta
 f^n_{\beta,\sigma^{-n}\theta}(\gamma^+) \\ & = \ \partial_\beta
 f_{\beta,\sigma^{-1}\theta}( f^{n-1}_{\beta,\sigma^{-n}\theta}(\gamma^+))+ \partial_x
 f_{\beta,\sigma^{-1}\theta}( f^{n-1}_{\beta,\sigma^{-n}\theta}(\gamma^+)) \cdot
 \partial_\beta f^{n-1}_{\beta,\sigma^{-n}\theta}(\gamma^+)\\ &= \ \ldots\  =\ 
 \sum_{i=1}^n\partial_\beta f_{\beta,\sigma^{-i}\theta}(
 f^{n-i}_{\beta,\sigma^{-n}\theta}(\gamma^+)) \prod_{\ell=1}^{i-1}
 \partial_x f_{\beta,\sigma^{-\ell}\theta}(f_{\beta,\sigma^{-n}\theta}^{n-\ell}(\gamma^+))\\ &=-\sum_{i=1}^n\,
 \prod_{\ell=1}^{i-1}
 \partial_x f_{\beta,\sigma^{-\ell}\theta}(f_{\beta,\sigma^{-n}\theta}^{n-\ell}(\gamma^+)).
\end{align*}
Together with \eqref{eq: defn alpha}, we hence obtain for $n\geq n'$
\begin{align*}
 |\partial_\beta\gamma^+_{n',\beta}(\theta)-\partial_\beta\gamma^+_{n,\beta}(\theta)|=
 \sum_{i=n'+1}^n\, \prod_{\ell=1}^{i-1} \partial_x
 f_{\beta,\sigma^{-\ell}\theta}(f_{\beta,\sigma^{-n}\theta}^{n-\ell}(\gamma^+))\leq
 \sum_{i=n'+1}^n c^{i-1},
\end{align*}
which proves the statement.
\end{proof}

We can now turn to the 
\begin{proof}[Proof of Theorem~\ref{nsmthrand}] 
We keep the notation as in the previous proof.
Observe that $(g-\kappa-\beta)_{\beta\in[0,1]}$ is an autonomous reference family for 
$(f_\beta)_{\beta\in[0,1]}$.
Therefore, the fact that $(f_\beta)_{\beta\in[0,1]}$ 
undergoes a non-smooth saddle-node bifurcation with critical parameter $\beta_c$
(given by the bifurcation parameter of the family $(g-\kappa-\beta)_{\beta}$)
is a direct consequence of Theorem~\ref{t.random_nonsmoothness}. 
Hence, it remains to prove the existence
of a uniform bound on the slope of the Lyapunov exponent. As before, we only
consider $\varphi_\beta^+$. Further, we show the statement for $\beta\in
[0,\beta_c)$ which immediately yields the full statement by means of the mean
  value theorem. \smallskip

 Given $\beta \in [0,\beta_c)$, observe that
 \begin{align*}
  \partial_\beta \lambda_\mu(\varphi_\beta^+) & = \ \partial_\beta
  \int_{\Sigma}\!  \log \partial_x
  f_{\beta,\theta}(\varphi^+_\beta(\theta))\,d\theta \\ &  =\ \partial_\beta
  \int_{\Sigma}\! \log g'(\varphi^+_\beta(\theta))\,d\theta= \int_{\Sigma}\!
  \frac{g''(\varphi^+_\beta(\theta))}{g'(\varphi^+_\beta(\theta))} \cdot
  \partial_\beta \varphi^+_\beta(\theta) \,d\theta.
 \end{align*}
 With $c\geq \sup_{x\in[0,\pi/2]} |g''(x)/g'(x)|$, we hence obtain
 \begin{align*}
  |\partial_\beta \lambda_\mu(\varphi_\beta^+)|\leq 
   c\int_{\Sigma}\! |\partial_\beta \varphi^+_\beta(\theta)| \,d\theta.
 \end{align*}
 Now, let
 \[
 \alpha\ = \ \sup_{\theta\in \Sigma,\, \theta_0=0}f_{\beta_c,\sigma \theta}'(f_{\beta_c,\theta}(x_c))
 \]
 and note that
 \[
\alpha \ = \ g'(g(x_{c})-\beta_c) \ = \ g'(g(x_{c})-\kappa-\beta_c+\kappa) \ = \ g'(x_c+\kappa) \ <
\ g'(x_{c}) \ = \ 1 \ ,
\]
where $x_c$ is the neutral fixed point of the map $g - \kappa- \beta_c$. Then
 \begin{align*}
\int_{\Sigma}\! |\partial_\beta \varphi^+_\beta(\theta)|
\,d\theta&=\int_{\Sigma}\! |\partial_\beta \lim_{n\to\infty}
\gamma^+_{n,\beta}(\theta)| \,d\theta= \lim_{n\to\infty} \int_{\Sigma}\!
|\partial_\beta \gamma^+_{n,\beta}(\theta)| \,d\theta\\ &\stackrel{\eqref{eq:
    uniform convergence of derivatives}}{=} \lim_{n\to\infty} \int_{\Sigma}\,
\sum_{i=1}^n\, \prod_{\ell=1}^{i-1} \partial_x
f_{\beta,\sigma^{-\ell}\theta}(f_{\beta,\sigma^{-n}\theta}^{n-\ell}(\gamma^+))
\,d\theta\\ &= \ \lim_{n\to\infty}\sum_{i=1}^n\, \int_{\Sigma}\,
\prod_{\ell=1}^{i-1}
\partial_x f_{\beta,\sigma^{-\ell}\theta}(f_{\beta,\sigma^{-n}\theta}^{n-\ell}(\gamma^+))
\,d\theta \\ & \leq \ \lim_{n\to\infty}\sum_{i=1}^n\, \int_{\Sigma}\,
\prod_{\ell=1}^{i-1}
\partial_x f_{\beta,\sigma^{-\ell}\theta}(f_{\beta,\sigma^{-\ell-1}\theta}(x_c))
\,d\theta\\ &\leq\ \lim_{n\to\infty}\sum_{i=1}^n\,\sum_{k=0}^{i-1} \alpha^k \cdot
\mu\left( \left\{\theta\in \Sigma\colon k =  \#\{1<\ell\leq
i\colon\theta_{-\ell}=0\}\right\}\right)\\
&=\ \lim_{n\to\infty}\sum_{i=1}^n\,\sum_{k=0}^{i-1}\alpha^k\cdot \binom{i-1}{k}
(1/2)^{i-1}\\ &  =\ \lim_{n\to\infty}\sum_{i=1}^n\,\sum_{k=0}^{i-1}\binom{i-1}{k}
(\alpha/2)^k (1/2)^{i-1-k}\\ &=\ \lim_{n\to\infty}\sum_{i=1}^n\,
(\alpha/2+1/2)^{i-1}<\infty \ . 
 \end{align*}
Since $\alpha$ is independent of $\beta$, the statement follows.
\end{proof}

\section{Range of finite-time Lyapunov exponents} \label{Range}

In this section, we provide a proof of
Theorem~\ref{t.finite_time_exponent_range}, which we restate below in a more
general form.  To that end, let us introduce the maximal finite-time Lyapunov
exponents on the attractor. As invariant graphs only need to be defined almost
surely, we just take into account exponents that can be `seen' on a set of
positive measure by setting
\begin{equation*}\label{lambdamax}
 \lambda_{k}^{\max}(\varphi^{+}_{\beta})=\sup\left\{\lambda\in\mathbb{R}\left|\mu_{\varphi^{+}_{\beta}}
 (\{(\theta,x)|\lambda_{k}(f_{\beta},\theta,x)\geq\lambda\})>0\right.\right\}.
\end{equation*}
Here, the graph measure $\mu_{\varphi^+_\beta}$ is as discussed in  Section~\ref{1}.
Note that if the forcing is quasiperiodic, then the attractors prior to
the bifurcation are all continuous so that we actually have
$\lambda_{k}^{\max}(\varphi^{+}_{\beta}) =
\max\left\{\lambda_{k}(\theta,\varphi^{+}_{\beta}(\theta))\left|\theta\in\T^{1}\right.\right\}$
whenever $\beta<\beta_c$.

We first consider the case of quasiperiodic forcing,
where the general statement we aim at reads as follows. 
\begin{theorem}\label{qpffinite_time}
Suppose $(\Xi_{\beta})_{\beta\in[0,1]}$ is a parameter family of qpf monotone
flows that satisfies the hypothesis of Theorem~\ref{qpf}. Then for all $k\in\N$
we have
\begin{equation} \label{e.finite_time_lyap_qpf}
\varliminf_{\beta\nearrow\beta_{c}}\lambda_{k}^{\max}(\varphi^{+}_{\beta})\ \geq \ \lambda(\varphi^{-}_{\beta_{c}}) \ .
\end{equation}
\end{theorem}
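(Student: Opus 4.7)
The plan is to transfer the lower bound $\lambda(\varphi^-_{\beta_c})$ from the repeller at the bifurcation point to the attractor at $\beta$ slightly less than $\beta_c$, by combining density of the pinching set with continuity in $\beta$ along pinched fibres and an averaging argument on the repeller. Because the bifurcation at $\beta_c$ is non-smooth, Theorem~\ref{qpf} produces a weakly pinched pair of semi-continuous graphs $\varphi^-_{\beta_c}\leq \varphi^+_{\beta_c}$. For strictly ergodic quasiperiodic driving, weak pinching is equivalent to pinching, so the set $P = \{\theta\in\Theta : \varphi^+_{\beta_c}(\theta) = \varphi^-_{\beta_c}(\theta)\}$ is non-empty. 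As both graphs are $\Xi_{\beta_c}$-invariant, $P$ is $\omega$-invariant, and minimality of the irrational Kronecker flow forces $P$ to be dense in $\Theta$. At every $\theta_p\in P$, the squeezed semi-continuities force both $\varphi^\pm_{\beta_c}$ to be continuous at $\theta_p$, with common value $x_p$.

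Fix a pinching point $\theta_p\in P$. By the monotone dependence of the invariant graphs on $\beta$, $\varphi^+_\beta(\theta_p)\searrow x_p$ and $\varphi^-_\beta(\theta_p)\nearrow x_p$ as $\beta\nearrow\beta_c$. Since $(\beta,x)\mapsto\partial_x\xi^k_\beta(\theta_p,x)$ is jointly continuous, both $\partial_x\xi^k_\beta(\theta_p,\varphi^\pm_\beta(\theta_p))$ converge to the common limit $\partial_x\xi^k_{\beta_c}(\theta_p,x_p)$. Setting $h(\theta):=\log\partial_x\xi^1_{\beta_c}(\theta,\varphi^-_{\beta_c}(\theta))$ and $G(\theta) := \tfrac{1}{k}\sum_{i=0}^{k-1} h(\omega^i\theta)$, the $\omega$-invariance of $P$ gives $G(\theta_p)=\tfrac{1}{k}\log\partial_x\xi^k_{\beta_c}(\theta_p,x_p)$, so that $\lim_{\beta\nearrow\beta_c}\lambda_k(\theta_p,\varphi^+_\beta(\theta_p))=G(\theta_p)$. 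Since $\lambda_k^{\max}(\varphi^+_\beta)\geq\lambda_k(\theta_p,\varphi^+_\beta(\theta_p))$ for every $\beta<\beta_c$, I obtain
\begin{equation*}
 \varliminf_{\beta\nearrow\beta_c}\lambda_k^{\max}(\varphi^+_\beta) \;\geq\; G(\theta_p)\qquad\text{for every } \theta_p\in P.
\end{equation*}

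It remains to show $\sup_{\theta_p\in P} G(\theta_p)\geq\lambda(\varphi^-_{\beta_c})$. By $\omega$-invariance of Lebesgue measure, $\int G\,d\mu = \int h\,d\mu = \lambda(\varphi^-_{\beta_c})$, so the essential supremum of $G$ over $\Theta$ is already at least $\lambda(\varphi^-_{\beta_c})$; the task is to realise this lower bound on the Lebesgue-null but dense set $P$. The key observation is that $G$ is continuous at each point of $P$ (because $\omega^i\theta_p\in P$ and $\varphi^-_{\beta_c}$ is continuous at each $\omega^i\theta_p$), while $\omega$ is uniquely ergodic. Sandwiching $G$ between continuous functions whose spatial integrals differ by $\eta>0$, unique ergodicity forces their Birkhoff averages to converge uniformly on $\Theta$; combined with the local continuity of $G$ on $P$, this yields a pinching point $\theta_p\in P$ with $G(\theta_p)\geq\lambda(\varphi^-_{\beta_c})-\eta$, and sending $\eta\to 0$ finishes the proof. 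The main obstacle is precisely this last step: $G$ is only globally upper semi-continuous, the pinching set has Lebesgue measure zero, and one has to control finite-time Birkhoff sums rather than asymptotic ones — so the approximation argument reconciling all three facts, via unique ergodicity together with continuity of $G$ on $P$, is the real technical content of the proof.
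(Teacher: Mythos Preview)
Your approach via pinching points is natural, but the final step—showing $\sup_{\theta_p\in P} G(\theta_p)\geq\lambda(\varphi^-_{\beta_c})$—does not close as written, and I do not see how to close it with the ingredients you invoke. The function $G$ is upper semi-continuous (since $\varphi^-_{\beta_c}$ is lower semi-continuous and the fibre derivative is decreasing in $x$ by concavity), so the set $\{G\geq\lambda(\varphi^-_{\beta_c})\}$ is closed and has positive measure, but it may well be nowhere dense and hence miss the measure-zero set $P$ entirely. Your ``sandwiching'' sketch would need a continuous minorant $G_2\leq G$ with integral close to $\int G\, d\mu$, but upper semi-continuous functions are approximable from \emph{above} by continuous ones; via unique ergodicity this yields uniform \emph{upper} bounds on Birkhoff averages, not the lower bound at a point of $P$ that you need. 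A toy obstruction: take a fat Cantor set $C\subseteq\T^1$ disjoint from a countable dense invariant set $P$ and put $G=1$ on $C$, $G=-1$ off $C$. Then $G$ is upper semi-continuous, continuous at every point of $P$, satisfies $\int G\, d\mu=0$, yet $\sup_P G=-1$. So the abstract properties you isolate (usc, integral $\lambda(\varphi^-_{\beta_c})$, continuity on a dense invariant null set) are genuinely insufficient.

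The paper sidesteps this measure-zero trap by using a stronger structural fact than density of $P$: the topological supports of the graph measures $\mu_{\varphi^+_{\beta_c}}$ and $\mu_{\varphi^-_{\beta_c}}$ coincide. This implies that for $\mu$-almost every point $(\theta_0,\varphi^-_{\beta_c}(\theta_0))$ on the repeller and every $\delta>0$ there is a \emph{positive-measure} set $B\subseteq\Theta$ with $(\theta,\varphi^+_{\beta_c}(\theta))\in B_\delta(\theta_0,\varphi^-_{\beta_c}(\theta_0))$ for all $\theta\in B$. One first shows, by an easy Birkhoff argument, that $\lambda_k(\theta_0,\varphi^-_{\beta_c}(\theta_0))\geq\lambda(\varphi^-_{\beta_c})$ on a positive-measure set of $\theta_0$; the support coincidence then transfers this lower bound to $\lambda_k(\theta,\varphi^+_{\beta_c}(\theta))$ on a set of positive measure, which is exactly what the definition of $\lambda_k^{\max}$ demands. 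The passage to $\beta<\beta_c$ is then the same as yours, via $\varphi^+_\beta\searrow\varphi^+_{\beta_c}$ almost surely and continuity of $(\beta,\theta,x)\mapsto\partial_x\xi^k_\beta(\theta,x)$.
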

Before we turn to the proof, however, we have to address some subtleties
concerning the topology of pinched invariant graphs in this setting.  Suppose we
observe a non-smooth bifurcation as characterised in Theorem~\ref{qpf}, so that
there exist exactly two graphs $\varphi^-_{\beta_c}<\varphi^+_{\beta_c}$ (up to
modifications on sets of measure zero), where $\varphi^-_{\beta_c}$ is lower and
$\varphi^+_{\beta_c}$ is upper semicontinuous. Let
$A^+=\supp(\mu_{\varphi^+_{\beta_c}})$ and
$A^-=\supp(\mu_{\varphi^-_{\beta^+}})$, where $\supp(\nu)$ denotes the
topological support of a measure $\nu$.\foot{Given a Borel measure $\nu$ on some
  second countable metric space $X$, the support of $\nu$ is defined as
  $\supp(\nu)=\{x\in X\mid \nu(B_\delta(x))>0 \ \forall \delta>0\} = X\smin
  \bigcup_{U \textrm{ open, } \nu(U)=0} U$. It is easy to see that $\supp(\nu)$
  is always closed and can be characterised as the smallest closed set $A\ssq X$
  with $\nu(X\smin A)=0$. Moreover, if $\nu$ is invariant under some continuous
  transformation $f$, then so is $\supp(\nu)$.} Then $A^+$ is
$\Xi_{\beta_c}$-invariant, and consequently the upper and lower bounding graphs
$\varphi^u_{A^+}$ and $\varphi^l_{A^+}$ given by
\[
\varphi^u_{A^+}(\theta) \ = \ \sup A^+_\theta \eqand
\varphi^l_{A^+}(\theta)=\inf A^+_\theta
\]
are $\Xi_\beta$-invariant graphs, with $\varphi^u_{A^+}$ upper and
$\varphi^l_{A^+}(\theta)$ lower semicontinuous (see \cite{stark:2003}).  As
$\varphi^\pm_{\beta_c}$ are the only $\Xi_\beta$-invariant graphs in the
considered region $\Gamma$, we must have
$\varphi^l_{A^+}(\theta)=\varphi^-_{\beta_c}$ and
$\varphi^u_{A^+}=\varphi^+_{\beta_c}$ almost surely (see \cite{stark:2003} for
more details), but the graphs may differ on a set of measure zero (which is the
subtle complication that requires particular care in the proof of
Theorem~\ref{qpffinite_time} below). Neverthless, this implies in particular
that $(\theta,\varphi^-_{\beta_c}(\theta))\in A^+$ almost surely, so that
$\mu_{\varphi^-_{\beta_c}}(A^+)=1$ and hence $A^-\ssq A^+$. As the converse
inclusion follows in the same way, we have $A^-=A^+$. One particular consequence
of this discussion is the following
\begin{lem}
  \label{l.graphclosures}
  For $\mu$-almost every $\theta_0\in\Theta$ and every $\delta>0$ there exists a
  set $B\ssq\Theta$ of positive measure such that
  \[\{(\theta,\varphi^+_{\beta_c}(\theta))\mid \theta\in B\} \ \ssq \
  B_\delta((\theta_0,\varphi^-_{\beta_c}(\theta_0))) \ . 
  \]
\end{lem}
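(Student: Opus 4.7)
The plan is to unpack the definitions of graph measure and topological support, together with the identity $A^-=A^+$ established immediately before the lemma, to reduce the statement to an elementary observation.

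First I would fix a generic $\theta_0\in\Theta$, namely one of those for which $(\theta_0,\varphi^-_{\beta_c}(\theta_0))\in A^+$. By the discussion preceding the lemma, $\mu_{\varphi^-_{\beta_c}}(A^+)=1$, so the set of such $\theta_0$ has full $\mu$-measure, which is what ``$\mu$-almost every'' requires. Now let $\delta>0$ be given and set $p=(\theta_0,\varphi^-_{\beta_c}(\theta_0))$. Since $p\in A^+=\supp(\mu_{\varphi^+_{\beta_c}})$, the very definition of topological support gives
\begin{equation*}
\mu_{\varphi^+_{\beta_c}}\!\bigl(B_\delta(p)\bigr)\ >\ 0.
\end{equation*}

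Next I would translate this into a statement in the base via the graph measure formula \eqref{e.graph_measure}. Setting
\begin{equation*}
B\ =\ \bigl\{\theta\in\Theta\,\big|\,(\theta,\varphi^+_{\beta_c}(\theta))\in B_\delta(p)\bigr\},
\end{equation*}
the definition of $\mu_{\varphi^+_{\beta_c}}$ gives $\mu(B)=\mu_{\varphi^+_{\beta_c}}(B_\delta(p))>0$. By the very definition of $B$, every $\theta\in B$ satisfies $(\theta,\varphi^+_{\beta_c}(\theta))\in B_\delta(p)$, which is exactly the required inclusion.

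I do not anticipate a genuine obstacle: the entire content of the lemma is packed into the identity $A^-=A^+$ (already derived in the paragraph above the statement) together with the fact that $\supp(\mu_{\varphi^+_{\beta_c}})$ charges every ball around its points with positive mass. The only thing to be careful about is that, because invariant graphs are only defined up to $\mu$-null sets, one must choose the full-measure set of ``good'' $\theta_0$ as those for which the pointwise inclusion $(\theta_0,\varphi^-_{\beta_c}(\theta_0))\in A^+$ actually holds; this is guaranteed since $A^-\subseteq A^+$ and $\mu_{\varphi^-_{\beta_c}}$ is carried by the graph of $\varphi^-_{\beta_c}$.
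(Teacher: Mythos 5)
Your argument is correct and is precisely the one the paper intends: the lemma is stated as ``a consequence of this discussion'', and that discussion consists exactly of the facts you use, namely $\mu_{\varphi^-_{\beta_c}}(A^+)=1$ (so $(\theta_0,\varphi^-_{\beta_c}(\theta_0))\in A^+=\supp(\mu_{\varphi^+_{\beta_c}})$ for a.e.\ $\theta_0$) combined with the definition of topological support and the graph measure formula \eqref{e.graph_measure}. Nothing further is needed.
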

We can now turn to the
\begin{proof}[Proof of Theorem \ref{qpffinite_time}]
Fix $k\in\N$.
First, we claim that there exists a set of positive
measure of $\theta\in\Theta$ such that
\begin{equation}\label{qpffinite_1}
  \lambda_{k}(\theta,\varphi^{-}_{\beta_{c}}(\theta))\ \geq \ \lambda(\varphi^{-}_{\beta_{c}}).
\end{equation}
In order to see this, assume for a contradiction that
$\lambda_{k}(\theta,\varphi^{-}_{\beta_{c}}(\theta))<\lambda(\varphi^{-}_{\beta_{c}})$
for almost all $\theta$.  Clearly, this implies the existence of $\delta>0$ and
of a set $\Theta_\delta\ssq\Theta$ of positive measure such that
\begin{equation*}\label{}
  \lambda_{k}(\theta,\varphi^{-}_{\beta_{c}}(\theta))\ \leq
  \ \lambda(\varphi^{-}_{\beta_{c}}) - \delta
\end{equation*}
for all $\theta \in \Theta_\delta$. 
Due to Birkhoff's Ergodic Theorem (and the ergodicity of the flow $\omega$ on $\Theta$),
the orbit of almost every $\theta\in \Theta$ visits the set $\Theta_\delta$
with positive frequency.
This implies that the pointwise Lyapunov
exponent of almost every $\theta$ satisfies
\begin{equation*}
  \lambda(\theta,\varphi^{-}_{\beta_{c}}(\theta))\ <
  \ \lambda(\varphi^{-}_{\beta_{c}}).
\end{equation*}
This, however, contradicts Birkhoff's Ergodic Theorem according to which we have
\begin{equation*}
  \lambda(\theta,\varphi^{-}_{\beta_{c}}(\theta))\ =
  \ \lambda(\varphi^{-}_{\beta_{c}})
\end{equation*}
almost surely.  Hence, there exists a positive measure set of $\theta$ which
satisfies \eqref{qpffinite_1}.

Now, let $\varepsilon>0$ be given.  Due to Lemma~\ref{l.graphclosures}, there is
$\delta>0$ and $\theta_0\in\Theta$ which satisfies (\ref{qpffinite_1}) and a set
$B\ssq\Theta$ of positive measure such that
$(\theta,\varphi^+_{\beta_c}(\theta))\in
B_\delta(\theta_0,\varphi^-_{\beta_c}(\theta_0))$ for all $\theta\in B$. If
$\delta$ is chosen small enough, then it follows by continuity that
\[
\lambda_k(\theta,\varphi^+_{\beta_c}(\theta)) \ \geq
\ \lambda(\varphi^-_{\beta_c}) -\eps
\]
for all $\theta\in B$. As $\eps>0$ was arbitrary, we obtain that
$\lambda^{\max}_k(\varphi^+_{\beta_c})\geq
\lambda(\varphi^-_{\beta_c})$. Finally, as $\lim_{\beta\nearrow\beta_c}
\varphi^+_\beta(\theta)=\varphi^+_{\beta_c}(\theta)$ almost surely (see the proof of Theorem~\ref{t.lyapunov_gap2}), we obtain
(\ref{e.finite_time_lyap_qpf}) again by continuity. 
\end{proof}
\noindent We now turn to the random case. In this case, we have to restrict to
the setting of Theorem~\ref{t.random_nonsmoothness} (instead of the more general
situation of Theorem~\ref{random}).

\begin{theorem}\label{recovery_rates}
Suppose that $(\Xi_\beta)_{\beta\in[0,1]}$ is a parameter family of randomly
forced monotone flows that satisfies the assumptions  of
Theorem~\ref{t.random_nonsmoothness}. Then for all $k\in\R$
\begin{equation*}
  \lim_{\beta\nearrow\beta_{c}}\lambda_{k}^{\max}(\varphi^{+}_{\beta}(\theta))=0.
\end{equation*}
\end{theorem}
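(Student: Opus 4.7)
The strategy rests on the observation that, under hypotheses (i)--(iii) of Theorem~\ref{t.random_nonsmoothness}, the attracting graph $\varphi^+_\beta$ is squeezed close to the stable autonomous fixed point $x^+_\beta$ of the reference family $g_\beta$ as $\beta \nearrow \beta_c$. Indeed, hypothesis (i) combined with the pullback construction already yields the $\mu$-almost sure bound $\varphi^+_\beta(\theta) \geq x^+_\beta$, while shadowing in (ii) produces a matching upper approximation on sets of positive measure. Since $x^+_\beta \to x_0$ and the autonomous finite-time exponent $\Lambda_k(\beta) := \tfrac{1}{k} \log \partial_x g^k_\beta(x^+_\beta)$ vanishes at the neutral fixed point $x_0$ (where $\partial_x g^k_{\beta_c}(x_0) = 1$), the goal is to transfer this asymptotic to the forced finite-time exponents.

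For the lower bound $\liminf_{\beta \nearrow \beta_c} \lambda_k^{\max}(\varphi^+_\beta) \geq 0$, I would fix $\eps > 0$ and choose $T \gg k$ so that $g^T_\beta(\gamma^+)$ lies within $\eps$ of $x^+_\beta$ uniformly for $\beta$ near $\beta_c$. Applying hypothesis (ii) at time scale $T + k$ yields a positive-measure set $A = A_{\eps, T+k}$. For $\theta \in A$ and $0 \leq i \leq k$, monotonicity of the fibre maps combined with shadowing pins $\varphi^+_\beta(\sigma^{T+i}\theta)$ in a $2\eps$-neighbourhood of $x^+_\beta$. Setting $\tilde\theta = \sigma^T\theta$, the skew-product flow property and the invariance of $\varphi^+_\beta$ yield the chain-rule identity
\[
\partial_x \xi^k_\beta(\tilde\theta, \varphi^+_\beta(\tilde\theta)) \ = \ \frac{\partial_x \xi^{T+k}_\beta(\theta, \varphi^+_\beta(\theta))}{\partial_x \xi^T_\beta(\theta, \varphi^+_\beta(\theta))}.
\]
The derivative-shadowing in (ii) applied to numerator and denominator, combined with the autonomous chain rule and $g^T_\beta(\varphi^+_\beta(\theta)) \approx x^+_\beta$, forces the ratio to be close to $\partial_x g^k_\beta(x^+_\beta)$. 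Since $\sigma^T A$ has the same positive measure as $A$ by $\mu$-invariance, this produces $\lambda_k^{\max}(\varphi^+_\beta) \geq \Lambda_k(\beta) - o(1) \to 0$.

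For the matching upper bound $\limsup_{\beta \nearrow \beta_c} \lambda_k^{\max}(\varphi^+_\beta) \leq 0$, the plan is to combine $\varphi^+_\beta \geq x^+_\beta$ with concavity (vi): since $\partial_x \xi^k_\beta(\theta, \cdot)$ is decreasing, $\partial_x \xi^k_\beta(\theta, \varphi^+_\beta(\theta)) \leq \partial_x \xi^k_\beta(\theta, x^+_\beta)$ almost surely. A further concavity estimate along the $\xi$-orbit of $x^+_\beta$ (which, by (i), stays in $[x^+_\beta, \gamma^+]$) combined with derivative-shadowing then controls $\partial_x \xi^k_\beta(\theta, x^+_\beta) \leq \partial_x g^k_\beta(x^+_\beta) + o(1)$, so that $\lambda_k^{\max}(\varphi^+_\beta) \leq \Lambda_k(\beta) + o(1) \to 0$.

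The main obstacle is to align the pullback-type shadowing (needed to pin $\varphi^+_\beta$ near $x^+_\beta$) with the forward shadowing of derivatives (needed to evaluate finite-time exponents). The chain-rule identity above is the key device: it converts the derivative at the shifted base point $\tilde\theta$ into a ratio of derivatives at the original base point $\theta$, so that a single application of hypothesis (ii) at time scale $T + k$ suffices, avoiding the need to intersect two shadowing events at different base points. A secondary technicality lies in the upper bound, which requires an almost-sure control on $\partial_x \xi^k_\beta(\theta, x^+_\beta)$; this is immediate for the discrete-time toy model~(\ref{e.discrete_time_examples}), where $\partial_x f_\beta(\theta, x) = g'(x)$ is $\theta$-independent, while in the general setting it follows from concavity together with derivative-shadowing on a shift-invariant family of positive-measure events.
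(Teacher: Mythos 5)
Your lower-bound argument --- that $\varliminf_{\beta\nearrow\beta_c}\lambda_k^{\max}(\varphi^+_\beta)\geq 0$ --- is correct and follows essentially the same route as the paper: pin $\varphi^+_\beta$ into a small one-sided neighbourhood of the reference fixed point on a positive-measure set by combining the lower bound (i) with the shadowing (ii) applied to the pullback $\xi^T_\beta(\omega^{-T}(\theta),\gamma^+)$, then use derivative-shadowing once more to compare the finite-time exponent with the (vanishing) autonomous one. Your chain-rule identity is in fact a welcome repair of a point the paper glosses over: the paper pins $\varphi^+_\beta(\theta)$ for $\theta\in\omega^t(A_{\delta/2,t})$ but then invokes the derivative estimate ``for all $\theta\in A_{\delta/2,k}$'', silently intersecting shadowing events attached to different base points. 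Applying (ii) once at scale $T+k$ at the base point $\omega^{-T}(\theta)$ and writing $\partial_x\xi^k_\beta(\omega^T(\theta),\cdot)$ as a quotient of derivatives at $\theta$ resolves this cleanly (modulo the routine point that the shadowing error must be small compared with $\inf\partial_x\xi^T_\beta$, which is harmless since $T$ is fixed before the error is sent to $0$).

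The upper bound is where the gap lies. Since $\lambda_k^{\max}$ is a supremum over all positive-measure events, $\varlimsup_{\beta\nearrow\beta_c}\lambda_k^{\max}\leq 0$ requires an estimate valid for $\nu$-almost every $\theta$, and the hypotheses of Theorem~\ref{t.random_nonsmoothness} do not supply one: the shadowing (ii) holds only on the sets $A_{\eps,T}$, which have positive but in general not full measure, while hypothesis (i) compares values, not derivatives, and only from below. Your step $\partial_x\xi^k_\beta(\theta,x^+_\beta)\leq\partial_x g^k_\beta(x^+_\beta)+o(1)$ therefore has no source in the assumptions, and ``derivative-shadowing on a shift-invariant family of positive-measure events'' cannot provide one, because the $A_{\eps,T}$ are not shift-invariant (a shift-invariant set of positive measure would have full measure by ergodicity, which is precisely what is missing). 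Nor is the bound automatic from concavity: in the Allee model the fibre field's $x$-derivative at a fixed point is \emph{larger} when $F(\theta)$ is smaller, whereas the lower-bounding reference family corresponds to maximal forcing, so two effects compete and the comparison is model-dependent (it is trivial for the toy model (\ref{e.discrete_time_examples}), as you note, but not in the stated generality). You should be aware, however, that the paper's own proof does not establish the upper bound either: it only shows $|\lambda_t(\theta,\varphi^+_\beta(\theta))|<\eps$ on a set of positive measure, which yields $\lambda_k^{\max}\geq-\eps$ but says nothing about the measure of the set where $\lambda_k$ is large; this matches the weaker inequality $\lim_{\beta\nearrow\beta_c}\lambda_T^{\max}(x^s_\beta)\geq 0$ claimed for the random case in Theorem~\ref{t.finite_time_exponent_range}. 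So your proposal reproduces (more carefully) everything the paper actually proves, and the part of your sketch that fails is a part the paper does not prove.
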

\begin{proof} Let $(g_\beta)_{\beta\in[0,1]}$ be the autonomous reference family
  from Theorem~\ref{t.random_nonsmoothness}. Then we have that $g_{\beta_c}$ has
  a unique fixed point $x_0\in[\gamma^-,\gamma^+]$ whose Lyapunov exponent
  vanishes, that is, $\log \partial_t g^t_{\beta_c}(x_0)=0$ for all $t>0$. By
  continuity, this means that given $t\in\R$ and $\eps>0$ there exists
  $\delta>0$ such that $|x-x_0|<\delta$ and $|\beta-\beta_c|<\delta$ implies
  $|\log(g_\beta^t(x))|/t<\eps$. Moreover, as $\lim_{t\to\infty}
  g^t_{\beta_c}(\gamma^+)=x_0$, we can further require that
  $g^t_\beta(\gamma^+)< x_0+\delta/2$ for all $\beta\in
  [\beta_c-\delta',\beta_c]$ and some $t,\delta'>0$.

  Now, by assumption (ii) of Theorem~\ref{t.random_nonsmoothness} there exists a
  set $A_{\delta/2,t}\ssq\Theta$ of positive measure such that for all
  $\theta\in\omega^t(A_{\delta/2,t})$ we have
  \[
  \xi^t_\beta(\omega^{-t}(\theta),\gamma^+) \ \leq \ x_0+\delta \ .
  \]
  As $\varphi^+_\beta(\theta)$ is the monotone limit of the sequence
  $\xi_\beta^t(\omega^{-t}(\theta),\gamma^+)$ (see the proof of
  Theorem~\ref{t.random_nonsmoothness}) and is bounded below by $x_0$, this
  implies that $x_0\leq \varphi^+_\beta(\theta)\leq x_0+\delta$ and therefore,
  using condition (ii) again, $|\lambda_t(\theta,\varphi^+_\beta(\theta))|<\eps$
  for all $\theta\in A_{\delta/2,k}$ and $\beta\in[\beta_c-\delta',\beta_c]$. As
  $\eps>0$ was arbitrary, this completes the proof.
\end{proof}

\bibliographystyle{alpha}


\end{document}